\documentclass[pdflatex,sn-mathphys-num]{sn-jnl}
\usepackage{graphicx}%
\usepackage{multirow}%
\usepackage{amsmath,amssymb,amsfonts}%
\usepackage{amsthm}%
\usepackage{mathrsfs}%
\usepackage[title]{appendix}%
\usepackage{xcolor}%
\usepackage{textcomp}%
\usepackage{manyfoot}%
\usepackage{booktabs}%
\usepackage{algorithm}%
\usepackage{algorithmicx}%
\usepackage{algpseudocode}%
\usepackage{listings}%
\usepackage{float}%

\usepackage[utf8]{inputenc}
\usepackage{url}
\usepackage{hyperref}
\usepackage{color}
\usepackage[T1,T2A]{fontenc}

\date{June 2026}

\usepackage{mathrsfs}
\usepackage{csquotes}
\usepackage{graphicx}
\usepackage{dcolumn}	
\usepackage{bm}			
\usepackage{amsfonts}
\usepackage{xspace}
\usepackage{epstopdf}
\usepackage{multirow}
\usepackage{braket}
\usepackage{amssymb}
\usepackage{amsthm}
\usepackage{textcomp}
\usepackage{mathtools}
\usepackage{latexsym}
\usepackage{amsmath}

\newtheorem{theorem}{Theorem}[section]

\newtheorem{prop}[theorem]{Proposition}
\newtheorem{cor}[theorem]{Corollary}

\makeatletter
\newcommand{\vast}{\bBigg@{2}}
\newcommand{\Vast}{\bBigg@{3}}

\makeatother

\newtheorem{conj}{Conjecture}

\theoremstyle{definition}
\newtheorem{defn}[theorem]{Definition}
\newtheorem{ex}[theorem]{Example}

\begin{document}

\title[Article Title]{The relationship between spacetime singularities and regions at infinity}


\author{Junbang Liu}\email{Junbang.Liu@anu.edu.au}

\author{Ben Andrews}\email{Ben.Andrews@anu.edu.au}

\author{Susan M.\ Scott}\email{Susan.Scott@anu.edu.au}

\affil{The Australian National University, \orgaddress{ Canberra ACT 2601, Australia}}


\abstract{Ideal attached points are a core concept in general relativity for pseudo-Riemannian manifolds, and whether the spacetime can be extended with certain properties is a central consideration in their choice. This paper establishes a sufficient condition for the separability between singularities and points at infinity for any maximally extended pseudo-Riemannian manifold. We focus on the incomplete geodesics of $(\mathcal{M},g)$, and produce an envelopment $(\mathcal{M},g,\hat{\mathcal{M}})$ of the spacetime such that an incomplete geodesic $\gamma:[0,1) \rightarrow \mathcal{M}$ has an endpoint $q$ in $\hat{\mathcal{M}}$. If there is no pair of geodesics approaching $q$ which is intertwined, then $q$ is a singularity. Additionally, $q$ will not be approached by any geodesic with infinite affine parameter, and therefore cannot cover a point at infinity, thereby rendering it a {\it pure singularity} in the abstract boundary framework. We apply the Endpoint Theorem to the maximal g-boundary introduced by Graf and Beld-Serrano in \cite{Graf_2024}, and also provide a result on the separability between directional singularities and pure singularities. This analysis is then applied to the Schwarzschild spacetime.}

\keywords{abstract boundary construction, pseudo-Riemannian geometry, envelopment, maximal extension, singularity, region at infinity, separability}



\maketitle

\section{Introduction}
In general relativity one often seeks to extend spacetimes under consideration. The extension of a spacetime is intimately connected to the question of how to attach a boundary to a spacetime. The topological structure of singularities related to geodesic incompleteness has been widely explored via Geroch's g-boundary \cite{geroch1968local}, Schmidt's b-boundary \cite{schmidt1971new}, Geroch, Kronheimer and Penrose's c-boundary \cite{geroch1972ideal,garcia2005causal}, and Scott and Szekeres's abstract boundary (a-boundary) \cite{scott1994abstract}. The a-boundary is a flexible scheme to classify boundary points under different embeddings of spacetime.\\

For many interesting results related to globally hyperbolic spacetimes and the vacuum Einstein field equation, asymptotically flat initial data is assumed as a physical condition. It is an interesting question as to whether the regions at infinity and the singularities should be disjoint in this framework. It remains an important open question, in general, as to \textit{how to separate singularities and points at infinity in embeddings?} This question has a well-framed description in the abstract boundary context, namely the issue of the separability between pure singularities and pure points at infinity (i.e.\ whether there exist open neighbourhoods of the boundary points of $\mathcal{M}$ in embeddings $\psi_1: \mathcal{M} \rightarrow \hat{\mathcal{M}}$ and $\psi_2: \mathcal{M} \rightarrow \mathcal{M}'$ such that these open neighbourhoods of the pure singularity and the pure point at infinity respectively, restricted to $\mathcal{M}$, are disjoint).\\

Using the abstract boundary construction, in this paper we will only consider embeddings $\psi:\mathcal{M} \rightarrow \hat{\mathcal{M}}$ of $n$-dimensional, pseudo-Riemannian manifolds $(\mathcal{M},g)$ which are maximally extended. In relation to the singularity theorems, this involves no loss of generality, as they also assume that the spacetime is maximally extended \cite{whale2015generalizations}.\\


For the abstract boundary construction, if there exists an embedding $\psi: \mathcal{M} \rightarrow \hat{\mathcal{M}}$ and $p \in \partial\psi(\mathcal{M})$ such that $p$ is only approached by curves with an unbounded affine parameter, we say that $p$ is a \textit{point at infinity}. For a singularity, if there exists an embedding $\psi: \mathcal{M} \rightarrow \hat{\mathcal{M}}$ and $p \in \partial\psi(\mathcal{M})$ such that $p$ is approached by some curves with bounded affine parameter, then $p$ is an \textit{essential singularity} \cite{scott1994abstract}. A \textit{pure singularity} is an essential singularity that cannot be re-embedded into a non-singular boundary point in any other embedding. The relevant details of the abstract boundary will be introduced formally in Section \ref{section 2}. The following theorem provides the motivation for us to directly investigate maximally extended pseudo-Riemannian manifolds.
\begin{theorem}\label{theorem:1}
For a maximally extended pseudo-Riemannian manifold $(\mathcal{M},g)$, boundary points of envelopments of $(\mathcal{M},g)$  have the following properties: \begin{enumerate}
        \item A point at infinity must be a pure point at infinity
        \item A singularity can only be either a directional singularity or a pure singularity
        \item A directional singularity must cover a pure point at infinity
        \item A pure point at infinity/pure singularity can only cover pure points at infinity/pure singularities, respectively, and non-regular unapproachable boundary points.
    \end{enumerate}
\end{theorem}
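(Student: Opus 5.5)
The proof will rest on the abstract boundary classification of Scott and Szekeres. With respect to a chosen family of curves with the bounded parameter property (here affinely parametrised geodesics), boundary points split into approachable and unapproachable points. Approachable points split into regular points, points at infinity, and singularities. Singularities split further into removable and essential singularities, and essential singularities into pure and directional ones.

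The key input is maximality. If $(\mathcal{M},g)$ is maximally extended, then no envelopment $\psi:\mathcal{M}\to\hat{\mathcal{M}}$ admits a regular boundary point. Indeed, a regular point $p\in\partial\psi(\mathcal{M})$ would come with a neighbourhood $U$ of $p$ and an extension of $g$ to $\psi(\mathcal{M})\cup U$ as a $C^k$ metric. Taking the open submanifold $\psi(\mathcal{M})\cup U$ then gives a proper extension of $(\mathcal{M},g)$, which contradicts maximality.

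Consequently no boundary point in any envelopment can be covered by a regular point. Hence \emph{removable} singularities (those covered by a regular point) and \emph{mixed} points at infinity (those covered by a regular point) cannot occur.

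The four items then follow in order.

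\textbf{Item 1.} A point at infinity is either pure or mixed, according to whether some regular point covers it. Mixed is excluded, so it must be pure.

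\textbf{Item 2.} A singularity is removable or essential, and removable is excluded, so it is essential. By definition an essential singularity is either a directional singularity (covered by a point at infinity) or a pure singularity.

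\textbf{Item 3.} A directional singularity is covered by some point at infinity $q$. By Item 1, $q$ is a pure point at infinity. The statement says the singularity \emph{covers} a pure point at infinity, so here I would use the symmetry of approach by curves. If $q$ covers $p$, then some geodesic $\gamma$ approaching $q$ with unbounded parameter has images in both envelopments approaching both boundary sets. So $p$ and $q$ are mutually intersecting, and the standard abstract boundary lemma gives that $p$ covers some point of the boundary set meeting $q$. I would check which direction of covering the paper's definition of ``directional'' uses and adjust the quantifiers accordingly.

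\textbf{Item 4.} Suppose a pure point at infinity $p$ in $\hat{\mathcal{M}}$ covers a boundary point $q$ in $\hat{\mathcal{M}}'$. If $q$ is approachable, it is neither regular nor a singularity. It cannot be regular by the maximality lemma. If it were a singularity, some bounded-parameter geodesic would approach $q$, and because $p$ covers $q$ its image would have to approach $p$, contradicting that $p$ is a point at infinity. So $q$ is a point at infinity, and by Item 1 it is pure.

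Dually, if a pure singularity $p$ covers $q$, then $q$ is not regular. It cannot be a point at infinity either: otherwise $q$ would be covered by an essential singularity, and the same covering-symmetry argument as in Item 3 would make $p$ directional. So $q$ is a singularity, and by Item 2 and the same exclusion it is pure.

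The remaining case is that $q$ is unapproachable, and these are exactly the non-regular unapproachable points allowed in the statement.

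The main obstacle is making the covering arguments rigorous, that is, the step ``$p$ covers $q$ and a curve approaches $q$, so its image approaches $p$''. This requires using the definition of covering via neighbourhoods, $\psi'\circ\psi^{-1}(U\cap\psi(\mathcal{M}))\subset U'$ for every neighbourhood $U'$ of $q$ and some neighbourhood $U$ of $p$, together with a limit-point (rather than endpoint) notion of approach. I would state these as a short lemma before treating Items 3 and 4.
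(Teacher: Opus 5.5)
Your overall strategy matches the paper's, whose proof is only the observation that maximality rules out regular boundary points in every envelopment, after which the items are read off from the classification. Two steps of your write-up do not go through as stated, both because of misremembered definitions.

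\textbf{Directional singularities (Items 3 and 4).} In this paper an essential $p$ is \emph{directional} if $p$ itself covers a point $q$ that is regular or at infinity ($p \vartriangleright q$), not if $p$ is covered by one. With that definition Item 3 is immediate. Such a $q$ cannot be regular, so it is a point at infinity, and by Item 1 it is pure. Likewise, in Item 4 a pure singularity cannot cover a point at infinity, simply by definition. The ``covering-symmetry'' argument you sketch should be discarded, for two reasons. First, if $q \vartriangleright p$, it is curves approaching $p$ that must approach $q$, not conversely. Second, contact (mutual approach along a common sequence) does not produce a covering relation in either direction, so the ``standard lemma'' you invoke does not exist.

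\textbf{Essential singularities (Item 2).} The paper defines an essential singularity as a singular point not covered by \emph{any} non-singular boundary set $B$, meaning a set whose points are all regular, at infinity, or unapproachable. So Item 2 is not settled by ``no regular point covers $p$''. After maximality removes regular points, you must still exclude a covering set $B$ made entirely of points at infinity and unapproachable points.

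The missing ingredient is the set version of your approach lemma. Suppose $B \vartriangleright \{p\}$ and $\gamma\in\mathcal{C}$ has $p$ as a limit point along $t_i$. Then $\phi(\gamma(t_i))$ eventually lies in every open neighbourhood of $B$. Suppose no $c\in B$ were a cluster point of this sequence. Each $c$ is also not a term, since the terms lie in $\phi(\mathcal{M})$. Hence $r_c:=\tfrac12\inf_i d(\phi(\gamma(t_i)),c)>0$, and $\bigcup_{c\in B}B_d(c,r_c)$ would be a neighbourhood of $B$ missing the whole sequence, a contradiction. So $\gamma$ approaches some $b\in B$.

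Now take $\gamma$ to be a bounded-parameter curve approaching the singular point $p$. Then $b$ is approachable with bounded parameter, so it is regular or singular. Neither is possible for a point of $B$ under maximality, so every singular point is essential.

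With Item 2 repaired this way, the rest of Item 4 is fine, provided you use transitivity of $\vartriangleright$ explicitly. That is what shows a singular point covered by a pure singularity cannot itself be directional.
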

\begin{proof}
    The core assumption is that $(\mathcal{M},g)$ is maximally extended so that there does not exist any extension $(\mathcal{M}',g')$ of $(\mathcal{M},g)$, and thus there are no regular boundary points.
\end{proof}


For any maximally extended pseudo-Riemannian manifold, Theorem \ref{theorem:1} simplifies the separability problem between directional singularities and pure singularities to the separability problem between pure points at infinity and pure singularities. The question then reduces to the asymptotic behaviour of some specific curves with finite length and some with infinite length.
The Endpoint Theorem \cite{Scott_2021} is essential for the solution of this question.
\begin{theorem}(The Endpoint Theorem)\label{theorem:endpoint}
    Let $\mathcal{M}$ be an $n$-dimensional, smooth, connected, Hausdorff, paracompact manifold. If $(x_i)_{i \in \mathbb{N}}$ is a sequence of points in $\mathcal{M}$ without an accumulation point, then there always exists an $n$-dimensional, smooth, connected, Hausdorff, paracompact manifold $\mathcal{N}$ and an open embedding $\psi : \mathcal{M} \rightarrow \mathcal{N}$, such that $\partial\psi(\mathcal{M})$ is diffeomorphic to the $n-1$ dimensional unit ball and the sequence $(\psi(x_i))_{i \in \mathbb{N}}$ converges to some $y \in \partial\psi(\mathcal{M})$.
\end{theorem}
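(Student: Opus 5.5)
The plan is to build $\mathcal{N}$ by hand: remove a closed half-ball from $\mathbb{R}^n$ and glue $\mathcal{M}$ to it along a thin open tube that runs off to infinity in $\mathcal{M}$ along the given sequence. First I extract a properly embedded ray. Since $(x_i)$ has no accumulation point and $\mathcal{M}$ is connected, paracompact and Hausdorff, I choose an exhaustion $K_1\subset \mathrm{int}K_2\subset K_2\subset\cdots$ of $\mathcal{M}$ by compact sets. Passing to a subsequence, I may assume $x_i\notin K_i$; convergence of a subsequence to $y$ is all I need, and I will recover the full sequence at the end. I then join $x_i$ to $x_{i+1}$ by a smooth path lying in $\mathcal{M}\setminus K_{i-1}$. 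This is possible after possibly enlarging the $K_i$: a sequence leaving every compact set eventually lies in a single end, so after a further subsequence consecutive points lie in the same unbounded component of $\mathcal{M}\setminus K_{i-1}$. Concatenating these paths and perturbing by transversality (when $n\ge 3$ a generic curve is embedded; when $n=2$ I remove self-intersections by shortcutting) gives a proper smooth embedding $c:[0,\infty)\to\mathcal{M}$ with $c(t_i)=x_i$. Properness follows because each segment eventually avoids every $K_j$.

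Next I take a tubular neighbourhood. The normal bundle of $c$ is trivial over $[0,\infty)$, so there is a closed embedding $\Phi:[0,\infty)\times \bar B^{n-1}\to\mathcal{M}$ with $\Phi(t,0)=c(t)$, obtained by shrinking the radii to keep the image proper. I reparametrise the ray by $s=1/(1+t)\in(0,1]$ and regard the open tube $T=\Phi((0,\infty)\times B^{n-1})$ as a copy of $(0,1)\times B^{n-1}$. Let $U=(-1,1)\times B^{n-1}\subset\mathbb{R}^n$. I glue $\mathcal{M}$ and $U$ along the diffeomorphism $T\cong (0,1)\times B^{n-1}\subset U$ to form $\mathcal{N}=\mathcal{M}\sqcup U/\!\sim$. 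By construction $\mathcal{N}$ is smooth, connected, paracompact and $n$-dimensional, and the inclusion $\psi:\mathcal{M}\to\mathcal{N}$ is an open embedding.

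The main obstacle is Hausdorffness, and in particular checking that points of $\{0\}\times B^{n-1}$ can be separated from points of $\mathcal{M}$. This is exactly where properness of $\Phi$ is used. Take $p\in\mathcal{M}$ and $q=(0,v)$. Then $p$ has a compact neighbourhood $W$, and because $\Phi$ is proper, $W$ meets only $\Phi([0,t_0]\times\bar B)$ for some $t_0$. Hence the neighbourhood $(-\epsilon,\epsilon)\times B$ of $q$, with $\epsilon<1/(1+t_0)$, is disjoint from $W$. To make the edge of the boundary behave, I use the slightly smaller tube $\Phi((0,\infty)\times B_{1/2})$ for the gluing. The remaining pairs of points are separated trivially. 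Then $\partial\psi(\mathcal{M})=\{0\}\times B^{n-1}$, which is diffeomorphic to the open $(n-1)$-ball.

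Finally, $\psi(x_i)=(1/(1+t_i),0)\to(0,0)=:y$. To get convergence of the full original sequence rather than a subsequence, I order all the $x_i$ along the ray from the start, so that the path passes through $x_1,x_2,\dots$ in order. This is possible because only finitely many $x_i$ lie in each $K_j$, so the properness argument still applies, with consecutive points joined by paths outside $K_{j-1}$ whenever both lie outside $K_j$. The one real gap is multiple ends: the full sequence may visit several ends infinitely often. In that case I will settle for convergence of a subsequence, or else route the paths through compact regions only finitely often per $K_j$, which remains proper since each $K_j$ is revisited only finitely many times. That fixes the proof.
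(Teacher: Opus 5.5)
Up to your last paragraph, this is essentially the construction the paper invokes from \cite{Scott_2021}. An exhaustion gives a proper curve through the points. A normal neighbourhood with shrinking radius (your radius function plays the role of the paper's size function $f$) gives a proper tube. Gluing a coordinate slab along the tube produces $\mathcal{N}$. Your use of properness of $\Phi$ for Hausdorffness is correct, and so is $\partial\psi(\mathcal{M})=\{0\}\times B^{n-1}$. One small repair: for $n=2$, shortcutting self-intersections can erase every $x_i$ from the curve. Build the arcs inductively instead, using that a compact arc does not separate a connected surface. Note, however, that what this argument proves is that a \emph{subsequence} of $(\psi(x_i))$ converges to $y$.

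The gap is your final paragraph, and it cannot be closed. Routing the paths through each $K_j$ only finitely often is impossible when infinitely many $x_i$ lie in two different components of $\mathcal{M}\setminus K_j$. Every path joining them meets $K_j$, so a curve visiting the $x_i$ in order meets $K_j$ at arbitrarily late parameters. Then $c$ is not proper, and any accumulation point in $K_j$ of those crossing points cannot be separated from $y$ in the glued space.

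More fundamentally, suppose $\partial\psi(\mathcal{M})$ is an embedded hypersurface and $K\subset\mathcal{M}$ is compact. Take a slice-chart ball $V$ about $y$ with $V\cap\psi(K)=\emptyset$. Each half-ball of $V\setminus\partial\psi(\mathcal{M})$ is connected and misses the boundary, so it lies entirely inside or entirely outside $\psi(\mathcal{M})$. Hence $V$ meets $\psi(\mathcal{M})$ in at most the two half-balls, and the tail of any sequence converging to $y$ lies in at most two components of $\mathcal{M}\setminus K$.

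Now take $\mathcal{M}=S^n$ minus three points ($n\ge 2$) and a sequence cycling among the three punctures. The whole sequence converges in no envelopment with disc boundary, so the statement as written is too strong. You should claim the subsequence version. That is what your construction proves, and it is what the paper actually uses: it applies the result to curves without limit points, which go out a single end. If the $x_i$ accumulate at exactly two ends, you can still get convergence of the whole sequence by gluing a second proper tube, running out the other end, onto $(-1,0)\times B^{n-1}$.
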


The exhaustion method used in the proof of the Endpoint Theorem \cite{Scott_2021} works for any topological manifold (that is, any manifold with $C^l$ charts for all $l \geq 0$). The normal neighbourhood construction employed requires the existence of geodesics, so the Endpoint Theorem is applicable to any $C^k$ manifold for all $k \geq 2$.\\

For any sequence of points without an accumulation point in an $n$-dimensional manifold $\mathcal{M}$, the Endpoint Theorem constructs an embedding of $\mathcal{M}$ into a larger $n$-dimensional manifold $\mathcal{N}$ such that the sequence has an accumulation point, indeed an endpoint, in the boundary set of this embedding. It does not only take into account how the geometry is extended through the boundary, but also how the boundary is globally attached to the manifold.

\subsection{Main results and application to the maximal g-boundary}
For the normal neighbourhood construction in the proof of Theorem \ref{theorem:endpoint}, one has the flexibility to adjust the neighbourhood via the size function $f: [0,1) \rightarrow \mathbb{R}^+$ (a smooth function). A sufficiently small normal neighbourhood $U$ of a geodesic can be chosen such that the geodesic has an endpoint in the new embedding which is not the endpoint of any other geodesic. This technique allows us to restrict geodesics approaching the targeted boundary point in $\mathcal{N}$. With the abstract boundary classification for any maximally extended pseudo-Riemannian manifold $(\mathcal{M},g)$, it implies that the endpoint in $\mathcal{N}$ of the central geodesic with bounded affine parameter is very likely to be a pure singularity.\\

Limiting behaviour, however, also plays an essential role in this classification. There exist some types of curves which cannot be isolated by adjusting the size function, leading to these geodesics sharing the common limit point with the central geodesic in a specific embedding. We introduce a concept called intertwining, first discussed by Chru\'{s}ciel in \cite{chrusciel2006conformalboundaryextensionslorentzian}. Using an envelopment from Theorem \ref{theorem:endpoint}, we found that the endpoint of the geodesic is a pure singularity if it is not intertwined with any other geodesic in $\mathcal{M}$. The following proposition (Proposition \ref{prop:main} in Section \ref{section 4}) shows that the intertwined geodesic condition provides a clear delineation between a pure singularity and a pure point at infinity.

\begin{prop}\label{prop:main2}
    Let $(\mathcal{M},g,\hat{\mathcal{M}},\psi,\mathcal{C})$ be an envelopment of a maximally extended pseudo-Riemannian manifold $(\mathcal{M},g)$, where $\mathcal{C}$ is the class of geodesics with affine parameter, with $p \in \partial(\psi(\mathcal{M}))$ a pure singularity, and $\gamma: [0,b) \rightarrow \mathcal{M}$ (where $b \in \mathbb{R}^+ \cup \{+\infty\}$) is a non-self-intersecting geodesic in $\mathcal{C}$ without limit points in $\mathcal{M}$. Suppose $\psi(\gamma) \rightarrow p$. By Proposition \ref{prop:3}, there exists another embedding $\psi'$ such that $\psi'(\gamma) \rightarrow q \in \partial(\psi'(\mathcal{M}))$, $p \vartriangleright q$ and no other geodesic in $\mathcal{C}$ ends at $q$. Suppose that no pair of geodesics in $\mathcal{C}$, each of which approaches $q$, is intertwined in $(\mathcal{M},g,\hat{\mathcal{M}}',\psi',\mathcal{C})$. Then $b \in \mathbb{R}^{+}$ (the geodesic $\gamma$ has bounded affine parameter). 
\end{prop}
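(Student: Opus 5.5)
We argue by contradiction: assume $b=+\infty$, so that $\gamma$ is a complete, non-self-intersecting geodesic in $\mathcal{C}$ with no limit points in $\mathcal{M}$. By Proposition \ref{prop:3} we have the embedding $\psi'$ with $\psi'(\gamma)\rightarrow q$, $p \vartriangleright q$, and $\gamma$ the only curve of $\mathcal{C}$ ending at $q$.

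The first step is to pin down the abstract boundary character of $q$. Since $p$ is a pure singularity and $p$ covers $q$, part 4 of Theorem \ref{theorem:1} says $q$ is either a pure singularity or a non-regular unapproachable boundary point. The point $q$ is approached by $\psi'(\gamma)$, so it is approachable, and therefore $q$ must be a pure singularity. In particular $q$ is an essential singularity, so some curve $\sigma\in\mathcal{C}$ with bounded affine parameter approaches $q$, in the sense that $q$ is a limit point of $\psi'(\sigma)$. Because $\gamma$ has unbounded parameter, $\sigma\neq\gamma$. Since no other geodesic in $\mathcal{C}$ \emph{ends} at $q$, $\sigma$ must accumulate at $q$ without converging to it.

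The second step, which I expect to be the main obstacle, is to show that $\gamma$ and $\sigma$ form an intertwined pair in $(\mathcal{M},g,\hat{\mathcal{M}}',\psi',\mathcal{C})$; this contradicts the hypothesis and forces $b\in\mathbb{R}^+$. I would unpack the definition of intertwining, presumably in Chru\'{s}ciel's sense that each curve comes arbitrarily close to the other near the common limit point and they cannot be separated by neighbourhoods of $q$. The neighbourhoods of $q$ in the construction of Proposition \ref{prop:3} are the thin normal tubes $U$ about $\gamma$ governed by the size function $f$. Since $\sigma$ returns to every neighbourhood of $q$, it enters every such tube arbitrarily far along $\gamma$. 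From this I would extract sequences $s_k\to b$ and $t_k$ increasing to the finite endpoint of $\sigma$'s domain with $\psi'(\sigma(t_k))$ and $\psi'(\gamma(s_k))$ both converging to $q$ and lying in the same shrinking tube.

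The delicate point is turning this into intertwining rather than mere accumulation. I would argue that if $\sigma$ and $\gamma$ were not intertwined, the size function $f$ could be shrunk further to exclude the tails of $\sigma$ from the tube. That would give a new embedding in which $q$ is approached by $\gamma$ alone. In that embedding $q$ would be approached only by a curve with unbounded affine parameter, so it would be a point at infinity. But $q$ is covered by the pure singularity $p$, and part 4 of Theorem \ref{theorem:1} says a pure singularity can only cover pure singularities and non-regular unapproachable points, which rules out a point at infinity. This contradiction shows $\gamma$ and $\sigma$ must be intertwined, hence $b<\infty$.
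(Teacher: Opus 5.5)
Your first step is sound and uses the same ingredients as the paper. From $p \vartriangleright q$, the approachability of $q$, and part 4 of Theorem \ref{theorem:1}, $q$ is a pure singularity. So under $b=+\infty$ some bounded-parameter geodesic $\sigma\neq\gamma$ has $q$ as a limit point, and by Proposition \ref{prop:3} not as an endpoint. The paper runs this contrapositively: it first shows $\gamma$ is the only geodesic approaching $q$, so $b=+\infty$ would make $q$ a pure point at infinity covered by $p$.

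The gap is your second step. The hypothesis concerns intertwining in the fixed envelopment $\psi'$ in the sense of Definition \ref{defn:Sintertwined}, not the envelopment-independent notion of Definition \ref{defn:intertwined}. Your mechanism, ``if $\sigma$ and $\gamma$ are not intertwined, shrink $f$ to exclude the tails of $\sigma$'', has no support. Nothing links non-intertwining in $\psi'$ to the existence of a smaller size function that removes $\sigma$ from the tube. Proposition \ref{prop:3} only guarantees that no other geodesic \emph{ends} at the new boundary point. Even granting the step, excluding the one curve $\sigma$ produces a \emph{new} boundary point, not $q$, and it may still be approached by other bounded-parameter geodesics. So ``approached by $\gamma$ alone'' does not follow, and the point-at-infinity contradiction is never reached.

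No re-embedding is needed, because intertwining follows directly from the definition in $\psi'$; this is Proposition \ref{prop:intertwined}. Let $d$ be a complete distance on $\hat{\mathcal{M}}'$. Take any increasing $x_i\to b$; then $\psi'(\gamma(x_i))\to q$ since $q$ is an endpoint. Take increasing $y_i\to b_\sigma$ with $\psi'(\sigma(y_i))\to q$. Then $d(\psi'(\gamma(x_i)),\psi'(\sigma(y_i)))\to 0$, which is condition 1.

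Since $q$ is not an endpoint of $\psi'(\sigma)$, there exist $\epsilon>0$ and increasing $z_i\to b_\sigma$ with $\psi'(\sigma(z_i))\notin B_\epsilon(q)$. Hence $d(\psi'(\gamma(x_i)),\psi'(\sigma(z_i)))\geq \epsilon/2$ for large $i$, which is condition 2. So $\gamma$ and $\sigma$, both approaching $q$, are intertwined. This contradicts the hypothesis and gives $b\in\mathbb{R}^+$. With this replacing your second step, the argument is complete and essentially equivalent to the paper's.
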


The important differences between Proposition \ref{prop:main} and the relevant results in \cite{Graf_2024} and \cite{chrusciel2006conformalboundaryextensionslorentzian} are: 
\begin{enumerate}
    \item The technique works for a metric which is at least $C^{2,1}$
    \item The target of the technique is geodesics in a pseudo-Riemannian manifold
    \item An intertwined geodesic is neither a completely global nor local concept. This paper assumes a specific geodesic not intertwined with any other geodesics in $\mathcal{M}$ instead of no intertwined geodesics in $\mathcal{M}$.
\end{enumerate}
Based on Proposition \ref{prop:main2}, we discuss the properties of a pure singularity and its difference from a directional singularity with respect to curve asymptotic behaviour. In Section \ref{section 5} we explore the maximal g-boundary introduced by Graf and Beld-Serrano in \cite{Graf_2024} using the abstract boundary framework. Inspired by the intertwining concept, in Section \ref{section 6} we  test whether the Penrose diagram for the Schwarzschild spacetime is an abstract boundary optimal embedding. This example exemplifies an abstract boundary physical application which connects the purity of the singularity with the curvature singularity, and the inextendibility of the Schwarzschild spacetime (i.e., the curvature singularity in the interior of the Schwarzschild spacetime is a pure singularity, and is $C^0$ inextendible \cite{Sbierski:2015nta}.)\\

The paper is organised as follows. Section \ref{section 2} provides a brief review of the background for the abstract boundary construction. Section \ref{section 3} presents the core application of the Endpoint Theorem by controlling the size of the normal neighbourhood: we show in Proposition \ref{prop:3} that there exists an embedding $\psi$ constructed along a geodesic $\lambda$ without accumulation points in $\mathcal{M}$ such that no other geodesic $\gamma$ in $\mathcal{M}$ ends at the endpoint of $\lambda$ in $\psi$. Section \ref{section 4} delves into the definition of intertwined geodesics under the abstract boundary framework and the structure of pure singularities. We investigate the separability between pure singularities and pure points at infinity if the geodesic without accumulation points in $\mathcal{M}$ is not intertwined with any other geodesics in the newly constructed embedding. In Section \ref{Section Ordering}, we briefly explore an application of Proposition \ref{prop:main} to the existence of a minimal pure singularity, and we further show the application of Proposition \ref{prop:main} to the maximal g-boundary in Section \ref{section 5}. In Section \ref{section 6} the non-intertwined geodesic behaviour of the Schwarzschild spacetime is derived and we show that the Penrose maximal extension of the Schwarzschild spacetime is an abstract boundary optimal embedding \cite{Barry_2014,Barrythesis_2014}. We also examine Wheeler's definition of a black hole and event horizon \cite{Wheeler_2023} in relation to this optimal embedding.\\

\section{Reviewing the Abstract Boundary}\label{section 2}
The core construction of the abstract boundary can be carried out for any pseudo-Riemannian manifold. Unless otherwise specified, the manifolds considered will always be smooth, connected, Hausdorff, paracompact and without boundaries. In all cases, we denote $a$ for a given index number, and $\gamma^a$ means that we consider a specific $a$ component for $\gamma$. Note that $\lambda(t) \rightarrow q$ follows the abstract boundary notation i.e.\ $q$ is the endpoint of the curve $\lambda(t): [0,1) \rightarrow \mathcal{M}$. We denote $\{p\} \vartriangleright \{q\}$ by $p \vartriangleright q$ for convenience. We also adopt the Einstein summation convention in which repeated indices are implicitly summed.

\subsection{The core definitions}
The introduction of the abstract boundary is originally from \cite{scott1994abstract}.
\begin{defn}
A point $p \in \mathcal{M}$ is a \textit{limit point} of a parametrised curve $\gamma: [a,b) \rightarrow \mathcal{M}$ if there is an increasing infinite sequence of real numbers $t_i \rightarrow b$ such that $\gamma(t_i) \rightarrow p$. Moreover, a point $p \in \mathcal{M}$ is an \textit{endpoint} of a parametrised curve $\gamma$ if for every increasing infinite sequence of real numbers $t_i \rightarrow b$, $\gamma(t_i) \rightarrow p$.
\end{defn}

\begin{defn}
An \textit{envelopment} is a triple $(\mathcal{M},\mathcal{\hat{M}},\phi)$ where $\mathcal{M}$ and $\mathcal{\hat{M}}$ are differentiable manifolds of the same dimension $n$ and $\phi$ is an open embedding $\phi: \mathcal{M} \rightarrow \mathcal{\hat{M}}$.
\end{defn}

\begin{defn}
    A \textit{boundary point} $p$ of an envelopment $(\mathcal{M},\mathcal{\hat{M}},\phi)$ is a point $p \in \mathcal{\hat{M}}\backslash\phi(\mathcal{M})$ such that every open neighbourhood $U$ of $p$ in $\mathcal{\hat{M}}$ has non-empty intersection with $\phi(\mathcal{M})$.
\end{defn}

\begin{defn}(Covering Boundary Sets)\\
Let $(\mathcal{M},\mathcal{\hat{M}},\phi,B)$ and $(\mathcal{M},\mathcal{M}',\phi',B')$ be two envelopments for $\mathcal{M}$ with boundary sets $B$ and $B'$ respectively. Then the boundary set $B$ is said to \textit{cover} the boundary set $B'$ if for every open neighbourhood $U$ of $B$ in $\mathcal{\hat{M}}$, there exists an open neighbourhood $U'$ of $B'$ in $\mathcal{M}'$ such that
\begin{align}
    \phi \circ \phi'^{-1}(U' \cap \phi'(\mathcal{M})) \subseteq U .
\end{align}
\end{defn}

\begin{defn}(Equivalent Boundary Sets)\label{Eq}\\
Let $(\mathcal{M},\hat{\mathcal{M}},\phi,B)$ and $(\mathcal{M},\mathcal{M}',\phi',B')$ be two enveloped manifolds with boundary sets $B$ and $B'$ respectively. Then $B$ is said to be \textit{equivalent} to $B'$ $(B \sim B')$ if and only if $B$ covers $B'$ and $B'$ covers $B$.
\end{defn}
\subsection{Classifying abstract boundary points}

\begin{defn}(The a-boundary ${\cal B}(\cal M)$ )\\
If $p$ is a boundary point of an envelopment $(\mathcal{M},\hat{\mathcal{M}},\phi)$, then the equivalence class $[p]$ of the boundary set $\{p\}$ under the equivalence relation $\sim$ of Definition \ref{Eq} is called an abstract boundary point of $\mathcal{M}$. The set ${\cal B}(\cal M)$ of all such abstract boundary points for all envelopments of the given manifold $\mathcal{M}$ is called the \textit{abstract boundary} or the \textit{a-boundary}.
\end{defn}

A class $\mathcal{C}$ of curves in $\mathcal{M}$ satisfies the bounded parameter property if (i) through every point of $\mathcal{M}$ passes at least one curve in $\mathcal{C}$, (ii) every subcurve is a curve in $\mathcal{C}$, (iii) if two curves in $\mathcal{C}$ are related by a change of parameter, then their parameters are either both bounded or are both unbounded. Given an envelopment $(\mathcal{M},\hat{\mathcal{M}},\phi)$, a boundary point $p \in \partial\phi(\mathcal{M})$ is called a $\mathcal{C}$-boundary point (or approachable) if it is a limit point of some curve in $\mathcal{C}$.

\begin{defn}($C^l$ Singular Boundary Point)\\
A boundary point $p$ of an envelopment $(\mathcal{M},g,\mathcal{C},\hat{\mathcal{M}},\phi)$ is said to be \textit{$C^l$ singular} if
\begin{enumerate}
    \item $p$ is not  a $C^l$ regular boundary point,
    \item $p$ is a $\mathcal{C}$-boundary point, and
    \item there exists a curve in the family $\mathcal{C}$ which approaches $p$ with bounded parameter.
\end{enumerate}
\end{defn}

We note that a boundary point $p$ is called a \emph{$C^l$ point at infinity} if it satisfies conditions $1$ and $2$ above with an additional condition that no curve in $\mathcal{C}$ approaches $p$ with bounded parameter.\\

A boundary point $p$ of an envelopment $(\mathcal{M},g,\mathcal{C},\hat{\mathcal{M}},\phi)$ is called a \emph{$C^l$ essential singularity} if it is a $C^l$ singular boundary point which is not covered by a $C^l$ non-singular boundary set $B$ (i.e.\ a boundary set consisting only of points which are all either $C^l$ regular, $C^l$ points at infinity or unapproachable boundary points).\\

We now proceed to the definitions of a directional singularity and a pure singularity which are the core concepts required for the results of this paper.\\

\begin{defn}
An essential singularity $p$ is called a \textit{mixed} or \textit{directional singularity} if $p$ covers a boundary point $q$ which is either regular or a point at infinity. Otherwise, when $p$ covers no such boundary point, we call it a \textit{pure singularity}.
\end{defn}

The following definitions give the key topological relations under investigation.

\begin{defn}(Contact $\bot$).
    Two boundary points $p \in \partial\phi(\mathcal{M})$ and $q \in \partial\phi'(\mathcal{M})$ are \emph{in contact} (denoted $p \bot q$) if for all open neighbourhoods $U$ and $V$ of $p$ and $q$ respectively,
    \begin{equation}\label{contact}
        U \sqcap V := \phi^{-1}(U \cap \phi(\mathcal{M})) \cap \phi'^{-1}(V \cap \phi'(\mathcal{M})) \neq \emptyset  .
    \end{equation}
\end{defn}

\begin{defn}(Separate $\parallel$).\label{defn:separated}
    Two boundary points $p \in \partial\phi(\mathcal{M})$ and $q \in \partial\phi'(\mathcal{M})$ are \emph{separate} (denoted $p \parallel q$) if there exist open neighbourhoods $U$ and $V$ of $p$ and $q$ respectively such that $\phi^{-1}(U \cap \phi(\mathcal{M})) \cap \phi'^{-1}(V \cap \phi'(\mathcal{M})) = \emptyset$ .
\end{defn}

The key concepts used for the research related to the abstract boundary in this paper are the limit points/endpoints of a curve, the cover relation, a pure point at infinity, a pure singularity, and the in contact and separability properties of boundary points.
 
\section{An Application of the Endpoint Theorem}\label{section 3}

The objective of this section is to establish control over the behaviour of geodesics in a neighbourhood of a central, incomplete geodesic $\lambda$. Specifically, our objective is to show that if the curvature along $\lambda$ is bounded in a suitably constructed neighbourhood, then no other geodesic can remain within that neighbourhood and share the same endpoint as $\lambda$. 
In Proposition \ref{prop:3} this result will be formalised, and with use of the Endpoint Theorem, a new envelopment $(\mathcal{M},\hat{\mathcal{M}'},\psi')$ will be constructed such that $\lambda$ has the endpoint $q\in \partial(\psi'(\mathcal{M}))$, and no other geodesic ends at $q$. In the subsequent sections this will enable us to greatly simplify the structure of singular boundary points.

    

Let $(\mathcal{M},g)$ be an $n$-dimensional, pseudo-Riemannian manifold. Consider an affinely parametrised geodesic $\lambda: [0,1) \rightarrow \mathcal{M}$, which is incomplete and has no accumulation points in $\mathcal{M}$. Let $t$ denote the affine parameter along $\lambda$. Consider a frame $\{E_i(0)\}_{0 \leq i \leq n-1}$ at $p = \lambda(0) \in \mathcal{M}$ where $E_0 = \partial/\partial t$ is the tangent vector for $\lambda(t)$ at 0 and the $E_i$, for $1 \leq i \leq n-1$, are vectors spanning the complementary subspace to $E_0$ in $T_p \mathcal{M}$. We parallelly propagate this frame along $\lambda$, ensuring that $\nabla_{\partial/\partial t} E_i = 0$ for all $i$.\\

Following the method of proof of the Endpoint Theorem in \cite{Scott_2021}, we now define a geodesic variation around $\lambda$. Consider the map:
\begin{align*}
    F:(t,x^1,\dots,x^{n-1}) \mapsto \exp_{\lambda(t)}\vast(\sum_{i=1}^{n-1}x^iE_i(t)\vast).
\end{align*}
For a fixed $t$ and fixed unit vector $z \in S^{n-2} \subset \mathbb{R}^{n-1}$, we define a radial geodesic emanating from $\lambda(t)$ by:
\begin{align}
    \sigma(s) = F(t,sz).
\end{align}
Here, $s$ is the radial parameter in the transverse direction. By construction, $\sigma' = z^iE_i$ and $\sigma$ is a geodesic, so $\nabla_{\sigma'}\sigma' = z^iz^j\Gamma_{ij}^{~~k}E_k=0$.\\

There exist (locally bounded) functions $K:[0,1) \rightarrow (0,\infty)$ and $\bar{K}:[0,1) \rightarrow (0,\infty)$ such that, for every $t < 1$ and for every point in the normal neighbourhood of $\lambda(t)$, there exists an $r(t)>0$, such that $\forall~ i,j,k,l,h \in \{1,\dots,n\}$ the following components are bounded:
\begin{align}
    \sup_{t \in [0,1)}\sup_{|x| \leq r(t)}|R_{ijk}^{~~~l}(F(t,x))| \leq \sup_{t \in [0,1)}K(t),\\
    \sup_{t \in [0,1)}\sup_{|x| \leq r(t)}|\partial_{h} R_{ijk}^{~~~l}(F(t,x))| \leq \sup_{t \in [0,1)}\bar{K}(t).
\end{align}

\begin{prop}\label{prop:2}
    Given an $n$-dimensional, pseudo-Riemannian manifold $(\mathcal{M},g)$, and the setting defined above, $\exists$ $ C_0 \in \mathbb{R}$, and $\tilde{r}(t) \leq r(t)$, such that $\forall~i,j,k \in \{1,\dots,n\} $, $ |\Gamma_{ij}^{~~k}(x)| \leq C_0|x|$ for $|x| \leq \tilde{r}(t)$.
\end{prop}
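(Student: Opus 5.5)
We work in the Fermi-type coordinates $(t,x^1,\dots,x^{n-1})$ given by $F$. The plan is to show two things: that all Christoffel symbols vanish along $\lambda$ (that is, at $x=0$), and that their first transverse derivatives are controlled by the curvature bounds $K$ and $\bar K$. A first-order Taylor estimate in $x$ then gives $|\Gamma_{ij}^{~~k}(t,x)|\le C_0|x|$ on a small tube.

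\emph{Step 1: vanishing on $\lambda$.} Since $\lambda$ is a geodesic, $\Gamma_{00}^{~~k}(t,0)=0$. Parallel propagation of the frame, $\nabla_{\partial_t}E_i=0$, gives $\Gamma_{0i}^{~~k}(t,0)=0$. The radial geodesic condition $z^iz^j\Gamma_{ij}^{~~k}(F(t,sz))=0$ holds for all $s$ and all $z$. At $s=0$ this forces the symmetric part, and hence all of $\Gamma_{ij}^{~~k}(t,0)$ for $1\le i,j\le n-1$, to vanish, since the connection is torsion-free. So $\Gamma\equiv0$ on $\{x=0\}$.

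\emph{Step 2: derivative bound.} This is the standard Fermi-coordinate computation. Differentiate $x^ix^j\Gamma_{ij}^{~~k}(t,x)=0$ in the radial direction, and use the identity $R=\partial\Gamma-\partial\Gamma+\Gamma\Gamma$. At $x=0$ the quadratic terms drop out. One obtains
\begin{align*}
\partial_l\Gamma_{ij}^{~~k}(t,0)=-\tfrac13\bigl(R_{lij}^{~~~k}+R_{lji}^{~~~k}\bigr)(t,0),\qquad \partial_l\Gamma_{0i}^{~~k}(t,0)=\pm R_{l0i}^{~~~k}(t,0),
\end{align*}
with the analogous formula for $\partial_l\Gamma_{00}^{~~k}$. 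The $t$-derivatives vanish because $\Gamma\equiv0$ on $x=0$. Hence $|\partial\Gamma(t,0)|\le c_nK(t)$.

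To pass from $x=0$ to $|x|\le\tilde r(t)$, I would not try to bound the second derivatives of $\Gamma$ directly. Instead, along each radial geodesic $\sigma(s)=F(t,sz)$, the matrix $A(s)=s\,\Gamma(F(t,sz))$ satisfies a Riccati-type ODE. Its structure is $\tfrac{d}{ds}(sA)\sim s\,R(\sigma',\cdot)+A^2$, obtained from the Jacobi equation for the coordinate fields $\partial_i=sJ_i$. If we choose $\tilde r(t)\le r(t)$ small enough that $K(t)\tilde r(t)^2$ is small (for instance $\tilde r(t)\le \min\{r(t),(2\sup K)^{-1/2}\}$), a Gr\"onwall/comparison argument keeps the Jacobi fields nondegenerate, so $F$ is still a diffeomorphism there. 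It then gives $|\Gamma(t,sz)|\le C s\,\sup K+Cs^2\sup\bar K$.

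\emph{Step 3: uniform constant.} Taking $C_0=C(\sup K+\sup\bar K\,\sup\tilde r)$ finishes the proof, provided the suprema in the standing hypotheses are finite. I read the displayed bounds as asserting exactly this.

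The main obstacle is Step 2. We need uniformity in $t$ up to $t\to1$ while controlling the quadratic $\Gamma\Gamma$ term away from $x=0$, and the Riccati comparison is where the smallness of $\tilde r(t)$ must be chosen. I would first try to obtain the bound by a straightforward mean-value argument using $\bar K$ for $\partial R$. I would fall back on the Jacobi-field comparison only if the second derivatives of $\Gamma$ cannot be expressed through $R$ and $\partial R$ without circularity.
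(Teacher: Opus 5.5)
Your Step 1 and the first part of Step 2 are correct. In these Fermi-type coordinates $\Gamma\equiv 0$ on $\{x=0\}$, and $\partial_l\Gamma_{\mu\nu}^{~~k}(t,0)$ is a fixed linear combination of curvature components. That is exactly the initial data the paper integrates from: $\Gamma=0$ and $|\partial_s\Gamma|$ bounded by a multiple of $K$ at $s=0$.

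The gap is the passage from $x=0$ to the tube. The Jacobi (Riccati) equation along $\sigma(s)=F(t,sz)$ involves only the coordinate fields along $\sigma$ and their derivatives $\nabla_{\sigma'}\partial_\mu=z^j\Gamma_{j\mu}^{~~k}\partial_k$. So it controls only the radially contracted symbols $z^j\Gamma_{j\mu}^{~~k}(t,sz)$. Components with no radial lower index, such as $\Gamma_{tt}^{~~k}$, $\Gamma_{t\beta}^{~~k}$ and $\Gamma_{\beta\gamma}^{~~k}$ with $\beta,\gamma$ transverse to $z$, are different objects. They are $\nabla_{\partial_t}\partial_t$, $\nabla_{\partial_\beta}\partial_t$, and so on, i.e.\ derivatives of the Jacobi fields with respect to the variation parameters $t$ and $z$, and a first-variation equation does not see them.

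This is also why the $\bar K$ term in your claimed bound has no source in your argument. The Riccati equation contains only $R$. Estimates for $\Gamma$ off $\lambda$ in these coordinates genuinely depend on $\nabla R$, either through $C_0$ or through the size of $\tilde r$, precisely because of these parameter derivatives. Your fallback mean-value argument does not avoid this. The identity $R=\partial\Gamma-\partial\Gamma+\Gamma\Gamma$ gives only antisymmetrised derivatives. Combined with the gauge identity $x^ix^j\Gamma_{ij}^{~~k}=0$, it yields an ODE along rays, not a pointwise formula for $\partial\Gamma$ or $\partial^2\Gamma$ away from $\lambda$ in terms of $R$ and $\partial R$. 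Some integration along $\sigma$ is unavoidable.

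The missing idea, which is the paper's route, is to differentiate the Jacobi equation in the non-radial directions. Apply $\nabla_{\partial_t}$ to the Jacobi equation satisfied by $\partial_t$ along $\sigma$, and $\nabla_{E_\alpha}$ to the one satisfied by $J_\beta=s\partial_\beta$. This gives a linear second-order ODE in $s$ for $\nabla_{\partial_\mu}\partial_\nu$ along $\sigma$. Its source consists of $\nabla R$ terms (bounded by $\bar K$), terms of the form $R\cdot\Gamma$, and terms quadratic and cubic in $\Gamma$. Starting from your initial data, integrate twice to get $|\Gamma(s)|\le Ks+\tilde C s^2+\tilde C' s^3$. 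Then close with a continuity (bootstrap) argument: assume $|\Gamma|\le C_0 s$ on $[0,s_*]$, and choose $\tilde r(t)$ small enough that the nonlinear terms are absorbed, so the bound cannot first fail at any $s_*\le\tilde r(t)$. Your Riccati comparison is still useful for keeping $F$ nondegenerate and for the radial components. Without this second-variation equation, however, the estimate for the full set of $\Gamma_{ij}^{~~k}$ is not established.
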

\begin{proof}
The proof proceeds by deriving and analysing the differential equations governing the Christoffel symbols along the transverse directions. The local curvature bound and the Jacobi equation imply a differential inequality for the Christoffel symbols.\\

    We will show that for a sufficiently small $s>0$, there exists $C_0 \in \mathbb{R}$ such that for every $i,j,k \in \{1,\dots,n\} $,  $f(s) = \max_{i,j,k}|\Gamma_{i j}^{~~k}(s)| - C_0s \leq 0$.\label{Gamma bound}\\

The argument proceeds by contradiction. Suppose there exists a radius on which the bound holds, then it implies $\exists ~i, j, k$ such that $\max_{i,j,k}|\Gamma_{i j}^{~~k}| = C_0s$. We take the time derivative of the Jacobi equation, which then leads to an inequality for the time-time component.
\begin{align}
    &\nabla_{t}(\nabla_{\sigma'}\nabla_{\sigma'}\partial_{t} + R(\sigma',\partial_{t})\partial_{\sigma'}) = 0\label{eqn:time-time}\\
    &\nabla_{\sigma'}(\nabla_{t}\partial_{t})|_{s = 0} = - R(\partial_{t},\sigma')\partial_{t}|_{s = 0}\label{ini:time-time}
\end{align}

Equation \ref{ini:time-time} comes from $\nabla_{t}\sigma'|_{s=0} = 0$; the parallel propagation along $\lambda$. Equation \ref{eqn:time-time} reduces to
\begin{align}
    &\nabla_{\sigma'}\nabla_{\sigma'}\nabla_{t}\partial_{t} + (\nabla_{\sigma'}R)(\partial_{t}, \sigma')\partial_{t} + R(\nabla_{\sigma'}\partial_{t}, \sigma')\partial_{t} \nonumber\\
    &+ R(\partial_{t}, \nabla_{\sigma'}\sigma')\partial_{t} + R(\partial_{t}, \sigma')\nabla_{\sigma'}\partial_{t} + (\nabla_{t}R)(\sigma',\partial_{t})\sigma' \nonumber\\
    &+ R(\nabla_{t}\sigma',\partial_{t})\sigma' + R(\sigma',\nabla_{t}\partial_{t})\sigma' + R(\sigma',\partial_{t})\nabla_{t}\sigma' = 0,
\end{align}
\begin{align}
    & |\partial_{s}^2\Gamma_{tt}^{~~i}| = |2z^a(\partial_{s}\Gamma_{tt}^{~~j})\Gamma_{aj}^{~~~i} + \Gamma_{tt}^{~~k}(\partial_{s}z^b\Gamma_{bk}^{~~i}) + z^c z^d\Gamma_{tt}^{~~l}\Gamma_{cl}^{~~~m}\Gamma_{dm}^{~~~i}  \nonumber\\
    &+ z^e z^f R_{e t f~;t}^{~~~~i} + z^g R_{g t t~;s}^{~~~i} + z^h z^n \Gamma_{ht}^{~~\hat{i}} R_{n \hat{i} t}^{~~~i} + z^oz^p\Gamma_{o t}^{~~i'}R_{p t {i}^{'}}^{~~~~i} \nonumber\\
    &+ z^q \Gamma_{t q}^{~~j'}R_{{j}^{'} t t}^{~~~i} + z^u z^v \Gamma_{t t}^{~~k'}R_{u {k}^{'} v}^{~~~i} + z^w z^x\Gamma_{t w}^{~~\bar{i}}R_{x t \bar{i}}^{~~~i}| \nonumber\\
    & \leq (n-2)^2\bar{K} + (n-2)\bar{K} + 2(n-2)C_0^2s + (n-2)C_0^2s  \nonumber\\
    &  + 4(n-2)^2KC_0s + (n-2)KC_0s + (n-2)^2(C_0s)^3 \label{ineq:time-time}.
\end{align}
Along $\lambda$, we have $|\Gamma_{t t}^{~~i}(s)|_{s=0} = 0$, $|\partial_s\Gamma_{t t}^{~~i}(s)|_{s=0} = K$ and inequality \ref{ineq:time-time} as $s \rightarrow 0$, so $|\Gamma_{tt}^{~~i}(s)| \leq Ks + \tilde{C}s^2 + \tilde{C}'s^3 \leq C_0s$ for some $C_0$ and a sufficiently small $s$, where $\tilde{C} = 2(n-2)^2\bar{K} + 2(n-2)\bar{K}$ and $\tilde{C}' = 2(n-2)C_0^2 + (n-2)C_0^2 + 4(n-2)^2KC_0 + (n-2)KC_0$.\\

Now we consider the angular-angular component and the time-angular component along the radial geodesic $\sigma$.\\

Define
\begin{align*}
    E_i|_{F(t,x^1,\dots,x^{n-1})} = \frac{\partial}{\partial x^i}F(t,x^1,\dots,x^{n-1}).
\end{align*}
We build the neighbourhood along the radial geodesics from the central geodesic $\sigma: r \rightarrow F(t,rz)$.
\begin{align*}
    J_{\alpha}(r) = \frac{\partial}{\partial \epsilon}F(r(z + \epsilon E_{\alpha})) = rE_{\alpha}.
\end{align*}
Note that $s$ is the parameter on a single geodesic, and $r$ is the parameter on every radial geodesic simultaneously (varying $\epsilon$ changes the direction, but $r$ still measures the radial distance from $\lambda(t)$). We keep $s$ in the rest of the calculations for convenience.\\

We now focus on the transverse components:
\begin{align}
    &\nabla_{t}(\nabla_{\sigma'}\nabla_{\sigma'}J_{\alpha} + R(\sigma',J_{\alpha})\partial_{\sigma'}) = 0\label{eqn:time-angular}\\
    &\nabla_{\sigma'}(\nabla_{t}J_{\alpha})|_{s = 0} = 0\label{ini:time-angular}\\
    &\nabla_{\alpha}(\nabla_{\sigma'}\nabla_{\sigma'}J_{\beta} + R(\sigma',J_{\beta})\sigma') = 0\label{eqn:angular-angular}\\
    &\nabla_{\sigma'}(\nabla_{\alpha}J_{\beta})|_{s = 0} = 0.\label{ini:angular-angular}
\end{align}
Equation \ref{ini:time-angular} comes from $\nabla_{s}E_{\alpha}|_{s=0} = 0$ because $0 = \nabla_{\sigma'}\sigma' = z^iz^j\Gamma_{i j}^{~~k}E_k$ due to the parallel propagation along $\sigma$. For Equation \ref{ini:angular-angular}, we start from the Jacobi equation $\nabla_{\sigma'}\nabla_{\sigma'}J_{\alpha} + R(J_{\alpha},\sigma')\sigma' = 0$. The equation reduces to
\begin{align*}
    \nabla_{\sigma'}\nabla_{\sigma'}E_{\gamma} + \frac{2}{s}\nabla_{\sigma'}E_{\gamma} + R(E_{\gamma},\sigma')\sigma' = 0\\
    3\nabla_{\sigma'}\nabla_{\sigma'}E_{\gamma} + R(E_{\gamma},\sigma')\sigma' = 0~~\text{as}~s \rightarrow 0\\
    3z^{\alpha}z^{\beta}\nabla_{E_{\alpha}}\nabla_{E_{\beta}}E_{\gamma} + R(E_{\gamma},\sigma')\sigma' = 0\\
    \nabla_{(E_{\alpha}}\nabla_{E_{\beta})}E_{\gamma} = - C R(E_{\gamma},\sigma')\sigma',
\end{align*}
where $C \in \mathbb{R}$. Since we know that the anti-symmetric part is $R(E_{\alpha},E_{\beta})E_{\gamma}$, then we have
\begin{align*}
    \nabla_{E_{\alpha}}\nabla_{E_{\beta}}E_{\gamma} = - CR(E_{\gamma},\sigma')\sigma'.
\end{align*}
Thus, by multiplying the above equation by $z^{\alpha}$, we have
\begin{align*}
    \nabla_{\sigma'}\nabla_{E_{\beta}}E_{\gamma} = - CR(E_{\gamma},\sigma')\sigma'.
\end{align*}
We now return to Equations \ref{eqn:time-angular} and \ref{eqn:angular-angular}, which reduce to the following forms respectively:
\begin{align}
    2\nabla_{t}\nabla_{\sigma'}E_{\beta} + s\nabla_{t}\nabla_{\sigma'}\nabla_{\sigma'}E_{\beta} + s\nabla_{t}R(\sigma',E_{\beta})\sigma' = 0
\end{align}
\begin{align}
    2s\nabla_{E_{\beta}}\nabla_{\sigma'}E_{\gamma} + s^2\nabla_{E_{\beta}}\nabla_{\sigma'}\nabla_{\sigma'}E_{\gamma} + s^2\nabla_{E_{\beta}}R(\sigma',E_{\gamma})\sigma' = 0.
\end{align}
Following a similar calculation as for the time-time component with a limit of $s \rightarrow 0$, we have
\begin{align}
    |\partial_{s}^2\Gamma_{\beta t}^{~~~i}| \leq \tilde{C} + \tilde{C}'s\label{ineq:time-angular},
\end{align}
\begin{align}
    |\partial_{s}^2\Gamma_{\beta \gamma}^{~~~~i}| \leq \tilde{C} + \tilde{C}'s.\label{ineq:angular-angular}
\end{align}

Thus, following the same argument as used for $\Gamma_{tt}^{~~i}$, we have
\begin{align*}
    |\Gamma_{\beta t}^{~~~i}| < C_0s,~~ |\Gamma_{\beta \gamma}^{~~~~i}| < C_0s.
\end{align*}
Suppose $\max_{i,j,k}|\Gamma_{i j}^{~~k}(s)|$ fails to be bounded by $C_0s$ at some $s \neq 0$ in a sufficiently small $s$ region given in a time interval, then $\max_{i,j,k}|\partial_s\Gamma_{i j}^{~~k}(s)| - C_0$ should be non-negative. For inequalities \ref{ineq:time-time}, \ref{ineq:time-angular}, and \ref{ineq:angular-angular}, their first derivative is strictly negative if $\tilde{r}(t) < C_0/\tilde{C}$, i.e.\ there exists a sufficiently small $\tilde{r}(t)$ and $C_0 \in \mathbb{R}$ which contradicts that $\max_{i,j,k}|\partial_s\Gamma_{i j}^{~~k}(s)| - C_0 \geq 0$. \\

Since $z$ has unit length and $s$ is the geodesic parameter, the norm of the transverse coordinates is $|x| = s$. This completes the proof.

\end{proof}
\begin{prop}\label{prop:3}
   Given an $n$-dimensional, pseudo-Riemannian manifold $(\mathcal{M},g)$ and an envelopment $(\mathcal{M},\hat{\mathcal{M}},\psi)$, if $\lambda$ is an affinely parametrised geodesic in $\mathcal{M}$ and $\psi(\lambda) \rightarrow p \in \partial(\psi(\mathcal{M}))$, then there always exists an envelopment $(\mathcal{M},\hat{\mathcal{M}}',\psi')$ such that $\psi'(\lambda) \rightarrow q \in \partial(\psi'(\mathcal{M}))$, there exists no other geodesic $\gamma$ in $\mathcal{M}$ such that $\psi'(\gamma) $ ends at $ q $, and $p \vartriangleright q$. 
\end{prop}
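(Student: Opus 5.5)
We will build $\hat{\mathcal{M}}'$ with the normal-neighbourhood version of the Endpoint Theorem (Theorem \ref{theorem:endpoint}), but applied to a much thinner tube around $\lambda$ than the original proof uses. Reparametrise $\lambda$ affinely on $[0,1)$ (or $[0,b)$). Let $F(t,x)=\exp_{\lambda(t)}(x^iE_i(t))$ be the map of Section \ref{section 3}, and let $\tilde r(t)\le r(t)$ be the radius from Proposition \ref{prop:2}. On this radius $|\Gamma_{ij}^{~~k}(x)|\le C_0|x|$ in the coordinates $(t,x)$. Choose a smooth size function $f:[0,1)\to\mathbb{R}^+$ with $f(t)\le \tilde r(t)$ and $f(t)\to 0$ as $t\to 1$. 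We will tighten the decay rate of $f$ in the next step.

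Set $U=\{F(t,x): |x|<f(t)\}$. Because $\lambda$ has no accumulation point in $\mathcal{M}$, we may shrink $f$ so that $F$ is a diffeomorphism onto $U$ and $U$ is closed in a neighbourhood of the end of $\lambda$. The Endpoint Theorem exhaustion then gives an open embedding $\psi'$. In $\psi'$, the tube $U$ is sent to a region $\{(t,x):|x|<f(t)\}$ that pinches to the point $q$ at $t=1$, so $\psi'(\lambda)\to q$. We glue this chart to a smoothing of $\hat{\mathcal{M}}$ away from the tube, so the boundary points other than $q$ are unchanged. Any sequence whose $\psi'$-image converges to $q$ eventually lies in $U$ near $t=1$. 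By choosing $f$ small enough that the $t$-slices of $U$ lie inside any prescribed $\psi$-neighbourhood of $p$ (which is possible since $\psi(\lambda(t))\to p$), we get that every neighbourhood of $p$ contains $\psi'^{-1}$ of some neighbourhood of $q$. This is exactly $p\vartriangleright q$.

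The key step is uniqueness: no geodesic $\gamma\neq\lambda$ ends at $q$. If $\psi'(\gamma)\to q$, then $\gamma$ eventually stays in $U$, so in coordinates $\gamma=(t(\tau),x(\tau))$ with $|x|<f(t)\to0$ and $t\to1$. The geodesic equation reads $\ddot x^k=-\Gamma_{ij}^{~~k}\dot\gamma^i\dot\gamma^j$ and $\ddot t=-\Gamma_{ij}^{~~t}\dot\gamma^i\dot\gamma^j$. By Proposition \ref{prop:2} the right-hand sides are $O(|x|\,|\dot\gamma|^2)$. Hence, near $t=1$, $\gamma$ is a small perturbation of a straight line in $(t,x)$, with the perturbation controlled by a Gronwall estimate. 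Along such a line either $x$ is identically zero, or $|x|$ varies at a linear rate relative to $t$, that is, $|x(\tau)|\gtrsim c\,|t(\tau)-t_0|$ along a segment. Now choose $f$ decaying faster than any such linear growth, for instance $f(t)\le \epsilon(t)(1-t)$ with $\epsilon(t)\to0$ adjusted to $C_0$. Then a non-central geodesic must exit $U$ before $t\to1$, while the only geodesic with $x\equiv0$ is $\lambda$ itself. A $\gamma$ that merely reparametrises a tail of $\lambda$ counts as $\lambda$.

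The main obstacle is making this uniqueness estimate rigorous. Two issues arise. First, $C_0$ and $\tilde r$ may deteriorate as $t\to1$, so the Gronwall bound has to be carried out on compact $t$-slabs $[t_m,t_{m+1}]$ and $f$ chosen slab by slab. Second, we must exclude geodesics that enter the tube repeatedly and oscillate with $|x|\to0$. I would first try the comparison function $|x|^2$. Its second derivative along $\gamma$ is $2|\dot x|^2-O(|x|^2|\dot\gamma|^2)$, which makes $|x|^2$ strictly convex once the tube is thin enough. A strictly convex function cannot tend to $0$ unless $\gamma$ approaches $\lambda$ tangentially with $\dot x\to0$, and the uniqueness of solutions to the geodesic ODE then forces $x\equiv0$.
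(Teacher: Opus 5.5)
There is a genuine gap in the uniqueness step. Your construction of $\psi'$ and your argument for $p\vartriangleright q$ are essentially what the paper does. The extra gluing to ``a smoothing of $\hat{\mathcal{M}}$ away from the tube'' is unnecessary and not justified as stated: keeping $\hat{\mathcal{M}}$ near $p$ while attaching $q$ at the tip of the tube would make $\psi'(\lambda)$ converge to two points. Just take the envelopment the Endpoint Theorem provides.

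The gap lies in the case that matters for singularities, where the curvature along $\lambda$, and hence $C_0(t)$ and $1/\tilde r(t)$, is unbounded as $t\to1$.

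\emph{The linear-rate dichotomy fails.} Your claim ``either $x\equiv0$ or $|x|$ varies at a linear rate'' holds only when the coefficients of the transverse equation $d^2x/dt^2=G(t,x,dx/dt)$ stay bounded up to $t=1$. Suppose instead a transverse curvature component along $\lambda$ behaves like $-c/(1-t)^2$. Then $J''=cJ/(1-t)^2$ has the solution $J=(1-t)^{\alpha}$ with $\alpha=\tfrac12(1+\sqrt{1+4c})>1$. For example, the radial tidal term along a radial timelike geodesic falling into $r=0$ in Schwarzschild gives $c=4/9$, $\alpha=4/3$. So, at least to first order, neighbouring geodesics approach $\lambda$ superlinearly and stay inside $|x|<\epsilon(t)(1-t)$ unless $\epsilon$ is tied to the blow-up rate. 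Faster blow-up gives decay roughly like $\exp(-\int^t\sqrt{|K|})$.

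\emph{Uniqueness cannot be invoked at the endpoint.} In the same example $J\to0$ and $J'\to0$ as $t\to1$ while $J\not\equiv0$. The point $t=1$ is not in $\mathcal{M}$, and the coefficients need not be integrable there.

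\emph{The convexity claim is false in general.} At an instant where $\dot x=0$ one has $\frac{d^2}{d\tau^2}|x|^2=-2x_k\Gamma_{tt}^{~~k}\dot t^2$. This is negative whenever the curvature is focusing in the direction $x$; in positive curvature nearby geodesics oscillate about $\lambda$, so $|x|^2$ has interior maxima. To leading order the sign is invariant under $x\mapsto\mu x$, so thinning the tube does not help. Even granting convexity, reaching $x\to0$, $\dot x\to0$ does not give $x\equiv0$, for the reason above.

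What is missing is a bound on how fast a nontrivial geodesic can approach $\lambda$, uniform in $\gamma$ and depending only on the geometry near $\lambda$. This is what the paper's proof supplies.
\begin{itemize}
\item Parametrise $\gamma$ by $t$ and set $Q=|x|^2+|dx/dt|^2$. The geodesic equation with $|\Gamma|\le C_0(t)|x|$ gives $|dQ/dt|\le A(t)Q$ while $|dx/dt|$ stays bounded.
\item Integrating backwards from $t$ to the time $t_0$ at which $\gamma$ enters the tube gives $Q(t_0)\le Q(t)\exp\int_{t_0}^{t}A$.
\item Tube membership only controls $|x|$, so you also need an interpolation bound on $[t-\delta,t]$ with $\delta=1-t$, of the form $|dx/dt|\le C\bigl(f/\delta+\delta\sup|d^2x/dt^2|\bigr)$. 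This yields $Q(t)\le C(t)\,(f(t-\delta)/\delta)^2$.
\item Choose $f$ so that $C(t)\,(f(t-\delta)/\delta)^2\exp\int_0^tA\to0$. This forces $Q(t_0)=0$, uniformly in $t_0$.
\item Only at this \emph{interior} point does ODE uniqueness apply. There $x\equiv0$ is a solution because $\Gamma_{tt}^{~~a}=0$ on $\lambda$, so $\gamma=\lambda$.
\end{itemize}
Your slab-by-slab remark points in this direction. However, it lacks the backward Gronwall inequality, the a priori control of $dx/dt$, and a choice of $f$ tied to $\exp\int A$ rather than to a linear rate. Without these the proposal does not show that no other geodesic ends at $q$.
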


\begin{proof}

Using the local coordinates $(t,x^1,\dots,x^{n-1})$ defined above, we can write the geodesic equations for any $0 \leq i,j \leq n-1$ and $1 \leq a \leq n-1$ as 
\begin{align}
    \frac{d^2 x^a}{d\tau^2} + \Gamma^{~~a}_{ij}\frac{dx^i}{d\tau}\frac{d x^j}{d\tau} = 0 . \label{eq:40}
\end{align}
Rearranging the second derivative of $x^a$ in terms of $t$, we have
\begin{align}
    \frac{d^2 x^a}{dt^2} = \frac{1}{\dot{t}^2}\vast(-\frac{\ddot{t}}{\dot{t}}\dot{x}^a + \ddot{x}^a\vast) \label{eq:42}
\end{align}
where $\dot{x}^a$ is the derivative in terms of the affine parameter $\tau$. \\

We then obtain the equation for any $1 \leq a \leq n-1$ to control the nearby geodesic behaviour. $d^2x^a/dt^2$ for each $x^a$ is given by:
\begin{align}
    \frac{d^2x^a}{dt^2} =&~\Gamma^{~~t}_{ij}\frac{\dot{{x}}^a}{\dot{t}}\frac{\dot{{x}}^i}{\dot{t}}\frac{\dot{x}^j}{\dot{t}} - \Gamma^{~~a}_{i j}\frac{\dot{{x}}^i}{\dot{t}}\frac{\dot{{x}}^j}{\dot{t}} \nonumber\\
    =& - \Gamma^{~~a}_{t t} + (\Gamma^{~~t}_{t t}-2\Gamma^{~~a}_{a t})\frac{\dot{{x}}^a}{\dot{t}} - 2\Gamma_{t b}^{~~a}\frac{\dot{{x}}^b}{\dot{t}} \nonumber\\
    -& (\Gamma^{~~a}_{a a} - 2\Gamma^{~~t}_{{a} t})\vast(\frac{\dot{{x}}^a}{\dot{t}}\vast)^2 - \Gamma^{~~a}_{b c}\frac{\dot{{x}}^b}{\dot{t}}\frac{\dot{{x}}^c}{\dot{t}} - 2\Gamma_{a b}^{~~a} \frac{\dot{{x}}^a}{\dot{t}}\frac{\dot{{x}}^b}{\dot{t}} + 2\Gamma_{t b}^{~~t}\frac{\dot{{x}}^a}{\dot{t}}\frac{\dot{{x}}^b}{\dot{t}} \nonumber\\
    + & 2\Gamma^{~~t}_{a b}\vast(\frac{\dot{{x}}^a}{\dot{t}}\vast)^2\frac{\dot{{x}}^b}{\dot{t}} + \Gamma^{~~t}_{b c}\frac{\dot{{x}}^a}{\dot{t}}\frac{\dot{{x}}^b}{\dot{t}}\frac{\dot{{x}}^c}{\dot{t}} + \Gamma^{~~~t}_{a a}\vast(\frac{\dot{{x}}^a}{\dot{t}}\vast)^3   \label{eq:43}
\end{align}
where $ c \neq a \neq 0$ and $b \neq a \neq 0$. Using Proposition \ref{prop:2}, Equation \ref{eq:43} can be simplified to:
\begin{align}\label{eq:431}
    \vast|\frac{d^2x^a}{dt^2}\vast| \leq & |\Gamma^{~~a}_{tt}| + (|\Gamma^{~~t}_{tt}| + 2|\Gamma^{~~a}_{a t}|)\vast|\frac{\dot{x}^a}{\dot{t}}\vast| + 2|\Gamma_{t b}^{~~a}|\vast|\frac{\dot{x}^b}{\dot{t}}\vast| \nonumber \\
    + & (2|\Gamma^{~~t}_{a t}| + |\Gamma^{~~a}_{{a}{a}}|)\vast|\frac{\dot{x}^a}{\dot{t}}\vast|^2 + 2|\Gamma^{~~t}_{t b}|\vast|\frac{\dot{{x}}^a}{\dot{t}}\vast|\vast|\frac{\dot{x}^b}{\dot{t}}\vast|  + |\Gamma^{~~a}_{b c}|\vast|\frac{\dot{{x}}^b}{\dot{t}}\vast|\vast|\frac{\dot{{x}}^c}{\dot{t}}\vast| \nonumber\\
    + & 2|\Gamma_{a b}^{~~a}| \vast|\frac{\dot{{x}}^a}{\dot{t}}\vast|\vast|\frac{\dot{{x}}^b}{\dot{t}}\vast| + 2|\Gamma_{t b}^{~~t}|\vast|\frac{\dot{{x}}^a}{\dot{t}}\vast|\vast|\frac{\dot{{x}}^b}{\dot{t}}\vast| + |\Gamma^{~~t}_{a a}|\vast|\frac{\dot{x}^a}{\dot{t}}\vast|^3 + 2|\Gamma^{~~t}_{a b}|\vast|\frac{\dot{{x}}^a}{\dot{t}}\vast|^2\vast|\frac{\dot{x}^b}{\dot{t}}\vast| \nonumber\\
    + & |\Gamma^{~~t}_{b c}|\vast|\frac{\dot{{x}}^a}{\dot{t}}\vast|\vast|\frac{\dot{{x}}^b}{\dot{t}}\vast|\vast|\frac{\dot{{x}}^c}{\dot{t}}\vast|\nonumber \\
    \leq& C_0s + (2n-1)C_0s\vast|\frac{\dot{{x}}}{\dot{t}}\vast| +  (n^2 - 1)C_0s\vast|\frac{\dot{{x}}}{\dot{t}}\vast|^2 + (n-1)^2C_0s\vast|\frac{\dot{{x}}}{\dot{t}}\vast|^3.
\end{align}
We can express $\sum_{a=1}^{n-1}|d^2x^a/dt^2|$ using the inequality \ref{eq:431}:
\begin{align}\label{ineq:4}
    \sum_{a=1}^{n-1}\vast|\frac{d^2x^a}{dt^2}\vast| \leq& (n-1)C_0s + (n-1)(2n-1)C_0s\vast|\frac{\dot{{x}}}{\dot{t}}\vast|  \nonumber\\
    + & (n-1)(n^2-1)C_0s\vast|\frac{\dot{{x}}}{\dot{t}}\vast|^2 + (n-1)^3C_0s\vast|\frac{\dot{{x}}}{\dot{t}}\vast|^3 .
\end{align}
Let $q = \sqrt{\sum_{a = 1}^{n-1}(|x^a|^2 +  |dx^a/dt|^2)}$. We then substitute the inequality \ref{ineq:4} into the following equation:
\begin{align}
     2\dot{q}q &= \frac{d}{dt}q^2 = \sum_{a=1}^{n-1}\frac{d}{dt}\vast(|x^a|^2 + \vast|\frac{dx^a}{dt}\vast|^2\vast) = \sum_{a=1}^{n-1}2|x^a|\vast|\frac{dx^a}{dt}\vast| + 2\vast|\frac{dx^a}{dt}\vast|\vast|\frac{d^2x^a}{dt^2}\vast| \nonumber \\
    & \leq 2(n-1)|x|\vast|\frac{dx}{dt}\vast| + 2(n-1)\vast|\frac{dx}{dt}\vast|\vast((n-1)C_0s + (n-1)(2n-1)C_0s\vast|\frac{dx}{dt}\vast|  \nonumber\\
    & +  (n-1)(n^2-1)C_0s\vast|\frac{dx}{dt}\vast|^2 + (n-1)^3C_0s\vast|\frac{dx}{dt}\vast|^3\vast). \label{eq:44}
\end{align}
For any $a$ such that $2|x^{a}||dx^{a}/dt| \leq q^2$, $|dx^{a}/dt|^2 \leq q^2$, $|x^{a}|^2 \leq q^2$, these are substituted into equation \ref{eq:44}:
\begin{align}
    \frac{dq}{dt}& \leq \frac{1}{2}n(n-1)C_0q + \frac{1}{2}(n-1)^2(2n-1)C_0q^2  \nonumber\\
    &+ \frac{1}{2}(n-1)^2(n^2-1)C_0q^3 + \frac{1}{2}(n-1)^4C_0q^4\label{eq:45},
\end{align}
\begin{align}
    \frac{dq}{dt} < \frac{1}{2}n(n-1)C_0q ~~(\text{when}~|q| < 1).
\end{align}
This induces a bound for $|x^a|$ and $|dx^a/dt|$:
\begin{align}
    |x^a| < q(0)e^{\bar{C}(t)}\label{eq:47}\\
    \vast|\frac{dx^a}{dt}\vast| < q(0)e^{\bar{C}(t)},\label{eq:48}
\end{align}
where $\bar{C}(t) = 1/2n(n-1)C_0t$. \\

We now analyse the asymptotic behaviour as $t \rightarrow 1$ under the assumption of bounded Christoffel symbols. Specifically, we have $|\Gamma_{ij}^{~~k}| \leq C_0s$ on $[0,t] \times \{|sz| \leq f_{+}(t)\} $ for any $t \in I = [0,1)$ where $f_+$ is a function $f_+: [0,1) \rightarrow \mathbb{R}^+$. From our previous bounds inequality \ref{ineq:4}, we have:
\begin{align}
    |(x^i )''| < (n-1)C_0(|x^i| + 3|x^i||dx^i/dt|)
\end{align}
where $|x^i| < C_0$ and $(x^i)'$ is the first derivative with respect to $t$. By the Taylor series expansion, we get 
\begin{align*}
    &|x^i(t+h) - x^i(t) - (x^i)'(t)|I|| \leq \frac{1}{2}\sup_{t \in I}\vast|\frac{d^2x^i}{dt^2}\vast||I|^2\\
    &\vast|\frac{dx^i}{dt}\vast| \leq \frac{\sup_{t \in I}|x^i|}{|I|} + \frac{1}{2}\sup_{t \in I}\vast|\frac{d^2x^i}{dt^2}\vast||I|\\
    &\ \ \ \ \ \ \ \leq \frac{\sup_{t \in I}|x^i|}{|I|} + \frac{1}{2}(n-1)C_0\vast(\sup_{t \in I}|x^i||I| + \sup_{t \in I}|x^i|\vast|\frac{dx^i}{dt}\vast||I|\vast)\\
    &\vast(1 - \frac{1}{2}(n-1)C_0|I|\sup_{t \in I}|x^i|\vast)\vast|\frac{dx^i}{dt}\vast| \leq \vast(\frac{1}{|I|} + \frac{1}{2}(n-1)C_0|I|\vast)\sup_{t \in I}|x^i|\\
    &\vast|\frac{dx^i}{dt}\vast|\leq \frac{\sup_{t \in I}|x^i|}{1 - 1/2(n-1)C_0|I|\sup_{t \in I}|x^i|} \vast(\frac{1}{|I|} + \frac{1}{2}(n-1)C_0|I|\vast).
\end{align*}
With a suitable choice of $f_+(t) = C(1-t)^2$ and a small enough interval $I = [1 - \epsilon,1)$, $\sup_{t \in I}|x^i|/\epsilon \leq f_+(t)/\epsilon \leq C\epsilon$ on $I$. We then have that $|dx^i/dt| \rightarrow 0$ as $t \rightarrow 1$.\\

Now we return to $q = {|x^i |}^2 + {|(x^i)'|}^2$. We have that $q(t) \rightarrow 0$ as $t \rightarrow 1$ from the above inequality.
\begin{align*}
    q(0) = q(t)e^{\int_0^{t}\bar{C}(\tau) d\tau}
\end{align*}
We then obtain
\begin{align*}
    q(0) \leq q(t)e^{\frac{1}{2}C't^2} \rightarrow q(1)e^{\frac{1}{2}C'} \rightarrow 0~~\text{as}~t \rightarrow 1.
\end{align*}
This is because $e^{\int_0^{t}\bar{C} (\tau) d\tau}$ is bounded from the bounded curvature asymptotic behaviour, so $q(0) = 0$ which implies that the geodesic within $f_+(t)$ is the central geodesic.\\

We now address the more complicated case where the curvature may become unbounded as $t \rightarrow 1$. In this case, we cannot rely on the local bound, but we can still establish convergence through a refined barrier function approach.\\

Consider the behaviour on intervals $[t - \delta(t), t]$ following the previous argument given the bound on $|\Gamma_{ij}^{~~k}|$ as $t \rightarrow 1$. Using the interpolation inequality, we bound the first derivatives in terms of the function values and the second derivatives:
\begin{align*}
    \sup_{t \in I} |(x^i)'| &\leq a\sup_{t \in I}|x^i|^{\frac{1}{2}}\sup_{t \in I}|(x^i)''|^{\frac{1}{2}}\\
    & \leq a\sup_{t \in I}|x^i|^{\frac{1}{2}}\vast((n-1)(C_0K(t))\vast(\frac{|x^i|}{\delta} + 3|x^i||(x^i)'|\vast)\vast)^{\frac{1}{2}}\\
    &\leq a\sup_{t \in I}|x^i|((n-1)(C_0K(t)))^\frac{1}{2}\vast(\frac{1}{\delta} + 3|(x^i)'|\vast)^{\frac{1}{2}}\\
    & \leq af_+(t-\delta)((n-1)(C_0K(t)))^\frac{1}{2}\vast(\frac{1}{\delta} + 3|(x^i)'|\vast)^{\frac{1}{2}}\\
    &\leq af_+(t-\delta)((n-1)(C_0K(t)))^\frac{1}{2}\vast(\frac{1}{\delta} + 6|(x^i)'|\vast)
\end{align*}
where $a \in \mathbb{R}$. Because $f_+(t)$ should be small, we choose $af_+(t-\delta)((n-1)(C_0K(t)))^\frac{1}{2} < 1/12$.\\

The above inequality holds for a sufficiently large $K(t)$ as $t \rightarrow 1$. Thus, we have
\begin{align*}
     \sup_{t \in I} |(x^i)'| \leq 2af_+(t-\delta(t))((n-1)(C_0K(t)))^\frac{1}{2}\frac{1}{\delta}.
\end{align*}

We choose 
\begin{align}\label{f(t)}
    f_+^2(\tau) = \delta^2(t)(1-t)^2/M(t),
\end{align} 
where $\tau = t - \delta(t)$, $M(t) = \bar{C}'(t)\exp(\int_0^t\bar{C}(\tau)d\tau)$ and $\delta(t) = 1 - t$ for any $t \in (0,t]$:
\begin{align*}
    q(t) < \vast(\frac{f_+(t-\delta(t))}{\delta(t)}\bar{C}'(t)\vast)^2 + f_+^2(t)
\end{align*}
where $\bar{C}'(t) = 2a((n-1)(C_0K(t)))^\frac{1}{2}$.
\begin{align*}
    q(0) \leq q(t)e^{\int_0^tC'(\tau)d\tau} \leq \vast(\frac{f_+(t-\delta(t))}{\delta(t)}\vast)^2M(t) = (1-t)^2 \rightarrow 0~~\text{as}~t \rightarrow 1.
\end{align*}

In the above setting, $q({t}) \rightarrow 0$ as $t \rightarrow 1$ such that $q(0) \rightarrow 0$ by making $q(t_k)e^{\int_0^{t}(C_0K(t))^{\frac{1}{2}}\bar{C}tdt} \rightarrow 0$. This implies that the only geodesic that satisfies this condition is the central geodesic.\\

Following the previous discussion, in both cases, the choice of $f_+(t)$ forms a new neighbourhood inside $F(t,sz)$. The function $f_+(t)$ here is equivalent to the function $f(t)$ given in the proof of the Endpoint Theorem \cite{Scott_2021}. Therefore, by the choice of $f_{+}$ above, we construct the charts $\mu$ and $\mu'$, with which a new embedding $\psi': \mathcal{M} \rightarrow \hat{\mathcal{M}'}$ is formed such that $\psi'(\lambda) \rightarrow q \in \partial\psi'(\mathcal{M})$, and for which no other geodesic $\psi'(\gamma(t))$ will stay within the normal neighbourhood as $t \rightarrow 1$, i.e. $\lambda$ is the only geodesic that enters and remains within this neighbourhood of $q \in \partial\psi'(\mathcal{M})$.\\

Now let $\{x_i\}$ be a sequence of points in $\mathcal{M}$ such that $\psi'(x_i) \rightarrow q$. There exists an open neighbourhood $U$ of $q$ in $\hat{\mathcal{M}'}$ such that $$ U \cap \psi'(\mathcal{M}) \subset \psi'(\mu).$$ Since $\psi'(x_i) \rightarrow q$, for some $j$, the sequence $\psi'(x_i) \in \psi'(\mu)$ for all $i \geq j$. This means that $\psi(x_i) \in \psi(\mu)$ for all $i \geq j$. Since the sequence $\{\psi(x_i)\}$ must approach the boundary of $\psi(\mathcal{M})$, and for the normal neighbourhood of $\lambda$ used to define the chart $\mu$ the function $f_{+}(t) \rightarrow 0$ as $t \rightarrow 1$, it must be the case that $\psi(x_i) \rightarrow p$. So from Theorem 19 of \cite{scott1994abstract}, $p \vartriangleright q$.



\end{proof}

Proposition \ref{prop:3} holds if the metric is at least $C^{2,1}$, since the connection in the Jacobi equation involves the derivative of the Riemann tensor. Lower regularity cases than this change the normal neighbourhood construction in the proof, so there is no simple method to produce the same conclusion with a lower regularity than $C^{2}$.\\

Proposition \ref{prop:3} is also true for complete geodesics and just needs a minor change of the proof where it depends on the finiteness of the affine parameter of the geodesic. This relates to the choice of the barrier function $f_+(t)$, but this does not fundamentally change the proof, so the result for a complete geodesic also prevents other geodesics from having the same endpoint.

\section{Properties of a Pure Singularity}\label{section 4}
From Proposition \ref{prop:3}, a special case which is immediately apparent is a pseudo-Riemannian manifold without intertwined geodesics. This concept was first introduced by Piotr Chru\'{s}ciel in \cite{chrusciel2006conformalboundaryextensionslorentzian} to explain the null geodesic behaviour in the Taub-NUT spacetime. In the abstract boundary context, we also describe null geodesics in the Misner spacetime via their limiting behaviour (not necessarily terminating at an endpoint). In this section we define curve limiting behaviour in a fixed envelopment and compare it with the intertwined geodesics introduced by Graf and Beld-Serrano \cite{Graf_2024}. This framework enables us to analyse the topological structure of a pure singularity via the position of the associated embedding constructed from Proposition~\ref{prop:3} in the cover relation chain.\\

We denote by $\pi^{T\mathcal{M}}:T\mathcal{M} \rightarrow \mathcal{M}$ the natural projection map. We also fix a complete Riemannian background metric $h^{T\mathcal{M}}$ on $T\mathcal{M}$. Given a fixed $X \in T\mathcal{M}$ and $r>0$, let $B_r(X)$ denote the open ball of radius $r$ in $T\mathcal{M}$ around $X$. Moreover, for any $X \in T\mathcal{M}$, let $\gamma_{X}:(a_X,b_X) \rightarrow \mathcal{M}$ be the unique inextendible geodesic in $\mathcal{M}$ with initial data $\gamma_X(0) = \pi^{T\mathcal{M}}(X)$, $\dot{\gamma}_{X}(0) = X$. Note that $X \rightarrow a_X$ is upper semi-continuous and $X \rightarrow b_X$ is lower semi-continuous.
\begin{defn}(Thickening).
    Let $(\mathcal{M},g,\hat{\mathcal{M}},\psi,\mathcal{C})$ be an envelopment of a pseudo-Riemannian manifold $(\mathcal{M},g)$. For $X \in T\mathcal{M}$ and $r>0$ the thickening of radius $r$ generated from $X$ is
    \begin{align}\label{Thickening}
        O_{X,r} \equiv O^{\partial}_{X,r} \cup O^{int}_{X,r}
    \end{align}
    where the boundary thickening $O^{\partial}_{X,r}$ and the interior thickening $O^{int}_{X,r}$ are defined as follows:
    \begin{align}
        O^{int}_{X,r} \equiv = \{(\psi \circ \gamma_{Y})((0,b_{Y})):Y \in B_r(X) \}
    \end{align}
    and 
    \begin{align}
        O^{\partial}_{X,r} \equiv = \{\lim_{t \rightarrow b^{-}_Y}(\psi \circ \gamma_{Y})((0,b_{Y})):Y \in B_r(X) ~\text{s.t.\ this limit exists in}~\hat{\mathcal{M}}\}.
    \end{align}
\end{defn}

The following definition of intertwined geodesics is that given by Graf and Beld-Serrano \cite{Graf_2024}, but we have generalised their version which focussed on timelike geodesics to extend it to the class of all geodesics.

\begin{defn}
    (Intertwined Geodesics)\label{defn:intertwined}. Let $\gamma :[0,b_{Y_1}) \rightarrow \mathcal{M}$, $Y_1 := \dot{\gamma}(0)$ and $\gamma':[0,b_{Y_2}) \rightarrow \mathcal{M}$, $Y_2 := \dot{\gamma}'(0)$, be two inextendible geodesics without limit points in an at least $C^2$ pseudo-Riemannian manifold $(\mathcal{M}, g)$. Then, we say that $\gamma$ and $\gamma'$ are not intertwined provided that one of the following conditions holds:
    \begin{enumerate}
        \item For any radii $r > 0, \rho > 0$ there exist $s_1 \in (0, b_{Y_1})$, $s_2 \in (0, b_{Y_2})$ such that $\gamma([s_1, b_{Y_1})) \subset O^{\mathcal{M}}_{Y_2,\rho}$ and $\gamma'([s_2, b_{Y_2})) \subset O^{\mathcal{M}}_{Y_1,r}$,
        \item There exists $s_1 \in (0, b_{Y_1})$, $ s_2 \in (0, b_{Y_2})$ and radii $r, \rho > 0$ such that $O^{\mathcal{M}}_{\dot{\gamma}(s_1),r} \cap O^{\mathcal{M}}_{\dot{\gamma}'(s_2),\rho} = \emptyset$.
    \end{enumerate}
    If neither of these conditions hold, then we say that $\gamma$ and $\gamma'$ are {\it intertwined}.
\end{defn}

Under this definition, for two geodesics to be intertwined requires $O^{\mathcal{M}}_{\dot{\gamma}(s_1),r} \cap O^{\mathcal{M}}_{\dot{\gamma}'(s_2),\rho} \neq \emptyset$. This splits into two cases: 1. The two intertwined geodesics intersect infinitely many times. 2. The thickenings of the two geodesics intersect, but the geodesics themselves do not intersect infinitely many times.\\


We now introduce a new definition of intertwined geodesics which is well-suited to the abstract boundary framework and relates to envelopments of a pseudo-Riemannian manifold. It enlarges the class of curves under consideration from the class of geodesics to a class $\mathcal{C}$ of curves in $\mathcal{M}$ which satisfies the bounded parameter property.

\begin{defn}(Intertwined Curves).\label{defn:Sintertwined}
    Let $(\mathcal{M},g,\hat{\mathcal{M}},\psi,\mathcal{C})$ be an envelopment of a pseudo-Riemannian manifold  $(\mathcal{M},g,\mathcal{C})$. We know that $\hat{\mathcal{M}}$ admits a complete Riemannian metric $\hat{h}$ which induces a complete distance $d_{\hat{h}}$ on $\hat{\mathcal{M}}$. Consider the two curves in $\mathcal{C}$, $\gamma:[0,b_{\gamma}) \rightarrow \mathcal{M}$ and $\gamma':[0,b_{\gamma'}) \rightarrow \mathcal{M}$ which are inextendible curves without limit points in $\mathcal{M}$. We say that $\gamma$ and $\gamma'$ are {\it intertwined curves} if:
    \begin{enumerate}
        \item there exists an increasing infinite sequence of real numbers $\{x_i\}_{i \in \mathbb{N}}$ in $[0,b_{\gamma})$ with $x_i \rightarrow b_{\gamma}$ as $i \rightarrow \infty$ and an increasing infinite sequence of real numbers $\{y_i\}_{i \in \mathbb{N}}$ in $[0,b_{\gamma'})$ with $y_i \rightarrow b_{\gamma'}$ as $i \rightarrow \infty$ such that $d_{\hat{h}}(\psi(\gamma(x_i)),\psi(\gamma'(y_i))) \rightarrow 0$ as $i \rightarrow \infty$, and;
        \item there exists an increasing infinite sequence of real numbers $\{x_i\}_{i \in \mathbb{N}}$ in $[0,b_{\gamma})$ with $x_i \rightarrow b_{\gamma}$ as $i \rightarrow \infty$ and an increasing infinite sequence of real numbers $\{y_i\}_{i \in \mathbb{N}}$ in $[0,b_{\gamma'})$ with $y_i \rightarrow b_{\gamma'}$ as $i \rightarrow \infty$ such that $d_{\hat{h}}(\psi(\gamma(x_i)),\psi(\gamma'(y_i)))$ has a lower bound of $l$ as $i \rightarrow \infty$, where $l \in \mathbb{R}^+ \cup \{\infty\}$.
    \end{enumerate}
\end{defn}

We note that Definition \ref{defn:intertwined} is an embedding-independent concept, whereas Definition \ref{defn:Sintertwined} is designed around the limiting behaviour of the curves, and so it is envelopment-dependent -- it effectively provides a topological constraint on the choice of curves in $\mathcal{C}$. See Figure \ref{fig1} which depicts two different envelopments of the Misner spacetime, and illustrates how a pair of geodesics can be not intertwined in one envelopment but intertwined in another envelopment in accordance with Definition \ref{defn:Sintertwined}, but their classification remains unchanged using Definition \ref{defn:intertwined}. With Definition \ref{defn:Sintertwined}, for the envelopment $\psi$, the classification of the pair of curves as intertwined or not intertwined is invariant under the choice of complete Riemannian metric $\hat{h}$ on $\hat{\mathcal{M}}$.\\

\begin{figure}[htbp]
\centerline{\includegraphics[scale=0.6]{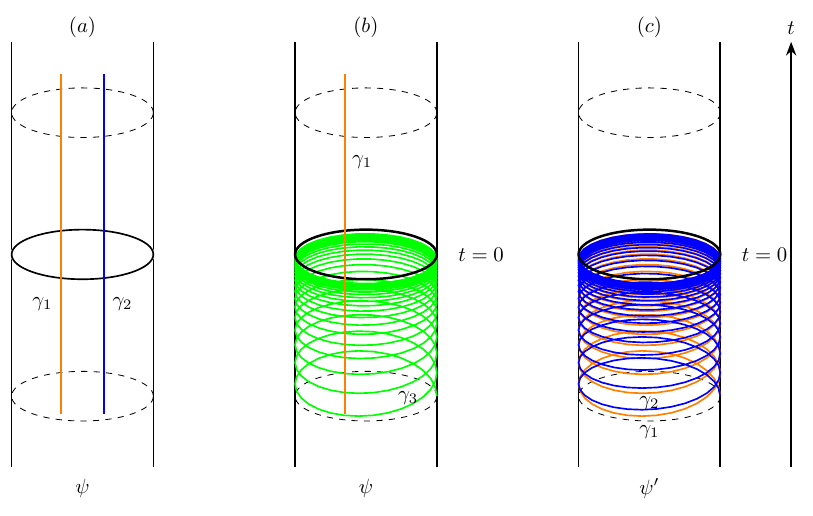}}
\caption{This diagram, in three parts, shows the difference between Graf and Beld-Serrano's intertwined geodesic concept and our new intertwined curve concept, using two particular envelopments $\psi$ and $\psi'$ of the Misner spacetime ($t<0$), in each case considering two specific geodesics as they approach $t=0$. $(a)$ depicts two geodesics (both straight) in the envelopment $\psi$ which are not intertwined by both Definition \ref{defn:intertwined} and Definition \ref{defn:Sintertwined}, $(b)$ shows a pair of geodesics (one straight and one spiralling) in the envelopment $\psi$ which are intertwined by both Definition \ref{defn:intertwined} and Definition \ref{defn:Sintertwined}, and $(c)$ depicts the two geodesics $\gamma_1$ and $\gamma_2$ which are straight in the envelopment $\psi$ (see $(a)$) but have become spiralling in the envelopment $\psi'$, and they are intertwined by Definition \ref{defn:Sintertwined} but not intertwined by Definition \ref{defn:intertwined}.}
\label{fig1}
\end{figure}

The new definition can be readily applied to categorise how pairs of inextendible curves without limit points in $\mathcal{M}$ approach the boundary in an envelopment of a pseudo-Riemannian manifold  $(\mathcal{M},g,\mathcal{C})$. The following proposition considers the often encountered problematic case where one curve approaches a boundary point $p$ as an endpoint, and the other curve also approaches $p$, but not as an endpoint. 
\begin{prop}\label{prop:intertwined} 
    Let $(\mathcal{M},g,\hat{\mathcal{M}},\psi,\mathcal{C})$ be an envelopment of a pseudo-Riemannian manifold  $(\mathcal{M},g,\mathcal{C})$. Consider the two curves in $\mathcal{C}$, $\gamma:[0,b_{\gamma}) \rightarrow \mathcal{M}$ and $\gamma':[0,b_{\gamma'}) \rightarrow \mathcal{M}$ which are inextendible curves without limit points in $\mathcal{M}$. If $\psi(\gamma(t)) \rightarrow p \in  \partial\psi(\mathcal{M})$ as $t \rightarrow b_{\gamma}$ and $\psi(\gamma'(t))$ approaches $p$ as $t \rightarrow b_{\gamma'}$, but not as an endpoint, then $\gamma$ and $\gamma'$ are intertwined curves.
\end{prop}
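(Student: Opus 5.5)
The plan is to verify the two conditions of Definition~\ref{defn:Sintertwined} directly, working in $\hat{\mathcal{M}}$ with the complete distance $d_{\hat{h}}$. I will use the endpoint hypothesis on $\gamma$ for both conditions. The hypothesis that $\gamma'$ approaches $p$ only as a limit point will supply the sequence for condition 1. The failure of $\gamma'$ to have $p$ as an endpoint will supply the sequence for condition 2.

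For condition 1, since $p$ is a limit point of $\psi\circ\gamma'$, there is an increasing sequence $y_i\to b_{\gamma'}$ with $d_{\hat h}(\psi(\gamma'(y_i)),p)\to 0$. Since $p$ is an endpoint of $\psi\circ\gamma$, any increasing sequence $x_i\to b_\gamma$ (say $x_i = b_\gamma - (b_\gamma)/ (i+1)$, or $x_i=i$ if $b_\gamma=\infty$) satisfies $d_{\hat h}(\psi(\gamma(x_i)),p)\to0$. The triangle inequality then gives
\begin{equation*}
d_{\hat h}\bigl(\psi(\gamma(x_i)),\psi(\gamma'(y_i))\bigr)\le d_{\hat h}\bigl(\psi(\gamma(x_i)),p\bigr)+d_{\hat h}\bigl(p,\psi(\gamma'(y_i))\bigr)\to 0 .
\end{equation*}

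For condition 2, $p$ is not an endpoint of $\psi\circ\gamma'$. By the definition of endpoint, there is then an increasing sequence $y_i\to b_{\gamma'}$ such that $\psi(\gamma'(y_i))\not\to p$. Passing to a subsequence, I may assume there is $\varepsilon>0$ with $d_{\hat h}(\psi(\gamma'(y_i)),p)\ge\varepsilon$ for all $i$. Choosing $x_i\to b_\gamma$ increasing as before, for large $i$ I have $d_{\hat h}(\psi(\gamma(x_i)),p)<\varepsilon/2$. The reverse triangle inequality then gives
\begin{equation*}
d_{\hat h}\bigl(\psi(\gamma(x_i)),\psi(\gamma'(y_i))\bigr)\ge \varepsilon-\varepsilon/2=\varepsilon/2 .
\end{equation*}
After discarding finitely many terms, this yields the positive lower bound $l=\varepsilon/2$. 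The case where $\psi(\gamma'(y_i))$ leaves every compact set is covered by the same estimate, and it corresponds to $l$ being possibly $\infty$ in the definition.

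The only delicate point is the subsequence extraction in condition 2. The negation of ``$\psi(\gamma'(t))\to p$ for every sequence'' produces a single sequence that does not converge to $p$. That alone does not give a uniform lower bound, so I must pass to a subsequence staying outside a fixed ball $B_\varepsilon(p)$, keeping it increasing and tending to $b_{\gamma'}$. The matching sequence $x_i$ need not be related to $y_i$ at all, because $\gamma$ has an endpoint. Beyond this, the argument uses only metrisability of $\hat{\mathcal M}$ by $d_{\hat h}$; completeness of $\hat h$ is not needed.
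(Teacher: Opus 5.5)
Your proposal is correct and follows essentially the same route as the paper: for condition 1 you pair a sequence along $\gamma'$ converging to $p$ with one along $\gamma$, and for condition 2 you use the failure of the endpoint property to get a sequence along $\gamma'$ staying outside a fixed ball $B_\varepsilon(p)$. Your triangle-inequality bound of $\varepsilon/2$ for large $i$ is slightly more careful than the paper, which simply asserts that $\varepsilon$ itself is the lower bound.
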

\begin{proof}
    Let $d$ be a complete distance on $\hat{\mathcal{M}}$. Since $\psi(\gamma(t)) \rightarrow p$ as $t \rightarrow b_{\gamma}$, there exists an increasing infinite sequence of real numbers $\{x_i\}_{i \in \mathbb{N}}$ in $[0,b_{\gamma})$ with $x_i \rightarrow b_{\gamma}$ as $i \rightarrow \infty$ such that $\psi(\gamma(x_i)) \rightarrow p$ as $i \rightarrow \infty$. Since $p$ is a limit point of the curve $\psi(\gamma')$, there exists an increasing infinite sequence of real numbers $\{y_i\}_{i \in \mathbb{N}}$ in $[0,b_{\gamma'})$ with $y_i \rightarrow b_{\gamma'}$ as $i \rightarrow \infty$ such that $\psi(\gamma'(y_i)) \rightarrow p$ as $i \rightarrow \infty$. Since the two sequences in $\psi(\mathcal{M})$ both end at $p$, 
    \begin{align*}
        d(\psi(\gamma(x_i)),\psi(\gamma'(y_i))) \rightarrow 0~~\text{as}~~i \rightarrow \infty.
    \end{align*}
    Now because $p$ is a limit point of $\psi(\gamma')$, and not an endpoint, there exists an $\epsilon > 0$ and an increasing infinite sequence of real numbers $\{z_i\}_{i \in \mathbb{N}}$ in $[0,b_{\gamma'})$ with $z_i \rightarrow b_{\gamma'}$ as $i \rightarrow \infty$ such that for all $i$, $\psi(\gamma'(z_i)) \notin B_{\epsilon}(p)$. This means that $\epsilon$ is a lower bound of $d(\psi(\gamma(x_i)),\psi(\gamma'(z_i)))$ as $i \rightarrow \infty$. Thus $\gamma$ and $\gamma'$ are intertwined curves.
\end{proof}

\begin{cor} \label{cor:0}
    Let $(\mathcal{M},g,\hat{\mathcal{M}},\psi,\mathcal{C})$ be an envelopment of a pseudo-Riemannian manifold  $(\mathcal{M},g,\mathcal{C})$. Consider the two curves in $\mathcal{C}$, $\gamma:[0,b_{\gamma}) \rightarrow \mathcal{M}$ and $\gamma':[0,b_{\gamma'}) \rightarrow \mathcal{M}$ which are inextendible curves without limit points in $\mathcal{M}$. If both curves $\psi(\gamma(t))$ and $\psi(\gamma'(t))$ approach $p \in  \partial\psi(\mathcal{M})$ as $t \rightarrow b_{\gamma}$ and $t \rightarrow b_{\gamma'}$, but not as endpoints, then $\gamma$ and $\gamma'$ are intertwined curves.
\end{cor}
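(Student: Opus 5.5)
Both curves have $p$ as a limit point but not as an endpoint, so I would verify the two conditions of Definition \ref{defn:Sintertwined} directly, following the proof of Proposition \ref{prop:intertwined}. Fix a complete Riemannian metric $\hat{h}$ on $\hat{\mathcal{M}}$ with distance $d=d_{\hat{h}}$. The only new feature is that we may no longer use the fact that $\psi(\gamma(t))\rightarrow p$ along every sequence. Since the argument there only used a sequence along $\gamma$ converging to $p$ together with the non-endpoint behaviour of $\gamma'$, it should go through essentially unchanged.

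\textbf{Condition 1.} Because $p$ is a limit point of $\psi(\gamma)$, there is an increasing sequence $x_i\rightarrow b_{\gamma}$ with $\psi(\gamma(x_i))\rightarrow p$. Likewise, since $p$ is a limit point of $\psi(\gamma')$, there is an increasing sequence $y_i\rightarrow b_{\gamma'}$ with $\psi(\gamma'(y_i))\rightarrow p$. The triangle inequality then gives $d(\psi(\gamma(x_i)),\psi(\gamma'(y_i)))\le d(\psi(\gamma(x_i)),p)+d(p,\psi(\gamma'(y_i)))\rightarrow 0$.

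\textbf{Condition 2.} Since $p$ is not an endpoint of $\psi(\gamma')$, there exist $\epsilon>0$ and an increasing sequence $z_i\rightarrow b_{\gamma'}$ with $\psi(\gamma'(z_i))\notin B_{2\epsilon}(p)$ for all $i$. Pair these with the sequence $x_i$ from Condition 1, passing to a subsequence if needed so that both are increasing. Because $\psi(\gamma(x_i))\rightarrow p$, for $i$ large we have $d(\psi(\gamma(x_i)),p)<\epsilon$. The reverse triangle inequality then gives $d(\psi(\gamma(x_i)),\psi(\gamma'(z_i)))\ge 2\epsilon-\epsilon=\epsilon$, so the distances are bounded below by $l=\epsilon>0$ as $i\rightarrow\infty$. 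This is exactly the second condition, and the two conditions together show that $\gamma$ and $\gamma'$ are intertwined.

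The only point needing care is the bookkeeping of sequences: the definition asks for increasing sequences tending to $b_\gamma$ and $b_{\gamma'}$ respectively, so the sequences should be chosen, by passing to subsequences, so that they are strictly increasing and indexed compatibly. I would also note that the hypothesis that $\gamma$ does not end at $p$ is not needed. The corollary in fact follows verbatim from the proof of Proposition \ref{prop:intertwined}, once its first step is replaced by the extraction of a convergent sequence from the limit-point property, so no real obstacle is expected.
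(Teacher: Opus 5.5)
Your proposal is correct and matches the paper's proof: the paper likewise notes that $\psi(\gamma)$ approaching $p$ still gives an increasing sequence $x_i\rightarrow b_\gamma$ with $\psi(\gamma(x_i))\rightarrow p$, and then reruns the proof of Proposition \ref{prop:intertwined} unchanged. Your use of the $2\epsilon$-ball with the reverse triangle inequality in Condition 2 also tightens a step the paper glosses, since the paper simply asserts that $\epsilon$ itself is a lower bound for $d(\psi(\gamma(x_i)),\psi(\gamma'(z_i)))$.
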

\begin{proof}
    The proof follows the proof of Proposition \ref{prop:intertwined}. Although the curve $\psi(\gamma(t))$ now approaches $p$, but not as an endpoint, there still exists an increasing infinite sequence of real numbers $\{x_i\}_{i \in \mathbb{N}}$ in $[0,b_{\gamma})$ with $x_i \rightarrow b_{\gamma}$ as $i \rightarrow \infty$ such that $\psi(\gamma(x_i)) \rightarrow p$ as $i \rightarrow \infty$. The rest of the proof is the same.
\end{proof}

Using Definition \ref{defn:Sintertwined} together with Proposition \ref{prop:intertwined} and Corollary \ref{cor:0} we can now provide a complete classification of all pairs of curves which approach a boundary point of an envelopment.

\begin{prop}\label{prop:intertwined_classification}
    Let $(\mathcal{M},g,\hat{\mathcal{M}},\psi,\mathcal{C})$ be an envelopment of a pseudo-Riemannian manifold  $(\mathcal{M},g,\mathcal{C})$. Consider the two curves in $\mathcal{C}$, $\gamma:[0,b_{\gamma}) \rightarrow \mathcal{M}$ and $\gamma':[0,b_{\gamma'}) \rightarrow \mathcal{M}$ which are inextendible curves without limit points in $\mathcal{M}$. If both curves $\psi(\gamma(t))$ and $\psi(\gamma'(t))$ approach $p \in  \partial\psi(\mathcal{M})$ as $t \rightarrow b_{\gamma}$ and $t \rightarrow b_{\gamma’}$, then:

\begin{enumerate}
        \item If both curves have $p$ as an endpoint, $\gamma$ and $\gamma'$ are not intertwined curves,
        \item If one curve approaches $p$ as an endpoint, but the other does not, $\gamma$ and $\gamma'$ are intertwined curves,
        \item If both curves approach $p$, but not as an endpoint, $\gamma$ and $\gamma'$ are intertwined curves.
    \end{enumerate}
\end{prop}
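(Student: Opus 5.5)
The plan is to treat the three cases of Proposition \ref{prop:intertwined_classification} separately. Cases 2 and 3 are already covered by earlier results, so almost all of the new work is in case 1.

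\emph{Cases 2 and 3.} Case 2 is exactly Proposition \ref{prop:intertwined}. If the curve having $p$ as an endpoint is $\gamma'$ rather than $\gamma$, I will swap the roles of $\gamma$ and $\gamma'$. This swap is legitimate because Definition \ref{defn:Sintertwined} is symmetric under exchanging the two curves: one exchanges the sequences $\{x_i\}$ and $\{y_i\}$, and $d_{\hat h}$ is symmetric. Case 3 is Corollary \ref{cor:0} verbatim.

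\emph{Case 1.} Assume $\psi(\gamma(t))\to p$ and $\psi(\gamma'(t))\to p$, both as endpoints. I must show that condition 2 of Definition \ref{defn:Sintertwined} fails. Take any increasing sequences $x_i\to b_\gamma$ and $y_i\to b_{\gamma'}$. By the endpoint property, every such sequence satisfies $\psi(\gamma(x_i))\to p$ and $\psi(\gamma'(y_i))\to p$ in $\hat{\mathcal M}$. Since $d_{\hat h}$ induces the manifold topology of $\hat{\mathcal M}$, the triangle inequality gives
\begin{align*}
d_{\hat h}(\psi(\gamma(x_i)),\psi(\gamma'(y_i)))\le d_{\hat h}(\psi(\gamma(x_i)),p)+d_{\hat h}(p,\psi(\gamma'(y_i)))\to 0 .
\end{align*}
So no pair of sequences can have distances bounded below by some $l>0$ eventually. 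Condition 2 therefore fails, and by definition the curves are not intertwined. Condition 1 does hold here, but that is irrelevant, since being intertwined requires both conditions.

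\emph{Main obstacle.} The only delicate point is interpretive. I must read condition 2 of Definition \ref{defn:Sintertwined} as requiring $\liminf_i d_{\hat h}(\cdot,\cdot)\ge l>0$, and condition 1 as requiring a genuine limit of zero. I will state this reading explicitly. I will also note that convergence in $d_{\hat h}$ agrees with topological convergence in $\hat{\mathcal M}$, so the conclusion does not depend on the choice of complete $\hat h$. Finally, I will remark that the three cases are exhaustive. By hypothesis $p$ is a limit point of each curve, and for each curve it either is or is not an endpoint, giving exactly the three cases above.
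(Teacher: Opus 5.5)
Your proposal is correct and follows the paper's proof. Case 2 goes through Proposition \ref{prop:intertwined}, case 3 through Corollary \ref{cor:0}, and case 1 directly from Definition \ref{defn:Sintertwined}, where your triangle-inequality argument that condition 2 fails makes explicit what the paper calls ``follows directly''; your remark that the definition is symmetric, justifying the swap of $\gamma$ and $\gamma'$ in case 2, is a small addition the paper leaves implicit.
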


\begin{proof}

\begin{enumerate}
  \item Follows directly from Definition \ref{defn:Sintertwined}
  \item Proposition \ref{prop:intertwined}
  \item Corollary \ref{cor:0}
 \end{enumerate}
\end{proof}


The following proposition has significant utility in our quest to find optimal embeddings for pseudo-Riemannian manifolds. For two inextendible curves without limit points in the manifold $\mathcal{M}$, which intersect at most a finite number of times, it enables the construction of two envelopments where one curve has an endpoint in the boundary of the first envelopment, and the second curve has an endpoint in the boundary of the second envelopment, and the two boundary points are separate (Definition \ref{defn:separated}). This means, for example, that if one curve has bounded parameter, and the other curve has unbounded parameter, with the new envelopments we are effectively separating out a point at infinity from a regular boundary point or a singularity. It also means that if, in a particular envelopment, a pair of curves which approach a boundary point $p$ is intertwined, they can be unravelled with the construction of the two new embeddings.
\begin{prop}\label{prop:simple separate}
    Let $\mathcal{M}$ be an $n$-dimensional, smooth, connected, Hausdorff, paracompact manifold. If $\gamma:[a,b) \rightarrow \mathcal{M}$ and $\gamma':[a',b') \rightarrow \mathcal{M}$ are inextendible curves in $\mathcal{M}$ without limit points, which do not intersect each other infinitely many times, then there exists n-dimensional, smooth, connected, Hausdorff, paracompact manifolds $\mathcal{N}$ and $\mathcal{N}'$ and open embeddings $\psi: \mathcal{M} \rightarrow \mathcal{N}$ and $\psi': \mathcal{M} \rightarrow \mathcal{N'}$ such that $\psi(\gamma) \rightarrow p \in \partial\psi(\mathcal{M})$ and $\psi'(\gamma') \rightarrow q \in \partial\psi'(\mathcal{M})$ and $p$ and $q$ are separate.
\end{prop}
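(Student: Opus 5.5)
Since $\gamma$ and $\gamma'$ meet only finitely many times, I first pass to tails: choose $s_0\in[a,b)$ and $s_0'\in[a',b')$ beyond all intersection parameters, so that $\gamma([s_0,b))\cap\gamma'([s_0',b'))=\emptyset$. Both tails are closed embedded-in-the-limit subsets of $\mathcal{M}$. Each image is closed because the curve has no limit points in $\mathcal{M}$, so the tail $\gamma([s_0,b))$ is a closed set; if $\gamma$ self-intersects, I can still work with the image set. The first goal is to build disjoint open sets $W\supset\gamma([s_1,b))$ and $W'\supset\gamma'([s_1',b'))$ in $\mathcal{M}$. Because the two tails are disjoint closed subsets of the metrisable (paracompact Hausdorff) manifold $\mathcal{M}$, normality gives such $W,W'$ directly. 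To be safe, I would also make them small in the sense of the Endpoint Theorem construction: tubular neighbourhoods $F(t,x)$ with $|x|<f(t)$ around each tail, with $f\to 0$, shrunk until they are disjoint.

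Next I apply the Endpoint Theorem (Theorem~\ref{theorem:endpoint}) to the sequence $x_i=\gamma(t_i)$ with $t_i\to b$; this sequence has no accumulation point. I use the version of its proof in \cite{Scott_2021}, where the new manifold $\mathcal{N}$ is obtained by gluing a half-ball chart $\mu$ built on a normal/tubular neighbourhood of the curve with size function $f$. I choose $f$ small enough that the chart domain $\mu\subset W$. Then $\psi(\gamma)\to p$, and exactly as at the end of the proof of Proposition~\ref{prop:3}, there is an open neighbourhood $U$ of $p$ in $\mathcal{N}$ with $U\cap\psi(\mathcal{M})\subset\psi(\mu)\subset\psi(W)$. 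The same construction along $\gamma'$, with domain $\mu'\subset W'$, gives $\psi':\mathcal{M}\to\mathcal{N}'$, a point $q$ with $\psi'(\gamma')\to q$, and a neighbourhood $V$ of $q$ with $V\cap\psi'(\mathcal{M})\subset\psi'(W')$.

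Separation then follows immediately from Definition~\ref{defn:separated}:
\[
\psi^{-1}(U\cap\psi(\mathcal{M}))\cap\psi'^{-1}(V\cap\psi'(\mathcal{M}))\subset W\cap W'=\emptyset .
\]

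The main obstacle is the middle step: I need to guarantee that the neighbourhood $U$ of the new boundary point meets $\psi(\mathcal{M})$ only inside the chosen tube. This is exactly the property that the size function $f$ controls in the Endpoint Theorem construction. In particular, for a general (non-geodesic, possibly self-approaching) curve I must produce the tube/chart along the curve itself, rather than along a geodesic. I would handle this by first replacing $\gamma$ with a smooth embedded curve. Concretely, I would take a proper smooth embedded ray lying in $W$ that stays close to $\gamma$ and passes through a subsequence $\gamma(t_i)$. Its existence uses only that $\gamma$ leaves every compact set, and that the dimension satisfies $n\ge 2$ for embedding. I would then run the exhaustion construction of \cite{Scott_2021} along this ray, choosing a tube radius below the distance to $\partial W$. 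Finally I would check that $\psi(\gamma)$ still has $p$ as an endpoint, not merely as a limit point. This follows because $\gamma$'s tail stays in every shrinking tube once that tube is chosen to contain $\gamma$'s tail, which is arranged by taking the ray within distance $f(t)/2$ of $\gamma$.
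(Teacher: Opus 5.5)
Your core argument is the paper's: discard the finitely many intersections, choose the Endpoint Theorem size functions so that the chart domains $\mu$ and $\mu'$ are disjoint, and use that the glued chart around $p$ (resp.\ $q$) meets $\psi(\mathcal{M})$ only in $\psi(\mu)$ (resp.\ $\psi'(\mu')$). Your normality step is a useful justification of the paper's bare claim that $f,f'$ can be chosen with $\mu\cap\mu'=\emptyset$: the tails are closed because the curves have no limit points, and $\mathcal{M}$ is metrisable. The paper builds the normal neighbourhood along $\gamma$ and $\gamma'$ themselves, tacitly treating their tails as smooth embedded curves. Each curve is then the central curve of its tube and ends at the centre of the attached ball for free. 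Read that way, your first three paragraphs already form a complete proof.

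The replacement-ray detour in your last paragraph, which you present as necessary, fails at two points.

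First, in the Endpoint Theorem $\partial\psi(\mathcal{M})$ is a whole $(n-1)$-ball onto which the end of the tube is glued. So the chart must rescale the transverse coordinate by the size function, and $f(t)\to 0$ in $\mathcal{M}$ gives no shrinking in $\mathcal{N}$. The shrinking argument at the end of the proof of Proposition~\ref{prop:3} concerns the \emph{old} envelopment, not the new one. A curve confined to the half-radius tube therefore has its limit points in a closed sub-ball of $\partial\psi(\mathcal{M})$ and need not have an endpoint. You would need $\gamma$ to lie within $o(f(t))$ of the ray. Alternatively, post-compose the chart with a $t$-dependent squeeze $(t,y)\mapsto(t,h_t(y))$ taking $\overline{B}_{1/2}$ into $B_{\epsilon(t)}$, with $\epsilon(t)\to 0$.

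Second, and more seriously, a tube containing the tail of $\gamma$ need not exist just because $\gamma$ leaves every compact set. On $S^1\times\mathbb{R}$, take a curve that traverses the circle $S^1\times\{k\}$ once for each $k$ and then climbs to height $k+1$. It is proper, yet no tube contains its tail: a tube is contractible, while those circles are essential. The proposition still holds there, for instance via $S^1\times\mathbb{R}\cong\mathbb{R}^2\setminus\{0\}$, but not by a tube-based construction. So either restrict, as the paper implicitly does, to tails along which the normal-neighbourhood construction can be carried out directly, or handle such curves differently.
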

\begin{proof}
    Choose parameter values $c \in [a,b)$ and $c' \in [a',b')$ such that $\gamma|_{[c,b)}$ and $\gamma'|_{[c',b')}$ do not intersect.\\

    After a suitable reparametrisation we have the resulting curves $\gamma:[0,1) \rightarrow \mathcal{M}$ and $\gamma':[0,1) \rightarrow \mathcal{M}$ which do not intersect.\\

    Since $\gamma$ and $\gamma'$ do not intersect, utilising the proof of the Endpoint Theorem \cite{Scott_2021}, smooth scaling functions $f$ and $f'$ can be chosen such that the charts $\mu$ and $\mu'$ have empty intersection.\\

    The new manifolds $\mathcal{N}$ and $\mathcal{N}'$ and open embeddings $\psi: \mathcal{M} \rightarrow \mathcal{N}$ and $\psi': \mathcal{M} \rightarrow \mathcal{N'}$  are constructed such that $\psi(\gamma(t)) \rightarrow p \in \partial\psi(\mathcal{M})$ and $\psi'(\gamma'(t')) \rightarrow q \in \partial\psi'(\mathcal{M})$. Now, continuing to follow the proof of the Endpoint Theorem, for the new charts $\chi$ and $\chi'$, $\mu = \psi^{-1} (\chi \cap \psi(\mathcal{M}))$ and $\mu' = \psi'^{-1}(\chi' \cap \psi'(\mathcal{M}))$ with $\mu \cap \mu' = \emptyset$. Thus $p$ and $q$ are separate.
\end{proof}

    This proof relies only on the part of the proof in \cite{Scott_2021} involving the construction of the normal neighbourhood along a curve.

\subsection{The structure of a pure singularity}
In the study of the abstract boundary construction, the cover relation between abstract boundary points has been well investigated, except for the topological relationship between a pure singularity and a pure point at infinity. In this section we provide insight into the structure of pure singularities of maximally extended pseudo-Riemannian manifolds. In the case where $\mathcal{C}$ is the class of geodesics with affine parameter, and intertwined geodesics are not involved, it will be seen that a complete analysis of this structure is possible.

\begin{defn}
    A {\it pure singularity set} $B$ is a boundary set such that $q$ is a pure singularity for all $q \in B$.
\end{defn}


A $C^l (0 \leq l \leq k)$ pure singularity in an envelopment of a maximally extended $C^k$ pseudo-Riemannian manifold is locally $C^j$ inextendible, where $ j \geq l$. A $C^l$ pure singularity, by definition, will not be covered by any $C^l$ non-singular boundary set containing at least one $C^l$ regular boundary point, nor does it cover any $C^l$ regular boundary point, so it cannot be equivalent to any $C^l$ regular boundary point.

\begin{prop}\label{prop:finite}
    Let $(\mathcal{M},g,\hat{\mathcal{M}},\psi,\mathcal{C})$ be an envelopment of a maximally extended pseudo-Riemannian manifold $(\mathcal{M},g)$. If the boundary point $p \in \partial\psi(\mathcal{M})$ is only approached by curves with bounded parameter, then $p$ is a pure singularity.
\end{prop}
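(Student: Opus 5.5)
We need to show three things in turn: that $p$ is a singular boundary point, that it is essential, and that it covers no point at infinity (and no regular point). Each step rests on the hypothesis that every curve in $\mathcal{C}$ approaching $p$ has bounded parameter, combined with maximality via Theorem \ref{theorem:1}.

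\emph{Step 1: $p$ is singular.} The statement says $p$ is approached by curves, so $p$ is a $\mathcal{C}$-boundary point, and all such curves have bounded parameter. Since $(\mathcal{M},g)$ is maximally extended, there is no extension through $p$, so $p$ is not $C^l$ regular. The three conditions of the definition of a $C^l$ singular boundary point therefore hold.

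\emph{Step 2: $p$ is essential.} Suppose $p$ were covered by a non-singular boundary set $B'$ in some envelopment $\psi'$. By maximality, $B'$ contains no regular points, so it consists only of points at infinity and unapproachable points. Take a curve $\gamma\in\mathcal{C}$ with bounded parameter approaching $p$, with $\psi(\gamma(t_i))\to p$. We want to transfer this approach to $B'$.

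The tool is the cover relation, used in its sequential form: if $B'$ covers $p$, then a sequence approaching $B'$ in $\psi'$ approaches $p$ in $\psi$. Here we need the converse direction, that $p$'s approaching sequences have images accumulating at $B'$. This is where I would instead use the equivalent "$p\vartriangleright$" formulation from Theorem 19 of \cite{scott1994abstract}: $B'$ covers $p$ means that every sequence converging to $p$ in $\psi$ has a subsequence whose $\psi'$-image has a limit point in $B'$.

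Applied to $\psi(\gamma(t_i))$, this gives a point $q'\in B'$ that is a limit point of $\psi'(\gamma)$. Such a $q'$ is approachable, so it is not an unapproachable point. It is approached by a curve of bounded parameter, so it is not a point at infinity. Hence $q'$ is singular, which contradicts the assumption that $B'$ is non-singular.

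\emph{Step 3: $p$ is pure.} Suppose $p$ covers a point $q$ which is either regular or a point at infinity. Maximality rules out a regular $q$, so $q$ is a point at infinity, hence approachable by some $\sigma\in\mathcal{C}$, necessarily of unbounded parameter. Because $p$ covers $q$, every neighbourhood $U$ of $p$ contains the pullback of some neighbourhood of $q$. The tail of $\sigma$ along a sequence approaching $q$ therefore has $\psi$-image eventually inside every neighbourhood of $p$. So $p$ is a limit point of $\psi(\sigma)$, which has unbounded parameter, contradicting the hypothesis. Thus $p$ is a pure singularity; equivalently, by item 2 of Theorem \ref{theorem:1}, it is a singularity that is not directional.

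\emph{Main obstacle.} I expect the delicate point to be the sequential characterisation of covering in Step 2, namely that "$B'$ covers $p$" yields a limit point of $\psi'(\gamma)$ in $B'$, rather than only a sequence escaping to infinity in $\mathcal{M}'$. To handle it, I would invoke the sequence criterion of Theorem 19 of \cite{scott1994abstract}, taking care to pass to subsequences compatible with a curve.
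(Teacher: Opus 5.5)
Your proposal is correct and follows the same route as the paper. First, $p$ is essential because $(\mathcal{M},g)$ is maximal and $p$ is approached by a bounded-parameter curve. Second, $p$ cannot be directional, since a curve of unbounded parameter approaching a covered point at infinity $q$ would also approach $p$.

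Your Step 2 spells out the essentiality claim that the paper only asserts. The fact you flag as delicate follows directly from the definition of covering. If no point of $B'$ were a limit point of $\psi'(\gamma(t_i))$, the union of neighbourhoods of points of $B'$ that miss this sequence would be an open neighbourhood of $B'$ the sequence never enters, which covering forbids. Note, however, that your first ``sequential form'' sentence in Step 2 states the direction backwards.
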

\begin{proof}
     The fact that $(\mathcal{M},g)$ is maximally extended and that the boundary point $p$ is approached by a curve with bounded parameter implies that $p$ is an essential singularity. We need to check if the essential singularity is a directional singularity which covers a pure point at infinity, since from the definition of a directional singularity and Theorem \ref{theorem:1}, for a maximally extended pseudo-Riemannian manifold, $p$ can only cover a pure point at infinity $q$. This cannot be the case, however, since, by definition, $q$ must be approached by a curve with unbounded parameter, and $p$ covers $q$, so it would be the case that the curve also approaches $p$ with unbounded parameter. This would contradict the assumption of the proposition. Thus $p$ is not a directional singularity, and so it is a pure singularity.
\end{proof}
\begin{prop}\label{prop:main}
    Let $(\mathcal{M},g,\hat{\mathcal{M}},\psi,\mathcal{C})$ be an envelopment of a maximally extended pseudo-Riemannian manifold $(\mathcal{M},g)$, where $\mathcal{C}$ is the class of geodesics with affine parameter, with $p \in \partial(\psi(\mathcal{M}))$ a pure singularity, and $\gamma: [0,b) \rightarrow \mathcal{M}$ (where $b \in \mathbb{R}^+ \cup \{+\infty\}$) is a non-self-intersecting geodesic in $\mathcal{C}$ without limit points in $\mathcal{M}$. Suppose $\psi(\gamma) \rightarrow p$. By Proposition \ref{prop:3}, there exists another embedding $\psi'$ such that $\psi'(\gamma) \rightarrow q \in \partial(\psi'(\mathcal{M}))$, $p \vartriangleright q$ and no other geodesic in $\mathcal{C}$ ends at $q$. Suppose that no pair of geodesics in $\mathcal{C}$, each of which approaches $q$, is intertwined in $(\mathcal{M},g,\hat{\mathcal{M}}',\psi',\mathcal{C})$. Then $b \in \mathbb{R}^{+}$ (the geodesic $\gamma$ has bounded affine parameter).
\end{prop}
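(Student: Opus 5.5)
We argue by contradiction: assume $b=+\infty$, so that $\gamma$ has unbounded affine parameter, and show that $p$ would then be a directional singularity rather than a pure singularity. The approach is to show that under the non-intertwining hypothesis, $q$ is approached only by $\gamma$ itself, up to subcurves and affine reparametrisations. In the case $b=+\infty$ this makes $q$ a point at infinity, which $p$ would cover, and that is impossible for a pure singularity by Theorem~\ref{theorem:1}.

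The first step is to classify the geodesics in $\mathcal{C}$ that approach $q$ in the envelopment $(\mathcal{M},g,\hat{\mathcal{M}}',\psi',\mathcal{C})$. Let $\gamma'\in\mathcal{C}$ approach $q$ and not be a subcurve or reparametrisation of $\gamma$.
\begin{enumerate}
\item If $\psi'(\gamma')$ has $q$ as an endpoint, this contradicts the conclusion of Proposition~\ref{prop:3} that no other geodesic ends at $q$.
\item If $q$ is only a limit point of $\psi'(\gamma')$, then $\gamma$ ends at $q$ while $\gamma'$ merely approaches it. By Proposition~\ref{prop:intertwined}, or case 2 of Proposition~\ref{prop:intertwined_classification}, $\gamma$ and $\gamma'$ are intertwined curves in $\psi'$. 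This contradicts the hypothesis, since both approach $q$.
\end{enumerate}
Hence every geodesic in $\mathcal{C}$ approaching $q$ is a subcurve of $\gamma$ up to affine reparametrisation. Since $\gamma$ is non-self-intersecting and has no limit points in $\mathcal{M}$, these are exactly the tails $\gamma|_{[c,b)}$. In particular $q$ is a $\mathcal{C}$-boundary point.

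The second step supposes $b=+\infty$. By the first step, every curve of $\mathcal{C}$ approaching $q$ is a tail of $\gamma$ and so has unbounded affine parameter; by the bounded parameter property this is preserved under affine reparametrisation. Also $(\mathcal{M},g)$ is maximally extended, so $q$ is not regular. Therefore $q$ is a point at infinity, and by Theorem~\ref{theorem:1}(1) it is a pure point at infinity. Proposition~\ref{prop:3} gives $p\vartriangleright q$, so the essential singularity $p$ covers a point at infinity, which makes $p$ a directional singularity by definition. This contradicts the assumption that $p$ is a pure singularity, equivalently Theorem~\ref{theorem:1}(4). Hence $b<\infty$, i.e.\ $b\in\mathbb{R}^+$.

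The main obstacle is the first step. We must make sure the non-intertwining hypothesis genuinely rules out every way another geodesic can approach $q$. In particular, Proposition~\ref{prop:3} was proved with a barrier $f_+$ adapted to incomplete $\lambda$, so we need the remark following it that the construction also works for complete geodesics; I would invoke that remark explicitly. We must also handle reparametrisations and subcurves of $\gamma$ carefully, so that ``no other geodesic'' is read modulo these, which is harmless for the parameter-boundedness argument.
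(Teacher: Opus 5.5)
Your proposal is correct and is essentially the paper's proof. In both, non-intertwining together with Proposition~\ref{prop:intertwined} leaves $\gamma$ as the only geodesic approaching $q$, so $b=+\infty$ would make $q$ a pure point at infinity covered by $p$, contradicting that $p$ is a pure singularity. Your extra bookkeeping (tails and affine reparametrisations of $\gamma$, and the complete-geodesic remark after Proposition~\ref{prop:3}) only makes explicit what the paper leaves implicit, and, like the paper, you tacitly assume that every geodesic approaching $q$ has no limit points in $\mathcal{M}$, which Proposition~\ref{prop:intertwined} requires.
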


\begin{proof}
Since no pair of geodesics in $\mathcal{C}$, each of which approaches $q$, is intertwined in $(\mathcal{M},g,\hat{\mathcal{M}}',\psi',\mathcal{C})$, by Proposition \ref{prop:intertwined} no geodesic in $\mathcal{C}$ can approach $q$ not as an endpoint, as it would then be intertwined with $\gamma$. This implies that $\gamma$ is the only geodesic which approaches $q$. It cannot be the case that $\gamma$ has unbounded affine parameter, as then $q$ would be a pure point at infinity, and since $p \vartriangleright q$, this would contradict that $p$ is a pure singularity. Thus $b \in \mathbb{R}^{+}$ (the geodesic $\gamma$ has bounded affine parameter).
\end{proof}

\begin{cor}\label{cor:2}
    Let $(\mathcal{M},g,\hat{\mathcal{M}},\psi,\mathcal{C})$ be an envelopment of a maximally extended pseudo-Riemannian manifold $(\mathcal{M},g)$, where $\mathcal{C}$ is the class of geodesics with affine parameter, with $B \subset \partial(\psi(\mathcal{M}))$ a pure singularity set, and $\gamma: [0,b) \rightarrow \mathcal{M}$ (where $b \in \mathbb{R}^+ \cup \{+\infty\}$) is a non-self-intersecting geodesic in $\mathcal{C}$ without limit points in $\mathcal{M}$. Suppose that there exists a point $p \in B$ such that $\psi(\gamma) \rightarrow p$. By Proposition \ref{prop:3}, there exists another embedding $\psi'$ such that $\psi'(\gamma) \rightarrow q \in \partial(\psi'(\mathcal{M}))$, $p \vartriangleright q$ and no other geodesic in $\mathcal{C}$ ends at $q$. Suppose that no pair of geodesics in $\mathcal{C}$, each of which approaches $q$, is intertwined in $(\mathcal{M},g,\hat{\mathcal{M}}',\psi',\mathcal{C})$. Then $b \in \mathbb{R}^{+}$ (the geodesic $\gamma$ has bounded affine parameter)  and $\gamma$ (which has the endpoint $q$) is the only geodesic which approaches $q$.
\end{cor}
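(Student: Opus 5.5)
The plan is to reduce Corollary \ref{cor:2} to Proposition \ref{prop:main} applied to the single point $p \in B$, and then extract the extra uniqueness claim from the first step of that proof.

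\textbf{Step 1: reduce to Proposition \ref{prop:main}.}
Since $B$ is a pure singularity set, every point of $B$ is a pure singularity. In particular the point $p \in B$ with $\psi(\gamma) \rightarrow p$ is one.

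All remaining hypotheses of Proposition \ref{prop:main} hold verbatim:
\begin{enumerate}
\item $(\mathcal{M},g)$ is maximally extended and $\mathcal{C}$ is the class of affinely parametrised geodesics.
\item $\gamma$ is non-self-intersecting and has no limit points in $\mathcal{M}$.
\item The envelopment $\psi'$ is supplied by Proposition \ref{prop:3}, with $p \vartriangleright q$ and no other geodesic ending at $q$.
\item The non-intertwining assumption is made at $q$ in $(\mathcal{M},g,\hat{\mathcal{M}}',\psi',\mathcal{C})$.
\end{enumerate}
Proposition \ref{prop:main} therefore gives $b \in \mathbb{R}^+$ directly.

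\textbf{Step 2: show $\gamma$ is the only geodesic approaching $q$.}
Let $\gamma' \in \mathcal{C}$, $\gamma' \neq \gamma$, approach $q$, i.e.\ $q$ is a limit point of $\psi'(\gamma')$. There are two cases.
\begin{itemize}
\item[(a)] If $q$ is an endpoint of $\psi'(\gamma')$, this contradicts the conclusion of Proposition \ref{prop:3} that no other geodesic ends at $q$.
\item[(b)] If $q$ is a limit point but not an endpoint, then by Proposition \ref{prop:intertwined}, since $\psi'(\gamma) \rightarrow q$ as an endpoint, $\gamma$ and $\gamma'$ are intertwined curves in $\psi'$. This contradicts the hypothesis that no pair of geodesics approaching $q$ is intertwined.
\end{itemize}
So no such $\gamma'$ exists. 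Also, $q$ is the endpoint of $\gamma$ by construction in Proposition \ref{prop:3}.

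\textbf{Expected obstacle.}
The only delicate point is applying Proposition \ref{prop:intertwined}, which requires $\gamma'$ to be inextendible and without limit points in $\mathcal{M}$. I would handle this as follows:
\begin{itemize}
\item Replace $\gamma'$ by its maximal extension, still in $\mathcal{C}$. This does not change whether $q$ is a limit point of $\psi'(\gamma')$, since $q \notin \psi'(\mathcal{M})$ and only the tail near the boundary matters.
\item If $\gamma'$ had a limit point $x \in \mathcal{M}$, the argument is less immediate. Since $q$ is also a limit point of $\psi'(\gamma')$, the curve $\psi'(\gamma')$ oscillates between a neighbourhood of $\psi'(x)$ and $q$. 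One can then choose parameter sequences realising distance tending to $0$ from $\psi'(\gamma)$ near $q$, and distance bounded below near $\psi'(x)$. This is the content of Definition \ref{defn:Sintertwined} under its reading as a condition on limiting behaviour, so the intertwining conclusion still follows from the same argument as in the proof of Proposition \ref{prop:intertwined}.
\end{itemize}
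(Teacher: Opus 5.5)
Your proof is correct and follows the paper's route: the paper's proof just cites Proposition \ref{prop:main} for $b\in\mathbb{R}^+$ and Proposition \ref{prop:intertwined}, together with the fact from Proposition \ref{prop:3} that no other geodesic ends at $q$, to conclude that $\gamma$ is the only geodesic approaching $q$. The paper passes over your worry about a $\gamma'$ with limit points in $\mathcal{M}$ in silence, and your fix extends Definition \ref{defn:Sintertwined} (which is stated only for curves without limit points in $\mathcal{M}$) to such pairs rather than following from it as written; your first bullet is also vacuous, since any geodesic having $q$ as a limit point is already forward-inextendible.
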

\begin{proof}
    The proof follows directly from Proposition \ref{prop:main} and Proposition \ref{prop:intertwined}.
\end{proof}
The significance of Proposition \ref{prop:main} and Corollary \ref{cor:2} is that for the geodesic $\gamma$, there only needs to exist one new embedding $\psi'$ provided by Proposition \ref{prop:3} which satisfies the not intertwined geodesics condition at $q$ in order to ensure that $\gamma$ has bounded affine parameter. This further guarantees that $q$ is a pure singularity which is only approached by geodesics with bounded parameter -- indeed, it is only approached  by $\gamma$ which has $q$ as its endpoint.

\subsection{An ordering relation and the existence or non-existence of minimal elements}\label{Section Ordering}

In this section we will explain why it is not always necessary to consider a minimal element under the cover relation, since Proposition \ref{prop:main} may provide a natural termination point for the chain.

Let $(\mathcal{M},g,\mathcal{C},\hat{\mathcal{M}},\psi)$ be an envelopment of a maximally extended pseudo-Riemannian manifold $(\mathcal{M},g)$ where $\mathcal{C}$ is the class of geodesics with affine parameter.  For the abstract boundary construction, boundary points are compared under the equivalence relation using the mutual covering relation. To equip the abstract boundary with a partial order, we will use the order on the set ${\cal B}(\cal M)$ of all abstract boundary points given by 
\begin{align*}
    \alpha \succeq \beta \Leftrightarrow \alpha~\text{covers}~\beta .
\end{align*}
This is both transitive and reflexive and is a partial order on the set of all abstract boundary points.\\

We will now show lower bound candidature for a particular totally ordered subset of the abstract boundary. Given an abstract boundary directional singularity $\sigma_0 = [p_0]$ where $p_0 \in \partial\psi(\mathcal{M})$ (an element of the a-boundary which has a directional singularity representative $p_0$), construct a cover relation chain for some index set $A$ by
\begin{align*}
    C = \{\sigma_{\alpha}\}_{\alpha \in A} \subset {\cal B}(\cal M)
\end{align*}
with
\begin{enumerate}
    \item $\sigma_0 \in C$
    \item $C$ is totally ordered by $\succeq$: for every $\alpha$, $\beta$, $\alpha \leq \beta \Rightarrow \sigma_{\alpha} \succeq \sigma_{\beta}$.
\end{enumerate}
So we are considering a totally ordered chain $C \subset {\cal B}(\cal M)$ (totally ordered by $\succeq$) which contains a directional singularity $\sigma_0 \in C$.\\

A longstanding unresolved question related to the abstract boundary classification, is whether or not a directional singularity must always cover a pure singularity of another embedding. We note that since $p_0 \in \partial\psi(\mathcal{M})$ is a directional singularity, it is approached by a geodesic $\gamma$ with bounded affine parameter. We will require that every element $\sigma_{\alpha}$ in the chain $C$ is approached by $\gamma$. From the abstract boundary classification scheme, this constrains every $\sigma_{\alpha}$, for $0 < \alpha$, to be either a directional singularity or a pure singularity.\\

Suppose that in the chain $C$ there exists a first $\alpha > 0$ such that $\sigma_{\alpha}$ is a pure singularity. This means that for all $\beta > \alpha$, $\sigma_{\beta}$ is also a pure singularity. There is therefore no need to consider a minimal element for this chain, since the existence of $\sigma_{\alpha}$ answers the above question for the directional singularity $\sigma_{0}$, and so we can truncate the chain at $\sigma_{\alpha}$.\\

It is certainly still possible, on the other hand, that $C$ is an infinite chain consisting entirely of directional singularities. We are aware that our ability to answer this question for any given directional singularity relates to the existence, or otherwise, of intertwined geodesics in the envelopments associated with each $\sigma_{\alpha}$, $\alpha > 0$. The following corollary to Proposition \ref{prop:main} provides a resolution in the case where Proposition \ref{prop:3} is used to produce an element $\sigma_{\alpha}$ in the chain for which no pair of geodesics approaching $\sigma_{\alpha}$ is intertwined.\\

\begin{cor}\label{cor:directional}
    Let $(\mathcal{M},g,\hat{\mathcal{M}},\psi,\mathcal{C})$ be an envelopment of a maximally extended pseudo-Riemannian manifold $(\mathcal{M},g)$, where $\mathcal{C}$ is the class of geodesics with affine parameter, with $p \in \partial(\psi(\mathcal{M}))$ a directional singularity, and $\gamma: [0,b) \rightarrow \mathcal{M}$ (where $b \in \mathbb{R}^+$) is a non-self-intersecting geodesic in $\mathcal{C}$ without limit points in $\mathcal{M}$. Suppose $\psi(\gamma) \rightarrow p$. By Proposition \ref{prop:3}, there exists another embedding $\psi'$ such that $\psi'(\gamma) \rightarrow q \in \partial(\psi'(\mathcal{M}))$, $p \vartriangleright q$ and no other geodesic in $\mathcal{C}$ ends at $q$. Suppose that no pair of geodesics in $\mathcal{C}$, each of which approaches $q$, is intertwined in $(\mathcal{M},g,\hat{\mathcal{M}}',\psi',\mathcal{C})$. Then $\gamma$ (which has the endpoint $q$) is the only geodesic which approaches $q$ and $q$ is a pure singularity.
\end{cor}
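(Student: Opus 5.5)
The plan follows the proof of Proposition \ref{prop:main}, using Proposition \ref{prop:finite} in place of the purity hypothesis at $p$. There are three steps: show $\gamma$ is the only geodesic approaching $q$, show $q$ is an essential singularity, and show $q$ is not directional.

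First, take the envelopment $\psi'$ supplied by Proposition \ref{prop:3}. Suppose some geodesic $\gamma' \in \mathcal{C}$, $\gamma' \neq \gamma$, approaches $q$. By Proposition \ref{prop:3}, $\gamma'$ cannot have $q$ as an endpoint, so it approaches $q$ only as a limit point. Since $\psi'(\gamma) \rightarrow q$ as an endpoint, Proposition \ref{prop:intertwined} (equivalently case 2 of Proposition \ref{prop:intertwined_classification}) makes $\gamma$ and $\gamma'$ intertwined in $(\mathcal{M},g,\hat{\mathcal{M}}',\psi',\mathcal{C})$. Both approach $q$, so this contradicts the hypothesis. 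Hence $\gamma$ is the only geodesic in $\mathcal{C}$ approaching $q$, and it does so as an endpoint.

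Second, $q$ is an essential singularity. Here $b \in \mathbb{R}^+$ by hypothesis, which is the difference from Proposition \ref{prop:main}, where boundedness was the conclusion. So $q$ is a $\mathcal{C}$-boundary point approached by a curve with bounded affine parameter. Because $(\mathcal{M},g)$ is maximally extended, there are no regular boundary points (Theorem \ref{theorem:1}), so $q$ is singular. I would need to check that $\gamma$'s subcurves and reparametrisations are handled by the bounded parameter property (conditions (ii), (iii)). Then every curve in $\mathcal{C}$ approaching $q$ is a subcurve of $\gamma$, up to affine reparametrisation, and so has bounded parameter.

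Third, with $q$ approached only by curves of bounded parameter, Proposition \ref{prop:finite} applies directly and gives that $q$ is a pure singularity. Its argument is: a directional singularity must cover a pure point at infinity (Theorem \ref{theorem:1}), which is approached by an unbounded-parameter geodesic, and that geodesic would then approach $q$.

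The main obstacle is that last covering step, implicit in Proposition \ref{prop:finite}. If $q$ covers $r$ and a geodesic $\sigma$ has $r$ as a limit point, then a sequence $\sigma(t_i)$ with $\psi''(\sigma(t_i)) \rightarrow r$ eventually lies in every neighbourhood of $q$ restricted to $\mathcal{M}$. This shows $q$ is a limit point of $\psi'(\sigma)$ only if $\hat{\mathcal{M}}'$ is first countable and we pick a nested neighbourhood base at $q$ and a diagonal subsequence. I would spell this out, and then note that such a $\sigma$ would be a second geodesic approaching $q$, which is impossible by the first step. As a fallback, I would cite the same reasoning used in the proof of Proposition \ref{prop:finite} rather than redo it.
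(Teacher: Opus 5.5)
Your proposal is correct and follows essentially the same route as the paper: uniqueness of $\gamma$ at $q$ via Proposition~\ref{prop:3} and Proposition~\ref{prop:intertwined}, bounded parameter from the hypothesis $b \in \mathbb{R}^+$, and then the argument of Proposition~\ref{prop:finite} that covering a pure point at infinity would force a second, unbounded-parameter geodesic to approach $q$. The covering step you flag is easier than you fear: if $q$ covers $r$ and $\psi''(\sigma(t_i)) \rightarrow r$, then for each open $U \ni q$ the corresponding neighbourhood $V$ of $r$ contains a tail of $\psi''(\sigma(t_i))$, so a tail of $\psi'(\sigma(t_i))$ lies in $U$. Hence the same sequence converges to $q$, and no first-countability or diagonal argument is needed.
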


\begin{proof}
The proof proceeds as for the proof of Proposition \ref{prop:main} up to the point where it is established that $\gamma$ is the only geodesic which approaches $q$. It approaches $q$ as an endpoint and, by assumption, has bounded parameter. It cannot be the case that $q$ covers a pure point at infinity of another embedding, as this would imply the existence of a geodesic with unbounded parameter which approaches the pure point at infinity, and therefore also would approach $q$. Since $\gamma$ is the only geodesic which approaches $q$, it follows that $q$ is a pure singularity.
\end{proof}

Note that for this corollary it is assumed that the geodesic $\gamma$ has bounded parameter and that $\psi(\gamma) \rightarrow p$. We recall that Proposition \ref{prop:3} still applies when the geodesic $\gamma$ has unbounded parameter. In this case, if the other conditions of Corollary \ref{cor:directional} are satisfied, then $q$ is a pure point at infinity. If $q$ corresponds to the element $\sigma_{\alpha}$ of the chain $C$, then this is where the chain should be truncated. Although this chain does not provide information regarding the pure singularity part of the directional singularity, it does separate out the pure point at infinity, which is also an important part of unravelling directional singularities and the desired provision of an optimal embedding for the pseudo-Riemannian manifold.


\subsection{The application of Proposition \ref{prop:3} to future g-boundary extensions}\label{section 5}
The partially ordered relation mentioned in \cite{Graf_2024} is equivalent to the cover relation in the abstract boundary framework, so the maximal element of the g-boundary in Graf and Beld-Serrano's paper is equivalent to the maximal element under the cover relation.\\

We proceed now to the definition for the maximal g-boundary defined in the abstract boundary context.\\
\begin{defn}(Future g-boundary Extension \cite{Graf_2024}).
    Let $(\mathcal{M},g)$ be a $C^2$ spacetime. We say that a topological manifold $\mathcal{M}_g$ with boundary is a \textit{future g-boundary extension} of $(\mathcal{M},g)$ if there exists a homeomorphism
    \begin{align*}
        \psi_g: \mathcal{M} \rightarrow int{\mathcal{M}_g}, \text{and}
    \end{align*}
    \begin{enumerate}
        \item for any $p \in \partial\mathcal{M}_g$ there exists a future directed timelike geodesic $\gamma:[0,1) \rightarrow \mathcal{M}$ with $p = \lim_{t \rightarrow 1^-}\psi_g(\gamma(t))$,
        \item all timelike thickenings $O^{\mathcal{M}_g}_{X,r}$ are open and for any $p \in \partial\mathcal{M}_g$ and any future directed timelike geodesic $\gamma:[0,1) \rightarrow \mathcal{M}$ with $p = \lim_{t \rightarrow 1^-}\psi_g(\gamma(t))$ the collection $\{O^{\mathcal{M}_g}_{\dot{\gamma}(1-\frac{1}{n}),\frac{1}{m}}:n,m\in\mathbb{N}\}$ is a neighbourhood basis of $p$.
    \end{enumerate}
\end{defn}
\begin{defn}\label{maximal}
    A future g-boundary extension $\mathcal{M}_g$ of $(\mathcal{M},g)$ is said to be {\it maximal} if any other future g-boundary extension $\hat{\mathcal{M}}$ satisfies $[\mathcal{M}_g] \vartriangleright [\hat{\mathcal{M}}]$.
\end{defn}

    From this definition, any maximal future g-boundary extension automatically has to be unique in the following sense: if $\mathcal{M}_g$ and $\mathcal{M}_g'$ are two maximal future g-boundary extensions, then $[\mathcal{M}_g] = [\mathcal{M}_g']$.\\

We investigate now the boundary point type(s) of a future g-boundary extension using the abstract boundary framework.
\begin{prop}\label{prop:pure g-boundary}
     Let $(\mathcal{M},g,\mathcal{M}_g,\psi_g,\mathcal{C})$ be an envelopment of a $C^{2}$ maximally extended spacetime $(\mathcal{M},g)$, where $\mathcal{M}_g$ is a future g-boundary extension of $(\mathcal{M},g)$ and $\mathcal{C}$ is the family of incomplete future directed timelike geodesics in $(\mathcal{M},g)$. Then the boundary $\partial\psi_g(\mathcal{M})$ of the future g-boundary extension is a pure singularity set.
\end{prop}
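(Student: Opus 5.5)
The plan is to reduce the statement to Proposition \ref{prop:finite} and Theorem \ref{theorem:1}. Fix $p \in \partial\psi_g(\mathcal{M})$. By condition 1 of a future g-boundary extension, there is a future directed timelike geodesic $\gamma:[0,1) \rightarrow \mathcal{M}$ with $\psi_g(\gamma(t)) \rightarrow p$. The first task is to check that this $\gamma$, in its affine parametrisation, is incomplete, so that it lies in $\mathcal{C}$ and approaches $p$ with bounded parameter.

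For that step I would use that $\psi_g$ is a homeomorphism onto $\mathrm{int}\,\mathcal{M}_g$ and that $p \in \partial\mathcal{M}_g$. Hence $\gamma$ has no limit point in $\mathcal{M}$ and is future inextendible. The reparametrisation to $[0,1)$ in the definition should be read as coming from an incomplete affine parameter, consistent with the g-boundary being built from incomplete causal geodesics as in \cite{Graf_2024}. Under this reading $p$ is a $\mathcal{C}$-boundary point approached with bounded parameter. Since $(\mathcal{M},g)$ is maximally extended, Theorem \ref{theorem:1} excludes regular boundary points, so $p$ is singular. It is essential because any boundary set covering $p$ would have to be approached by $\gamma$ with bounded parameter as well.

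Next I show that every curve of $\mathcal{C}$ approaching $p$ does so with bounded parameter. This is automatic, because $\mathcal{C}$ consists only of incomplete geodesics. To run the argument of Proposition \ref{prop:finite} robustly, however, I would also rule out a directional singularity through condition 2. Suppose $p$ covered a pure point at infinity $q$ of some envelopment. Then neighbourhoods of $q$, pulled back to $\mathcal{M}$, lie in any prescribed neighbourhood of $p$. In particular they lie in the thickenings $O^{\mathcal{M}_g}_{\dot\gamma(1-1/n),1/m}$, which form a neighbourhood basis of $p$. A curve approaching $q$ is then eventually inside every such thickening, so it is asymptotically close to timelike geodesics with initial data near $\dot\gamma(1-1/n)$. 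By lower semicontinuity of $X \mapsto b_X$, these geodesics reach the boundary within bounded affine time. So the relevant curves approaching $q$ through $p$'s basis carry bounded parameter, which contradicts $q$ being a point at infinity. With $p$ essential and not directional, Theorem \ref{theorem:1}(2) makes $p$ a pure singularity. Since $p$ was arbitrary, $\partial\psi_g(\mathcal{M})$ is a pure singularity set.

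The main obstacle is the last step. Lower semicontinuity of $b_X$ gives only a lower bound on the affine length, whereas the argument needs an upper bound. Moreover, the curve approaching $q$ need not be a member of $\mathcal{C}$. My first attempt is to restrict $\mathcal{C}$ as in the statement. Then a point at infinity $q$ would need an element of $\mathcal{C}$ with unbounded parameter approaching it, and none exists because every element of $\mathcal{C}$ is incomplete. This makes the conclusion follow directly from Proposition \ref{prop:finite}, and the thickening argument only serves to confirm that the boundary is genuinely approached by $\mathcal{C}$.
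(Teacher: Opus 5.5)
Your final argument is correct and is essentially the paper's proof: condition 1 gives, for each $p \in \partial\psi_g(\mathcal{M})$, an affinely parametrised inextendible (hence incomplete) timelike geodesic ending at $p$, so $p$ is approached by a member of $\mathcal{C}$, and since every member of $\mathcal{C}$ has bounded parameter, Proposition \ref{prop:finite} makes $p$ a pure singularity. The thickening and semicontinuity detour in your middle paragraph is unnecessary and, as you yourself note, does not work as written, so drop it and go straight to Proposition \ref{prop:finite}.
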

\begin{proof}
    All boundary points in $\partial\psi_g(\mathcal{M})$ are the endpoint of an incomplete future directed timelike geodesic, and so by Proposition \ref{prop:finite}, $\partial\psi_g(\mathcal{M})$ is a pure singularity set.
\end{proof}

The following proposition investigates the relationship between future g-boundary extensions and the envelopments produced by Proposition \ref{prop:3}.

\begin{prop}\label{prop:covering}

Let $(\mathcal{M},g,\mathcal{M}_g,\psi_g,\mathcal{C})$ be an envelopment of a $C^{2}$ spacetime $(\mathcal{M},g)$, where $\mathcal{M}_g$ is a future g-boundary extension of $(\mathcal{M},g)$ and $\mathcal{C}$ is the family of incomplete future directed timelike geodesics in $(\mathcal{M},g)$. Let $\gamma: [0,1) \rightarrow \mathcal{M}$ be a geodesic in $\mathcal{C}$ which has the endpoint  $p \in \partial\psi_g(\mathcal{M})$. By Proposition \ref{prop:3}, there exists another envelopment  $(\mathcal{M},g,\mathcal{M}',\psi',\mathcal{C})$ such that $\psi'(\gamma) \rightarrow q \in \partial\psi'(\mathcal{M})$ and $p \vartriangleright q$. Then $p \vartriangleright \partial\psi'(\mathcal{M})$ and $\partial\psi_g(\mathcal{M}) = \mathcal{M}_g\setminus \psi_g(\mathcal{M})$ covers $\partial\psi'(\mathcal{M})$.
\end{prop}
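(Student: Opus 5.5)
We need to show two statements: that $p \vartriangleright \partial\psi'(\mathcal{M})$, and hence that the full boundary $\partial\psi_g(\mathcal{M})$ covers $\partial\psi'(\mathcal{M})$. The plan is to exploit the structure of the envelopment $\psi'$ from Proposition \ref{prop:3}. That envelopment comes from the Endpoint Theorem construction along $\gamma$, so $\partial\psi'(\mathcal{M})$ consists only of the points attached through the chart $\mu$ built from the thin normal neighbourhood $\{F(t,x): |x| < f_+(t)\}$ around $\gamma$. That boundary is the image of an $(n-1)$-ball glued at $t = 1$.

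The first step is to make this precise. We would show that every boundary point $q' \in \partial\psi'(\mathcal{M})$ has a neighbourhood basis whose traces in $\mathcal{M}$ are of the form
\begin{align*}
\{F(t,x) : t > 1-\epsilon,\ |x| < f_+(t)\}\cap(\text{a transverse sector}).
\end{align*}
Equivalently, any sequence $x_i$ with $\psi'(x_i)\to q'$ lies eventually in $\mu$ with $t(x_i)\to 1$. This is the same observation already used at the end of the proof of Proposition \ref{prop:3}.

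The second step transfers this to $\psi_g$. Since $f_+(t)\to 0$ as $t\to 1$, any such sequence stays within transverse distance $f_+(t(x_i))$ of $\gamma(t(x_i))$. Given a neighbourhood $U$ of $p$ in $\mathcal{M}_g$, we use condition 2 of the future g-boundary extension definition to pick a thickening $O^{\mathcal{M}_g}_{\dot\gamma(1-1/n),1/m}\subset U$.

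The key claim is that for $t$ close to $1$ and $f_+$ small, the points $F(t,x)$ with $|x|<f_+(t)$ lie in this thickening. The argument is this. The thickening contains all $\gamma_Y((0,b_Y))$ for $Y$ near $\dot\gamma(1-1/n)$. Each point $F(t,x)$ can be reached by a timelike geodesic whose initial data is close to $\dot\gamma(1-1/n)$, since it is a small transverse perturbation along an incomplete timelike geodesic. If necessary, we shrink $f_+$ further, which is allowed because Proposition \ref{prop:3} only requires $f_+$ small.

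This yields an open neighbourhood $V$ of $\partial\psi'(\mathcal{M})$ whose trace in $\mathcal{M}$ maps into $U$. That is the definition of $p \vartriangleright \partial\psi'(\mathcal{M})$.

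Finally, any neighbourhood of $\partial\psi_g(\mathcal{M})$ contains a neighbourhood of $p$. The cover relation is therefore inherited, giving that $\partial\psi_g(\mathcal{M})$ covers $\partial\psi'(\mathcal{M})$.

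The main obstacle is the containment in the second step: the thin tube around $\gamma$ must eventually sit inside a timelike thickening. Points $F(t,x)$ are generated by spacelike exponentials from $\gamma(t)$, not by geodesics starting near $\dot\gamma(1-1/n)$. I would handle this by continuous dependence of geodesics on initial data on the compact segment $[1-1/n,t]$. For each $t$, the set of endpoints of timelike geodesics with data in $B_{1/m}(\dot\gamma(1-1/n))$ contains an open neighbourhood of $\gamma(t)$. This uses the openness of thickenings guaranteed in the definition together with the fact that $\gamma(t)\in O^{\mathcal{M}_g}_{\dot\gamma(1-1/n),1/m}$. We would then choose $f_+(t)$ at each $t$ below the radius of that open set, and smooth the choice.
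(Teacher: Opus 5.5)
This is essentially the paper's argument. The paper also uses the Endpoint Theorem chart $\mu$ to trap every sequence with $\psi'(x_i)\to r\in\partial\psi'(\mathcal{M})$ in the thin tube, asserts $\psi_g(x_i)\to p$ simply because $f(t)\to 0$, and concludes via the sequence criterion (Theorem 19 of Scott--Szekeres). Your neighbourhood version, where you shrink $f_+$ into the thickenings $O^{\mathcal{M}_g}_{\dot\gamma(1-1/n),1/m}$, makes explicit a step the paper takes for granted.

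The one thing to fix is the order of quantifiers: $f_+$, and hence $\psi'$, must be chosen once, before $U$ is given. You can do this with a diagonal choice, taking $f_+(t)$ below the thickening radii at $\gamma(t)$ for all $n,m\le N(t)$ with $N(t)\to\infty$. More simply, require $d(\psi_g(F(t,x)),\psi_g(\gamma(t)))<1-t$ for $|x|<f_+(t)$, where $d$ is any metric on $\mathcal{M}_g$.
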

\begin{proof}
    In the proof of Proposition \ref{prop:3}, the smooth scaling function $f$ was chosen such that $f(t) \rightarrow 0$ as $t \rightarrow 1^-$. We use the chart $\mu$ defined in the proof of the Endpoint Theorem \cite{Scott_2021} and employed in the proof of Proposition \ref{prop:3}.\\
    
    Consider a boundary point $r \in \partial\psi'(\mathcal{M})$, where $r$ is not $q$, and let $\{x_i\}$ be a sequence of points in $\mathcal{M}$ such that $\psi'(x_i) \rightarrow r$. There exists an open neighbourhood $U$ of $r$ in $\mathcal{M}'$ such that $$ U \cap \psi'(\mathcal{M}) \subset \psi'(\mu).$$ Since $\psi'(x_i) \rightarrow r$, for some $j$, the sequence elements $\psi'(x_i) \in \psi'(\mu)$ for all $i \geq j$.\\

    This means that $\psi_g(x_i) \in \psi_g(\mu)$ for all $i \geq j$. Since the sequence $\{\psi_g(x_i)\}$ must approach the boundary of $\psi_g(\mathcal{M})$, and for the normal neighbourhood of $\gamma$ used to define the chart $\mu$ the smooth scaling function $f(t) \rightarrow 0$ as $t \rightarrow 1^-$, it must be the case that $\psi_g(x_i) \rightarrow p$. So from Theorem 19 of \cite{scott1994abstract}, $p \vartriangleright r$. Since $p \vartriangleright q$, and $r$ was any other boundary point in $\psi'(\mathcal{M})$, $p \vartriangleright \partial\psi'(\mathcal{M})$. It follows immediately from this that $\partial\psi_g(\mathcal{M}) = \mathcal{M}_g\setminus \psi_g(\mathcal{M})$ covers $\partial\psi'(\mathcal{M})$. 
    
\end{proof}

\section{A Spacetime Without Intertwined Geodesics}\label{section 6}

The purpose of this section is to introduce a physical motivation for how Proposition \ref{prop:main} operates in the Schwarzschild spacetime. We recall the notions of the strongly attached point topology and the black region built from singularly compact future sets. Definition 5.1, Proposition 5.2 and Definition 5.3 come from Barry and Scott \cite{Barry_2014}.
\begin{defn}
    Let $p \in \partial\phi(\mathcal{M})$ be a boundary point of an open embedding $\phi: \mathcal{M} \rightarrow \hat{\mathcal{M}}$. We say that $p$ is \textit{strongly attached} to an open subset $U \subset \mathcal{M}$, if there exists an open neighbourhood $N$ of $p$ in $\hat{\mathcal{M}}$ such that $N \cap \phi(\mathcal{M}) \subseteq \phi(U)$.
\end{defn}

\begin{prop}
    Let $B \subset \partial(\phi(\mathcal{M}))$ be strongly attached to an open set $U \subset \mathcal{M}$, and let $B'$ be a boundary set of a second envelopment $\phi': \mathcal{M} \rightarrow \mathcal{M}'$. If $B \vartriangleright B'$, then $B'$ is also strongly attached to $U$.
\end{prop}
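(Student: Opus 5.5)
We prove the statement directly from the definitions of strong attachment and of the cover relation, so no geometric input is needed. Set $B\vartriangleright B'$, where $B\subset\partial\phi(\mathcal{M})$ is strongly attached to the open set $U\subset\mathcal{M}$. We must produce an open neighbourhood $N'$ of $B'$ in $\mathcal{M}'$ with $N'\cap\phi'(\mathcal{M})\subseteq\phi'(U)$.

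First, use strong attachment of $B$ to obtain an open neighbourhood $N$ of $B$ in $\hat{\mathcal{M}}$ with $N\cap\phi(\mathcal{M})\subseteq\phi(U)$. Strong attachment is stated for a point, so for a set $B$ we read the definition with $N$ an open neighbourhood of the whole set; if only pointwise neighbourhoods $N_p$ are given, take $N=\bigcup_{p\in B}N_p$, which is still open, contains $B$, and satisfies the same inclusion. Second, apply the definition of $B$ covering $B'$ to this $N$. It yields an open neighbourhood $N'$ of $B'$ in $\mathcal{M}'$ with
\begin{align*}
\phi\circ\phi'^{-1}\bigl(N'\cap\phi'(\mathcal{M})\bigr)\subseteq N .
\end{align*}
The left-hand side lies in $\phi(\mathcal{M})$, so it is contained in $N\cap\phi(\mathcal{M})\subseteq\phi(U)$.

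Third, pull back through the injective map $\phi$ to get $\phi'^{-1}(N'\cap\phi'(\mathcal{M}))\subseteq U$. Applying $\phi'$ then gives $N'\cap\phi'(\mathcal{M})\subseteq\phi'(U)$, which is exactly the statement that $B'$ is strongly attached to $U$. If the pointwise version is wanted, the same $N'$ serves as a neighbourhood of every $q\in B'$.

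The only step needing care is the first: the definitions must be read consistently for sets rather than single points, and the union argument handles this. The rest is elementary bookkeeping with the injectivity of the open embeddings $\phi$ and $\phi'$.
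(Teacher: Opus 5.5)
Your argument is correct and complete: you take the neighbourhood $N$ witnessing strong attachment of $B$, feed it into the definition of $B$ covering $B'$, and pull back through the injective embeddings $\phi$ and $\phi'$, which gives exactly $N'\cap\phi'(\mathcal{M})\subseteq\phi'(U)$. The paper gives no proof of this proposition (it is quoted from Barry and Scott), but this direct unwinding of the two definitions is the natural argument, and your union-of-pointwise-neighbourhoods remark correctly handles the one definitional subtlety about strong attachment for sets.
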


\begin{defn}
    Let $\mathcal{B}(\mathcal{M})$ be the abstract boundary of $\mathcal{M}$. The strongly attached point topology $\mathcal{T}_{sap}(\mathcal{M})$ on $\overline{\mathcal{M}} = \mathcal{M} \cup \mathcal{B}(\mathcal{M})$ is the topology with basis
    \begin{equation}
        \mathcal{W} := \{U \cup \mathcal{B}_{U} | U \subset \mathcal{M}\text{ is open}\},
    \end{equation}
\end{defn}
\noindent where $\mathcal{B}_{U}$ denotes the collection of abstract boundary points which are strongly attached to $U$.\\

Definitions \ref{defn:singular neighourhood}, \ref{defn:singularly compact} and \ref{blackhole from singularity} come from \cite{Wheeler_2023}.

\begin{defn}\label{defn:singular neighourhood}
    An open set $U \subset \mathcal{M}$ is called a singular neighbourhood if every pure singularity abstract boundary point is strongly attached to $U$. Equivalently, $U = \tilde{U} \cap \mathcal{M}$, where $\tilde{U} \subset \Bar{\mathcal{M}}$ is a neighbourhood of $S_p(\mathcal{M})$ (the collection of pure singularity abstract boundary points) in the topology of $\mathcal{T}_{sap}(\mathcal{M})$. 
\end{defn}
\begin{defn}\label{defn:singularly compact}
    Let $(\mathcal{M},g)$ be a Lorentzian manifold. A closed set $A \subset \mathcal{M}$ is called singularly compact if $A \backslash U$ is compact for every singular neighbourhood $U \in \mathcal{U}$, where $\mathcal{U}$ is the family of singular neighbourhoods.
\end{defn}
\begin{defn}\label{blackhole from singularity}
    Let $\mathcal{F}$ be the family of singularly compact future sets, i.e.\ singularly compact sets $A \subset \mathcal{M}$ satisfying $J^+(A) = A$. The black region $\mathscr{B} \subset \mathcal{M}$ is given by
    \begin{align}
        \mathscr{B} \coloneq \bigcup_{A \in \mathscr{F}}A.
    \end{align}
    A connected component $\mathcal{B}$ of $\mathscr{B}$ is a \textit{black hole}. Its boundary $H  \coloneq \partial \mathcal{B}$ is the \textit{event horizon}.
\end{defn}

\subsection{The Schwarzschild spacetime}
We will now investigate how Proposition \ref{prop:main} applies to the maximally extended Schwarzschild spacetime, which will then provide a sufficient condition for the separability between the pure singularities and the pure points at infinity. Consequently, for the Schwarzschild spacetime, when considering the class of all geodesics with affine parameter, one can determine that the Penrose embedding is an optimal embedding (see Barry and Scott \cite{Barry_2014}, \cite{Barrythesis_2014}). This will then permit an examination of how the Penrose embedding relates to Definition \ref{blackhole from singularity}.\\

We establish that the curvature singularities at $r = 0$ in the Schwarzschild spacetime are pure singularities (in the sense of the abstract boundary) and that there are no intertwined geodesics limiting to $r = 0$. \\


Consider the Schwarzschild spacetime $(\mathcal{M},g)$, where $\mathcal{M} = \mathbb{R}^2 \times S^2$ with $(t,r,\theta,\phi)$ coordinates and Lorentzian metric
\begin{align*}
    g = -\vast(1-\frac{2M}{r}\vast)dt^2 + \frac{1}{(1-\frac{2M}{r})}dr^2 + r^2(d\theta^2 + \sin^2{\theta}d\phi^2).
\end{align*}
The curvature invariant $R_{ijkl}R^{ijkl} = C/r^6 \rightarrow \infty$ as $r \rightarrow 0$, so all boundary points at $r = 0$ are curvature singularities.\\

{\it We now show that the curvature singularities at $r=0$ are pure singularities.}\\

Along any geodesic with affine parameter $\tau$, there are two Killing integrals
\begin{align*}
    E = \vast(1-\frac{2M}{r}\vast)\frac{dt}{d\tau};  ~J^2 = r^4\vast(\vast(\frac{d\theta}{d\tau}\vast)^2 + \vast(\sin\theta\frac{d\phi}{d\tau}\vast)^2\vast).
\end{align*}

Let $\kappa \in \{-1,0,1\}$ be the causal character of timelike, null, and spacelike geodesics, respectively.
\begin{align*}
    \kappa = -\vast(1-\frac{2M}{r}\vast)\vast(\frac{dt}{d\tau}\vast)^2 + \frac{1}{(1-\frac{2M}{r})}\vast(\frac{dr}{d\tau}\vast)^2 + r^2\vast(\vast(\frac{d\theta}{d\tau}\vast)^2 + \sin^2{\theta}\vast(\frac{d\phi}{d\tau}\vast)^2\vast)
\end{align*}
Hence, for $J \neq 0$,
\begin{align}
    \vast(\frac{dr}{d\tau}\vast)^2 = E^2 + \vast(1-\frac{2M}{r}\vast)\vast(\kappa - \frac{J^2}{r^2}\vast) \sim \frac{2MJ^2}{r^3} ~~\text{as }r \rightarrow 0\label{drdtau},
\end{align}
\begin{align*}
    \frac{dr}{d\tau} \sim - \sqrt{\frac{2MJ^2}{r^3}}.
\end{align*}
The affine parameter from a sufficiently small $r = r_0$ to $r = 0$ is finite:
\begin{align}
    \tau = \int^{r_0}_{0}\frac{dr}{|dr/d\tau|} \sim \frac{1}{\sqrt{2MJ^2}}\int^{r_0}_{0}r^{\frac{3}{2}}dr = \frac{1}{5}\sqrt{\frac{2}{MJ^2}}r_0^{\frac{5}{2}}.
\end{align}
For radial timelike or null geodesics ($J = 0$) one similarly finds
\begin{align*}
    \frac{dr}{d\tau} \sim \sqrt{\frac{2M}{r}} \Rightarrow \tau &= \frac{1}{3}\sqrt{\frac{2}{M}}r_0^{\frac{3}{2}} ~~(\text{timelike})\\
    \frac{dr}{d\tau} = E \Rightarrow \tau &= \frac{r_0}{E}~~(\text{null}).
\end{align*}
For radial spacelike geodesics, we have
\begin{align*}
    \vast( \frac{dr}{d\tau} \vast) ^2 = E^2 + 1 - \frac{2M}{r} \geq 0 \Rightarrow r \geq r_{min} \equiv \frac{2M}{E^2 + 1} >0.
\end{align*}
Since $r_{min}$ is a positive value, the radial spacelike geodesics do not approach $r=0$. Hence every geodesic terminating at $r=0$ has bounded affine length, and thus, when $\mathcal{C}$ is the class of geodesics with affine parameter, the curvature singularities at $r=0$ are abstract boundary pure singularities by Proposition \ref{prop:finite}.
\\

{\it We next show that the maximally extended Schwarzschild spacetime does not have any intertwined geodesics limiting to $r=0$.}\\

Consider the embedding corresponding to the original Schwarzschild coordinates $(t,r,\theta,\phi)$. The singularities correspond to $(t,0)$ - $\mathbb{R}$. We want to show that there are no intertwined geodesics inside the horizon. All causal geodesics inside the horizon approach $r=0$ with finite affine parameter.
For $J \neq 0$, the geodesic equation near $r=0$ becomes
\begin{align}
     \vast(\frac{dr}{d\tau}\vast)^2 = E^2 + \vast(1-\frac{2M}{r}\vast)\vast(\kappa - \frac{J^2}{r^2}\vast) \sim \frac{2MJ^2}{r^3} ~~\text{as }r \rightarrow 0\label{drdtau1},
\end{align}
where $J^2 = L_x^2 + L_y^2 + L_z^2$ is the total angular momentum and a constant. Moreover, the angular motion follows
\begin{align*}
    \frac{d\Omega}{d\tau} = \frac{J}{r^2}.
\end{align*}

\begin{align}
    \frac{d\Omega}{dr} = \frac{d\Omega/d\tau}{dr/d\tau} \sim (2Mr)^{-\frac{1}{2}}\\
    \Omega(r) \sim \Omega_0 + \vast(\frac{2}{M}r\vast)^{\frac{1}{2}}\label{Angle},
\end{align}
where $d\Omega^2$ is the round metric on $S^2$. The above result implies that for all geodesics (including, obviously, for the radial geodesics), as they approach $r=0$, $\theta \rightarrow$ constant, and $\phi \rightarrow$ constant. That is, all geodesics approach $r=0$ asymptotic to a particular angle. For $J\neq 0$,
\begin{align}
    \frac{dt}{dr} = \frac{dt/d\tau}{dr/d\tau} &\sim  -\frac{rE}{ 2M}\sqrt{\frac{r^3}{2MJ^2}} = -\frac{E}{J}(2M)^{-\frac{3}{2}}r^{\frac{5}{2}},\\
    &t(r) \sim t(r_0) + \frac{2}{7}\frac{E}{J}(2M)^{-\frac{3}{2}}r^{\frac{7}{2}}\label{dt/dr}.
\end{align}
From the integration we see that $t$ limits to a finite value as $r \rightarrow 0$. This is also true for the causal radial geodesics which approach $r=0$. \\

Thus all geodesics inside the horizon which end at $r = 0$ approach some $(t,0)$ asymptotic to a specific angle. For two such geodesics $\gamma$ and $\gamma'$, if $\gamma \rightarrow (t_1,0)$, $\gamma' \rightarrow (t_2,0)$ and $t_1 \neq t_2$, then condition 1 of Definition \ref{defn:Sintertwined} is not satisfied. If $t_1 = t_2$, then condition 2 is not satisfied. This means that no pair of geodesics which approach $r = 0$ is intertwined.

\begin{figure}[htbp]
\centerline{\includegraphics[scale=0.6]{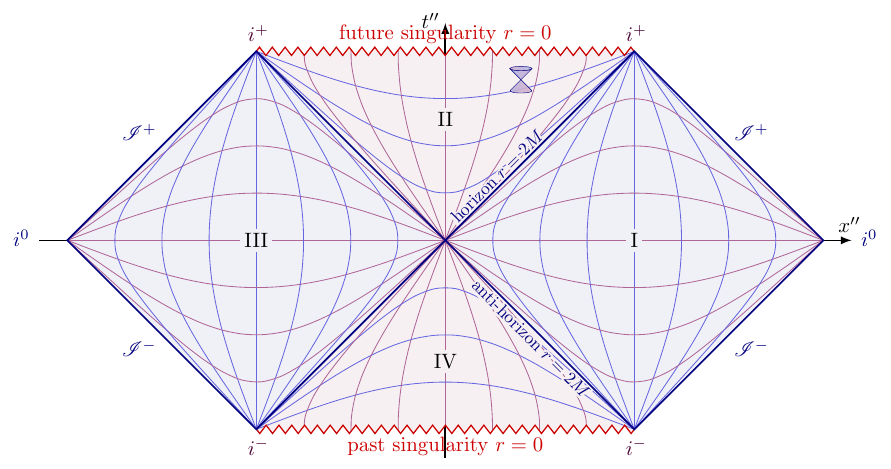}}
\caption{The Penrose diagram for the Kruskal-Szekeres maximal extension of the Schwarzschild spacetime. This diagram depicts the $t = constant$ and $r = constant$ surfaces, and was produced using reference \cite{diagram}.}
\label{fig2}
\end{figure}

The Penrose diagram of the Kruskal-Szekeres maximal extension of the Schwarzschild spacetime utilises compactified advanced and retarded null coordinates $(v'',w'')$ in the region $r \in (0,2M)$:
\begin{align*}
    v'' = \arctan{\Vast(\sqrt{1-\frac{r}{2M}}e^{\frac{r+t}{4M}}\Vast)} \in \vast(0,\frac{\pi}{2}\vast),\\
    w'' = \arctan{\Vast(\sqrt{1-\frac{r}{2M}}e^{\frac{r-t}{4M}}\Vast)} \in \vast(0,\frac{\pi}{2}\vast).
\end{align*}
The new rectangular coordinates $(t'',x'')$ are given by

\begin{align*}
    t'' = v'' + w'' = \arctan{\Vast(\sqrt{1-\frac{r}{2M}}e^{\frac{r+t}{4M}}\Vast)} + \arctan{\Vast(\sqrt{1-\frac{r}{2M}}e^{\frac{r-t}{4M}}\Vast)},\\
    x'' = v'' - w'' = \arctan{\Vast(\sqrt{1-\frac{r}{2M}}e^{\frac{r+t}{4M}}\Vast)} - \arctan{\Vast(\sqrt{1-\frac{r}{2M}}e^{\frac{r-t}{4M}}\Vast)}.
\end{align*}
All geodesics which approach $r=0$ approach a particular point $(\pi/2,x'')$ in the Penrose diagram, because $t'' \rightarrow \pi/2$ as $r \rightarrow 0$, and they do so at a specific angle $\theta$, $\psi$ constant and with finite affine parameter.\\

Every point $(\pi/2, x'')$ in Figure \ref{fig2} is a curvature singularity and also a pure singularity. For two geodesics $\gamma$ and $\gamma'$, if $\gamma \rightarrow (\pi/2,x_1)$, $\gamma' \rightarrow (\pi/2,x_2)$ and $x_1 \neq x_2$, then condition 1 of Definition \ref{defn:Sintertwined} is not satisfied. If $x_1 = x_2$, then condition 2 is not satisfied. This means that no pair of geodesics which approach $t'' = \pi/2$ is intertwined.\\

In conclusion, for the embedding corresponding to the Penrose diagram for the Kruskal-Szekeres maximal extension of the Schwarzschild spacetime, all boundary points in the interval $(\pi/2,x'')$ are curvature singularities and pure singularities, and no pair of geodesics which approach $t''=\pi/2$ is intertwined.\\

The null geodesic at $r=2M$ has an infinite affine parameter, so its endpoint at $(\pi/2,\pi/2)$ (the ``corner'' boundary point $i^{+}$) cannot be a pure singularity, which then clearly delineates the region of the pure singularities of the Schwarzschild spacetime. The following argument provides an explanation.\\

In the region with $r \in (0,2M)$, using the coordinates 
\begin{align*}
t' = \sqrt{2M-r}\exp{(r/4M)}\sinh{(t/4M)}, x' = \sqrt{2M-r}\exp{(r/4M)}\cosh{(t/4M)},
\end{align*}
    the metric takes the form
\begin{align}
    ds^2 = F^2(r)(-dt'^2 + dx'^2) + r^2(d\theta^2 + \sin^2\theta d\phi^2)\label{metric:t'x'}
\end{align}
where $F^2(r) = 16M^2/r\exp(-r/2M)$, which is a regular metric for all $r > 0$.\\

At a fixed angle in the above coordinates, the null condition gives:
    \begin{align}\label{null 2M}
        0 = F^2(r)(-\dot{t}'^2 + \dot{x}'^2) \Rightarrow \frac{dx'}{dt'} = \pm 1.
    \end{align}
This implies that the radial null geodesics lie at $\pm 45$° and, in particular, null geodesics lie along the event horizon at $r = 2M$ and proceed out to the corner point $(\pi/2,\pi/2)$.\\
    
    We now solve the geodesic equation for the metric \ref{metric:t'x'} along $t'=x'$ $(r = 2M)$. For the geodesic equation for $t'$, all associated partial derivatives of the metric vanish there, and so the geodesic equation reduces to
    \begin{align*}
        \frac{d^2t'}{d\tau^2} = 0,
    \end{align*}
    so that
    \begin{align*}
        t'(\tau) = \alpha \tau + \beta .
    \end{align*}
    Since $t'$ ranges over $(-\infty,\infty)$ at $r = 2M$ in the extension, the null geodesics are complete along $r = 2M$.\\
    

\begin{ex}(Spacelike geodesics in the Schwarzschild spacetime) Inside the horizon, we set $r^* = 2M -r$. We will investigate the spacelike geodesic behaviour as $r \rightarrow 2M$, $r^* \rightarrow 0$. For radial spacelike geodesics, we have
    \begin{align}
        1 = -\vast(1-\frac{2M}{r}\vast) \vast(\frac{dt}{d \tau}\vast)^2 + \frac{1}{(1-\frac{2M}{r})}\vast(\frac{dr}{d \tau}\vast)^2,\nonumber\\
        1 = \frac{r^*}{2M - r^*}\frac{E^2(2M - r^*)^2}{(r^*)^2} - \frac{2M - r^*}{r^*}\vast(\frac{dr^*}{d \tau}\vast)^2,\nonumber\\\
        \vast(\frac{dr^*}{d \tau}\vast)^2 = E^2 - \frac{r^*}{2M - r^*} \sim E^2~~\text{as}~r^* \rightarrow 0.
    \end{align}
    We assume $E>0$, then as $r^* \rightarrow 0$, we obtain
    \begin{align*}
        \tau \sim -\frac{1}{E}r^* + \alpha ~~\text{and}~~ \tau \rightarrow \alpha . 
    \end{align*}
    Thus the affine parameter will be finite for all radial spacelike geodesics which approach $r = 2M$ with $r < 2M$.\\
    
    \noindent Let $\tau^* = \tau -\alpha \rightarrow 0^-$ as $r^* \rightarrow 0$.
    \begin{align*}
        \frac{dt}{d \tau^*} \sim -\frac{2M + E\tau^*}{\tau^*}
    \end{align*}
    \begin{align}
        t \sim -2M\ln{(-\tau^*)} - E\tau^* + \beta
    \end{align}
    Thus we have $t \rightarrow \infty$ as $r^* \rightarrow 0$ for all radial spacelike geodesics. The above calculations provide the same results for the non-radial spacelike geodesics.\\

    For non-radial spacelike geodesics, we have
    \begin{align*}
        \vast(\frac{dr^*}{d \tau}\vast)^2 = E^2 - \frac{r^*}{2M - r^*}\vast(1 - \frac{J^2}{(2M - r^*)^2}\vast) \sim E^2~~\text{as}~r^* \rightarrow 0.
    \end{align*}
    For $E>0$, we solve
    \begin{align}\label{r expression}
        \tau \sim -\frac{1}{E}r^* + \alpha. 
    \end{align}
    \noindent Let $\tau^* = \tau -\alpha \rightarrow 0^-$ as $r^* \rightarrow 0$.
    \begin{align}
        \frac{dt}{d \tau^*} \sim -\frac{2M + E\tau^*}{\tau^*}
    \end{align}
    \begin{align}\label{t expression}
        t \sim -2M\ln{(-\tau^*)} - E\tau^* + \beta
    \end{align}
    Thus, we have that $t \rightarrow \infty$ and the affine parameter is finite as $r^* \rightarrow 0$ for all spacelike geodesics.\\

    By Eqn \ref{r expression} and Eqn \ref{t expression}, as $\tau^* \rightarrow 0$, we have
    \begin{align*}
        &r + t \sim 2M - 2M\ln{(-\tau^*)} + \beta,\\
        &r - t \sim 2M + 2M\ln{(-\tau^*)} + 2E\tau^* - \beta.
    \end{align*}
    \begin{align*}
        \sqrt{1 - \frac{r}{2M}}e^{\frac{r+t}{4M}} \sim \sqrt{\frac{E}{2M}}e^{\frac{1}{2}(1 + \frac{\beta}{2M})} \rightarrow C > 0~~\text{as}~\tau^* \rightarrow 0,\\
        \sqrt{1 - \frac{r}{2M}}e^{\frac{r-t}{4M}} \sim \sqrt{\frac{E}{2M}}e^{\frac{1}{2}(1 - \frac{\beta}{2M} + \frac{E}{M}\tau^*)}(-\tau^*) \rightarrow 0^+~~\text{as}~\tau^* \rightarrow 0.
    \end{align*}
    \begin{align}
        (t'',x'') \rightarrow (e,e)~~\text{as}~\tau^* \rightarrow 0.
    \end{align}
    Thus every spacelike geodesic crosses the horizon at $(t'',x'') = (e,e)$ where $ 0 < e < \pi/2$, which implies that $i^+$, $i^-$ are points at infinity in the Penrose maximal extension.
\end{ex}

As depicted in Figure \ref{fig2}, the above calculations show that the future and past singularities are pure singularities, and $\mathscr{I}^{+} \cup \mathscr{I}^{-} \cup \{i^0,i^{+},i^{-}\}$ are pure points at infinity for the class of all geodesics with affine parameter in the Penrose maximal extension of the Schwarzschild spacetime. It implies that the pure singularities are separated from the pure points at infinity, which is an essential element required to show that the maximally extended Schwarzschild spacetime is an optimal embedding.\\


    Consider the Penrose compactification of the Kruskal-Szekeres extension, i.e.\ an embedding $\psi:\mathcal{M} \rightarrow \mathcal{M}'$ whose boundary consists of the null infinities $\mathscr{I}^{\pm}$ together with the points $i^0$,$i^{\pm}$, and the curvature singularities.\\ 
    
    (1) Pure singularities at $r=0$. From the geodesic asymptotic analysis established earlier in this section, every geodesic approaching $r=0$ has a bounded affine parameter. Consequently, the boundary points at $r = 0$ are pure singularities.\\

    (2) Pure points at infinity. We have shown that radial null geodesics along $r=2M$ have an unbounded affine parameter. Since all geodesics which approach $i^{+}$ from $r > 2M$ also have unbounded affine parameter, the "corner" point $i^{+}$ at $(\pi/2,\pi/2)$ is a pure point at infinity. The same conclusion applies for all four “corners” of the Penrose diagram. Thus the subset $\mathscr{I}^{+} \cup \mathscr{I}^{-} \cup \{i^0,i^{+},i^{-}\}$ of $\partial\psi(\mathcal{M})$ is comprised of pure points at infinity.\\


    

We now consider whether the embedding $\psi:\mathcal{M} \rightarrow \mathcal{M}'$ is an {\it optimal embedding} of the Schwarzschild spacetime. We will show that the Penrose compactification of the maximal extension of the Schwarzschild spacetime is an optimal embedding following the definition provided in \cite{Barrythesis_2014}.\\

(1) Partial cross section: since $\psi$ is an envelopment, $\sigma_{\psi} = \{[p]: p \in \partial\psi(\mathcal{M})\}$ is a partial cross section (Defn.\ 7.2.6 of \cite{Barrythesis_2014}). \\

(2) Ideal admissible partial cross section: the Kruskal-Szekeres maximal extension of the Schwarzschild spacetime is at least $C^2$ inextendible. Therefore the set $\mathcal{B}_r(\mathcal{M})$ of all ideal regular abstract boundary points of $(\mathcal{M},g)$ is the empty set. It is evident that for each $[p] \in \sigma_{\psi}$, where $p \in \partial\psi(\mathcal{M})$, there exists an open neighbourhood $\mathcal{N}$ of $p$ such that $\{[s]:s \in \mathcal{N} \cap \partial\psi(\mathcal{M})\} \subset \sigma_{\psi}$. From Defn.\ 7.2.30 of \cite{Barrythesis_2014}, this implies that $\sigma_{\psi}$ is an ideal admissible partial cross section.\\

(3) Maximal ideal admissible partial cross section: suppose that $\sigma_{\psi}$ is not a maximal ideal admissible partial cross section (see Defn.\ 7.2.31 of \cite{Barrythesis_2014}). Then there exists an ideal admissible partial cross section $\sigma$ which contains $\sigma_{\psi}$ as a proper subset, which means that there exists an abstract boundary point $[p]$ which is an element of $\sigma$ but which is not contained in $\sigma_{\psi}$. There exists an envelopment $\phi:\mathcal{M} \rightarrow \mathcal{\hat{M}}$ where $p \in \partial\phi(\mathcal{M})$. Consider a sequence $\{x_i\} \subset \mathcal{M}$ without limit points in $\mathcal{M}$ such that $\phi(x_i) \rightarrow p \in \partial\phi(\mathcal{M})$. Because the Penrose maximal extension of the Schwarzschild spacetime is compact, every infinite sequence in $\psi(\mathcal{M})$ has a limit point in $\overline{\psi(\mathcal{M})}$. Since the sequence $\{x_i\} \subset \mathcal{M}$ has no limit points in $\mathcal{M}$, it follows that there exists an infinite subsequence $\{{x_i}_{j}\} \subset \{x_i\}$ such that $\psi({x_i}_{j}) \rightarrow q \in \partial\psi(\mathcal{M})$. Thus $\phi({x_i}_{j}) \rightarrow p \in \partial\phi(\mathcal{M})$ and $\psi({x_i}_{j}) \rightarrow q \in \partial\psi(\mathcal{M})$ implying that the abstract boundary points $[p]$ and $[q]$ are not separate (Defn. 7.2.3 of \cite{Barrythesis_2014}). Since $\sigma$ is a partial cross section, it must be the case then that $[p] = [q]$ which contradicts that $[p]$ is not contained in $\sigma_{\psi}$. Thus $\sigma_{\psi}$ is a maximal ideal admissible partial cross section.\\

From Defn.\ 7.3.1 of \cite{Barrythesis_2014}, since $\sigma_{\psi}$ is a maximal ideal admissible partial cross section, the envelopment $\psi$ for the Penrose compactification of the maximal extension of the Schwarzschild spacetime is indeed an {\it optimal embedding}.\\

It remains to consider how Wheeler's Definitions \ref{defn:singular neighourhood}, \ref{defn:singularly compact} and \ref{blackhole from singularity} apply to the envelopment $\psi$. Due to the possible presence of directional singularities occurring in other envelopments of the maximally extended Schwarzschild spacetime, we will only consider these definitions with respect to pure singularity abstract boundary points arising from the envelopment $\psi$. Furthermore, we need only consider the pure singularities of this envelopment which lie to the future, i.e.\ with $t''=\pi/2$.\\

In Figure \ref{fig:optimal 1U}, $\psi(U)$ is depicted where $U$ is an open set in $\mathcal{M}$ such that every pure singularity abstract boundary point arising from the envelopment $\psi$ (with $t''=\pi/2$) is strongly attached to $U$. Consider a point $p \in A$ where $A$ is a closed set in $\mathcal{M}$. If $J^+(A) = A$, then $J^+(p) \subset A$. For the envelopment $\psi$, consider the cases when $\psi(p) \in$ region I, III or IV (in Figure \ref{fig:optimal 1U}, $\psi(p) \in$ region IV). In all three cases, $\psi(J^+(p))$ is a closed future null cone based at $\psi(p)$, for which a segment of future null infinity $\mathscr{I}^+$ lies on its future boundary. This means that a segment of future null infinity $\mathscr{I}^+$ lies on the future boundary of $\psi(A)$, and although $A \backslash U$ is closed, it is certainly not bounded in $\mathcal{M}$ and is therefore not compact. So a singulary compact future set $A$ cannot contain points $p \in \mathcal{M}$ where $\psi(p) \in$ regions I, III or IV.

\begin{figure}[H]
    \centering
    \includegraphics[width=0.5\linewidth]{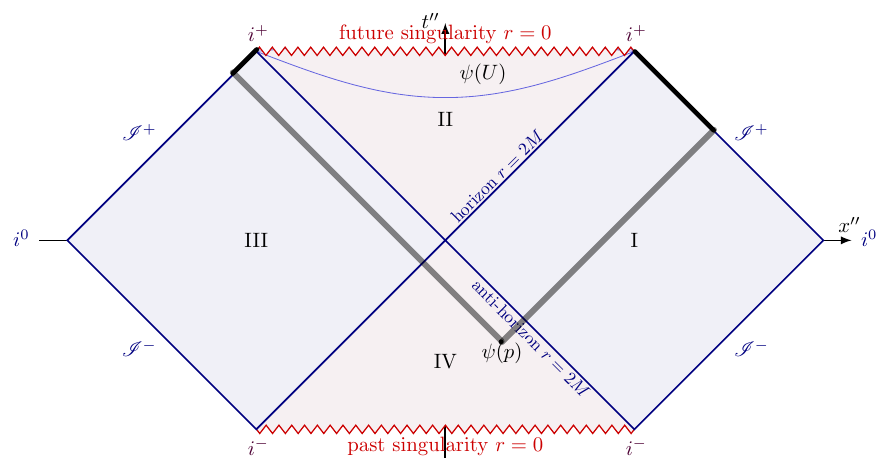}
    \caption{In the Penrose diagram for the Schwarzschild spacetime, for a point $p \in \mathcal{M}$ such that $\psi(p) \in$ region IV, the set $\psi(J^+(p))$ is depicted. It can be seen that two segments of future null infinity $\mathscr{I}^+$ lie on its future boundary. For an open set $U$ in $\mathcal{M}$, the open set $\psi(U)$ is shown, such that every pure singularity abstract boundary point arising from the envelopment $\psi$ (with $t''=\pi/2$) is strongly attached to $U$. Although $J^+(p) \backslash U$ is closed, it is not bounded in $\mathcal{M}$. So $J^+(p)$ is not a singularly compact future set.}
    \label{fig:optimal 1U}
\end{figure}

Now we consider points $p \in \mathcal{M}$ for which $\psi(p) \in$ region II, and closed sets $A=J^+(p)$. For some open sets $U$ which have every pure singularity at $t''=\pi/2$ strongly attached to them, $\psi(U)$ will contain $\psi(A)$, and the others will cut across $\psi(A)$, so that $A \backslash U$ is closed and bounded in $\mathcal{M}$, and therefore compact (see Figure \ref{fig:optimal 2U}). The collection $\mathscr{F}$ of sets $A$ is then a family of singularly compact future sets. Thus the black region $\mathscr{B} \subset \mathcal{M}$ given by
    \begin{align}
        \mathscr{B} \coloneq \bigcup_{A \in \mathscr{F}}A
    \end{align}
is region II which is a connected component $\mathcal{B}$ of $\mathscr{B}$ and is therefore a \textit{black hole} (Definition \ref{blackhole from singularity}). Its boundary $H  \coloneq \partial \mathcal{B}$ is indeed the \textit{event horizon} at $r=2M$.

\begin{figure}[H]
    \centering
    \includegraphics[width=0.5\linewidth]{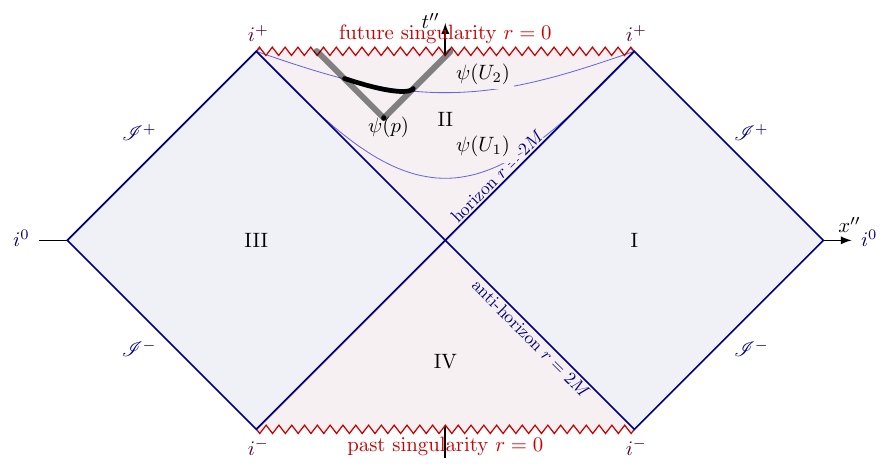}
    \caption{In the Penrose diagram for the Schwarzschild spacetime, for a point $p \in \mathcal{M}$ such that $\psi(p) \in$ region II, the set $\psi(J^+(p))$ is depicted. For open sets $U_1$ and $U_2$ in $\mathcal{M}$, the open sets $\psi(U_1)$ and $\psi(U_2)$ are shown, such that every pure singularity abstract boundary point arising from the envelopment $\psi$ (with $t''=\pi/2$) is strongly attached to $U_1$ and $U_2$. It can be seen that $\psi(J^+(p)) \subset \psi(U_1)$ and so $J^+(p) \backslash U_1 = \emptyset$. The open set $\psi(U_2)$ cuts across $\psi(J^+(p))$ so that $\psi(J^+(p)) \backslash \psi(U_2)$ is the ``triangular'' region depicted, and $J^+(p) \backslash U_2$ is a compact set. The set $J^+(p)$ is a singularly compact future set.}
    \label{fig:optimal 2U}
\end{figure}

\section{Discussion}

The title of this paper ``The relationship between spacetime singularities and regions at infinity" is intimately connected with the following conjecture proposed by Wheeler in \cite{Wheeler_2023}.

\begin{conj}\label{conj:Wheeler}
    Under physically relevant choices of curve families $\mathcal{C}$ on a smooth Lorentzian manifold $(\mathcal{M},g)$, (e.g. geodesics with affine parameter, $C^1$ curves with generalised affine parameter, the causal subfamilies of either of these, etc.), points in $\mathcal{I}(\mathcal{M})$ and $\mathcal{S}_p(\mathcal{M})$ are separated from each other.
\end{conj}
\noindent In this conjecture $\mathcal{I}(\mathcal{M})$ is the collection of abstract boundary pure points at infinity and $\mathcal{S}_p(\mathcal{M})$ is the collection of pure singularity abstract boundary points.\\

For the class $\mathcal{C}$ of geodesics with affine parameter, for example, this conjecture raises the immediate question of what sort of geodesics can approach a pure singularity boundary point. From the definition, a pure singularity must be approached by a geodesic with bounded parameter, but can it also be approached by a geodesic with unbounded affine parameter? If this is the case, the pure singularity boundary point could be in contact with a pure point at infinity of another envelopment, whilst not covering it (which would make it a directional singularity), and so the two associated abstract boundary points would not be separate. The analogous question for pure points at infinity has a straightforward answer; pure points at infinity are only approached by geodesics with unbounded affine parameter.\\

In this paper we have progressed the understanding of these questions by focussing on the class of affinely parametrised geodesics in maximally extended pseudo-Riemannian manifolds. From Theorem \ref{theorem:1} we know that these manifolds may have an associated abstract boundary point which is a directional singularity and, if this is the case, it must cover an abstract boundary point which is a pure point infinity. By definition, a directional singularity must also be approached by a geodesic with bounded parameter, and it remains an open question as to whether, in general, the directional singularity must always cover a pure singularity of another envelopment. If it does cover a pure singularity, are the pure point at infinity and the pure singularity covered by the directional singularity necessarily separate? \\

In Section \ref{section 3} Proposition \ref{prop:3}, which uses the Endpoint Theorem, establishes a very general result for an $n$-dimensional, pseudo-Riemannian manifold $(\mathcal{M},g)$. If there exists a geodesic $\gamma$ which, under an envelopment of $(\mathcal{M},g)$, ends at a boundary point $p$, then there always exists another envelopment in which the geodesic ends at the boundary point $q$, no other geodesic ends at $q$, and the boundary point $p$ covers the boundary point $q$. Obviously this result has great utility in separating out different components of boundary points, for example, when a geodesic with bounded parameter ends at a directional singularity.\\

The concept of a pair of intertwined geodesics is considered in Section \ref{section 4}. First introduced by Chru\'{s}ciel, the definition proposed by Graf and Beld-Serrano is given, which is envelopment independent, followed by a new definition that we provide based on the abstract boundary framework, which relates to a particular envelopment, and is therefore envelopment dependent. It transpires that the non-existence of intertwined geodesics is critical to our ability to answer the questions posed in this discussion.\\

Using the new definition, for maximally extended pseudo-Riemannian manifolds, when the situation detailed two paragraphs above is considered, for the case when $p$ is a pure singularity, Corollary \ref{cor:2} shows that if no pair of geodesics which approaches $q$ is intertwined, then the geodesic $\gamma$ has bounded parameter, and no other geodesic approaches $q$. This means that, from the pure singularity $p$, we have separated out the pure singularity $q$ which is only approached by geodesics with bounded parameter (indeed it is only approached by $\gamma$ which has the endpoint $q$). Furthermore, in Corollary \ref{cor:directional} we establish that if $p$ is a directional singularity and the geodesic $\gamma$ which ends at $p$ has bounded parameter, then $q$ is a pure singularity and $\gamma$ is the only geodesic which approaches $q$. So in this case, from a directional singularity, we have produced a pure singularity $q$ which is only approached by one geodesic with bounded parameter.\\

In Section \ref{Section Ordering}, when we considered a chain of abstract boundary points ordered by the covering relation, we examined the case where one of the elements of the chain is a directional singularity abstract boundary point $[p]$, and $p$ is the endpoint of a geodesic $\gamma$ with bounded parameter. Then if Proposition \ref{prop:3} can be employed to form another envelopment, where $\gamma$ ends at the boundary point $q$, which forms the next element [q] of the chain, and no pair of intertwined geodesics approaches $q$, then $[q]$ is a pure singularity abstract boundary point which is only approached by the geodesic $\gamma$. Also, if the geodesic $\gamma$ has unbounded parameter, then $[q]$ is a pure point at infinity abstract boundary point. In both cases, it is the production of an envelopment without intertwined geodesics which enables us to truncate the covering chain at either a pure singularity or pure point at infinity abstract boundary point. In Section \ref{section 5} we present a definition for the maximal g-boundary in the abstract boundary context, and show that for a maximally extended spacetime $(\mathcal{M},g)$, the boundary of the future g-boundary extension is a pure singularity set.\\

In Section \ref{section 6} we analyse the definitions and results of the preceding sections with respect to the Schwarzschild solution. In particular, we consider the envelopment corresponding to the Penrose diagram for the maximally extended Schwarzschild spacetime. We establish that there are no intertwined geodesic pairs inside the event horizon, and thereby determine for the Penrose diagram, that all boundary points $(\pi/2,x'')$ and $(-\pi/2,x'')$, where $-\pi/2<x''<\pi/2$, are curvature singularities and abstract boundary pure singularities which are only approached by geodesics with bounded parameter. The remaining boundary points are all pure points at infinity. This enables us to show that the Penrose diagram is an optimal embedding in the abstract boundary context.\\

Finally, we test Wheeler's definitions of a black hole and an event horizon for the Penrose diagram optimal embedding and find that they work well and produce the expected result. We did, however, use a slightly modified definition of a singular neighbourhood $U$, requiring only that every pure singularity abstract boundary point arising from the Penrose diagram is strongly attached to $U$. It is not possible to have complete knowledge of all envelopments of a spacetime, including if a directional singularity exists in one of them, and whether it covers a pure singularity in another envelopment.\\

We believe that Wheeler's definitions will be amenable to application and work well when they are restricted to an optimal embedding for a solution. This brings us back to Conjecture \ref{conj:Wheeler} which we don't think will be true in general, as there may exist envelopments with a boundary point which is a pure singularity approached by a geodesic with unbounded affine parameter, raising the possibility that it is in contact with a pure point at infinity of another envelopment (and so not separate). In a future paper we will explore this possibility further, and will look for conditions other than the non-existence of pairs of intertwined geodesics to separate out a pure singularity from a directional singularity, and to obtain pure singularities which are only approached by geodesics with bounded parameter.\\


\bmhead{Acknowledgements}
Junbang Liu acknowledges the support of the ANU HDR Fee Merit Scholarship and the University Research Scholarship. The research of Susan M.\ Scott is supported by the Australian Research Council Centre of Excellence for Gravitational Wave Discovery (OzGrav), project number CE230100016. We also thank Joan Licata and Xintao Luo for fruitful discussions.

\section*{Declarations} {\bf Statements and Declarations: Competing Interests and Data Availability} On behalf of all authors, the corresponding author states that there is no conflict of interest and data sharing is not applicable to this article as no datasets were generated or analysed during the current study.


\newpage
\bibliography{sn-bibliography.bib}

\end{document}